\def\lncs{0}
  \newtheorem{theorem}{Theorem}
  \newtheorem{proposition}{Proposition}
  \newtheorem{definition}{Definition}
  \newtheorem{claim}{Claim}
  \newtheorem{lemma}{Lemma}
  \newtheorem{corollary}{Corollary}
  \newtheorem{observation}{Observation}
  \newtheorem{bound}{Bound}
  \newtheorem{fact}{Fact}
  \theoremstyle{definition}
  \newtheorem{remark*}{Remark}
\long\def\red#1\par{\par\bigskip{\color{red}#1}\par}
\def\showauthnotes{0}
\newcommand{\authnote}[2]{{{[#1: #2]~}}}
\newcommand{\authnote}[2]{} }
\newcommand{\ignore}[1]{}
\DeclareMathOperator{\Exp}{\mathbb{E}}
\newcommand{\gf}[1]{\mathsf{#1}}
\newcommand{\Z}{\mathbb{Z}}
\newcommand{\R}{\mathbb{R}}
\newcommand{\NN}{\mathbb{N}}
\newcommand{\hdepth}{\mathbf{d}}
\DeclareMathOperator{\length}{length}
\DeclareMathOperator{\depth}{depth}
\DeclareMathOperator{\height}{height}
\DeclareMathOperator{\gap}{gap}
\DeclareMathOperator{\reach}{reach}
\DeclareMathOperator{\reserve}{reserve}
\newcommand{\slot}{\textsl{sl}}
\newcommand{\fprefix}{\sqsubseteq}
\newcommand{\Adversary}{\mathcal{A}}
\newcommand{\Challenger}{\mathcal{C}}
\newcommand{\Distribution}{\mathcal{D}}
\newcommand{\dominatedby}{\preceq}
\newcommand{\StationaryRho}{\mathcal{R}_\infty}
\newcommand{\DistRho}{\mathcal{R}}
\newcommand{\Poly}{\mathrm{poly}}
\newcommand{\SuchThat}{\, : \,}
\newcommand{\PrefixEq}{\preceq}
\newcommand{\Prefix}{\prec}
\newcommand{\ForkPrefix}{\fprefix}
\newcommand{\TrimSlot}[1]{^{\lceil {#1}}}
\newcommand{\Fork}{\vdash}
\newcommand{\pinch}[2]{{#2}^{\vartriangleright {#1} \vartriangleleft} }
\newcommand{\cut}[2]{{#2}^{{#1} \vartriangleleft} }
\newcommand{\Chain}{\mathcal{C}}
\newcommand{\Intersect}{\cap}
\newcommand{\SlotCP}{\mathrm{CP}^{\mathsf{slot}}}
\newcommand{\kSlotCP}[1][k]{{#1}\text{-}\SlotCP}
\newcommand{\CP}{\mathrm{CP}}
\newcommand{\kCP}[1][k]{{#1}\text{-}\CP}
\newcommand{\defeq}{\triangleq}
\newcommand{\SlotDivergence}{\mathrm{div}_{\mathsf{slot}}}
\title{%Rigorous
  The combinatorics of the longest-chain rule:\\
  Linear consistency for proof-of-stake blockchains\thanks{
    Erica Blum's work was partly supported by financial assistance award 70NANB19H126 
    from U.S. Department of Commerce, National Institute of Standards and Technology. 
    Aggelos Kiayias' research was partly supported by H2020 Grant \#780477, PRIViLEDGE. 
    Cristopher Moore's research was partly supported by NSF grant BIGDATA-1838251. 
    Alexander Russell's work was partly supported by NSF Grant \#1717432.
  }
  }
  \author[1]{Erica Blum}
  \author[2,5]{Aggelos Kiayias}
  \author[3]{Cristopher Moore}
  \author[4]{Saad Quader}
  \author[4,5]{Alexander Russell}
  \affil[1]{University of Maryland, College Park}
  \affil[2]{University of Edinburgh}
  \affil[3]{Santa Fe Institute}
  \affil[4]{University of Connecticut}
  \affil[5]{IOHK}
\begin{document}
\maketitle

\begin{abstract}
  Blockchain data structures maintained via the longest-chain rule
  have emerged as a powerful algorithmic tool for consensus
  algorithms. The technique---popularized by the Bitcoin
  protocol---has proven to be remarkably flexible and now supports
  consensus algorithms in a wide variety of settings. Despite such
  broad applicability and adoption, current analytic understanding of
  the technique is highly dependent on details of the protocol's
  leader election scheme. A particular challenge appears in the
  proof-of-stake setting, where existing analyses suffer from
  quadratic dependence on suffix length.

  We describe an axiomatic theory of blockchain dynamics that permits
  rigorous reasoning about the longest-chain rule in quite general
  circumstances and establish bounds---optimal to within a
  constant---on the probability of a consistency violation. This settles
  a critical open question in the proof-of-stake setting where we
  achieve linear consistency for the first time.

  Operationally, blockchain consensus protocols achieve consistency by
  instructing parties to remove a suffix of a certain length from
  their local blockchain. While the analysis of Bitcoin guarantees
  consistency with error $2^{-k}$ by removing $O(k)$ blocks, recent
  work on proof-of-stake (PoS) blockchains has suffered from quadratic
  dependence: (PoS) blockchain protocols, exemplified by Ouroboros
  (Crypto 2017), Ouroboros Praos (Eurocrypt 2018) and Sleepy Consensus
  (Asiacrypt 2017), can only establish that the length of this suffix
  should be $\Theta(k^2)$.  This consistency guarantee is a
  fundamental design parameter for these systems, as the length of the
  suffix is a lower bound for the time required to wait for
  transactions to settle.  Whether this gap is an intrinsic limitation
  of PoS---due to issues such as the ``nothing-at-stake''
  problem---has been an urgent open question, as deployed PoS
  blockchains further rely on consistency for protocol correctness: in
  particular, security of the protocol itself relies on this
  parameter. Our general theory directly improves the required suffix
  length from $\Theta(k^2)$ to $\Theta(k)$. Thus we show, for the
  first time, how PoS protocols can match proof-of-work blockchain
  protocols for exponentially decreasing consistency error.

  Our analysis focuses on the articulation of a two-dimensional stochastic
  process that captures the features of interest, an exact recursive
  closed form for the critical functional of the process, and tail
  bounds established for associated generating functions that dominate
  the failure events. Finally, the analysis provides an explicit
  polynomial-time algorithm for exactly computing the
  exponentially-decaying error function which can directly inform
  practice.
\end{abstract}

\section{Introduction }

A blockchain is a data structure consisting of a collection of data
blocks placed in linear order. It further requires that
% It has the characteristic that
each block contains a collision-free hash of the previous block: thus
blocks implicitly commit to the entire prefix of the blockchain
preceding them. This elementary data structure has remarkable
applications in distributed computing, and now appears as an essential
component of consensus protocols in a wide variety of models and
settings; this notably includes both the ``permissionless'' setting
popularized by Bitcoin and the classic ``permissioned'' model.

Such consensus protocols call for players to collaboratively assemble
a blockchain by repeatedly selecting players to add blocks.
%A blockchain is a hash-chain that is collaboratively constructed by a
%set of players.
%
%New blocks are added to the chain by players chosen
Specifically, the protocol determines a stochastic process resembling
a lottery: each ``leader'' selected by the lottery is then responsible
for broadcasting a new block. While the algorithmic details of this
lottery depend heavily on the protocol, the outcome can be privately
determined and provides the winning player a proof of leadership that
can be publicly demonstrated. Assuming that the expected wait time for
some player to win the lottery is constant, the blockchain experiences
steady growth when players follow the protocol.

Network infelicities, adversarial behavior, or the possibility that
two players simultaneously win the lottery can lead to disagreements
among the players about the current blockchain. Thus protocols adopt a
``chain selection rule'' that determines how players should break ties
among the various chains they observe on the network; ideally, the
combination of the chain selection rule and the lottery should
guarantee that the players' blockchains agree, perhaps with the
exception of a short suffix. The emblematic chain selection strategy
among such systems is the \emph{longest-chain rule}, which calls for
players to adopt the longest chain among various contenders.

%Note that each player runs its own
%lottery attempting to extend a certain hash-chain it is aware of.
%Players choose which hash-chain to extend via some greedy criterion
%such as picking the longest hash-chain they have seen thus far.
%
%Implementing the lottery mechanism and the criterion for
%chain selection yields a particular blockchain protocol instantiation. 

The first blockchain protocol was the core of the sensational Bitcoin
system~\cite{Nakamoto2008}; it adopted a lottery mechanism based on a
cryptographic puzzle~\cite{C:DwoNao92,hashcash}---also
known as \emph{proof-of-work or PoW, for short}---and a chain
selection rule favoring chains that represent more work. The system is
particularly notable for its ability to survive in a
permissionless setting---where players may freely join and
depart---even when a portion of the players are actively attacking the
protocol. Unfortunately, the proof-of-work mechanism makes quite
striking energy demands: the system currently consumes as much
electricity as a small country.\footnote{See e.g., 
\url{https://digiconomist.net/bitcoin-energy-consumption} where it is reported
that Bitcoin 
annual energy consumption is on the order of at least 50 Twhr at the time of writing. } This motivated the blockchain
community to exploring alternative lottery mechanisms, e.g.,
proof-of-stake
(PoS)~\cite{DBLP:journals/corr/BentovGM14,DBLP:conf/asiacrypt/PassS17,KRDO17},
proof of
space~\cite{C:DFKP15,DBLP:journals/iacr/ParkPAFG15}
and others~\cite{cryptoeprint:2016:035}. The proof-of-stake mechanism
is particularly attractive from the perspective of efficiency, as it
makes no assumption of external computational resources.

The fundamental consistency property---critical in all these
blockchain systems---is \emph{common-prefix}
(cf.~\cite{DBLP:conf/eurocrypt/GarayKL15}). It precisely captures the
intuition described above: by trimming a $k$-block suffix from the
chain held by any honest player the resulting blockchain is a prefix
of the blockchain possessed by any honest party at any future point of
the execution. A principal goal in the analysis of these systems is a
to guarantee common prefix, for an appropriate value of $k$, even if
some of the players collude to disrupt the protocol. Common prefix is
typically only shown to hold with high probability $1-\varepsilon$,
where $\varepsilon$ is an error term that is a function of
$k$. The exact dependency of $\varepsilon$ on $k$ is critically
important: it determines the length of the suffix that is to be
removed from a blockchain in order to ensure that the remaining prefix
will be retained at any future point of the execution. This directly
imposes a lower bound on how long one has to wait for information in
the blockchain (such as a payment transaction) to ``settle.''
Additionally, many blockchain protocols internally rely on common
prefix for correctness; thus the relationship between $\varepsilon$ and
$k$ is critical to establishing the regime of correctness of the
entire protocol.

%Under mild assumptions one must have
A relatively straightforward lower bound for $\varepsilon$ is 
$\varepsilon \geq \exp(-\alpha k )$ for some $\alpha>0$.  This lower
bound applies when there is a coalition of adversarial players of
constant fraction,
% that may be adversarial which is
the case of primary interest in practice. The result is easy to infer
from the analysis of~\cite{Nakamoto2008}, where a strategy is
demonstrated that violates common prefix with such probability (this
is referred to as a ``double-spending'' attack in that paper). The
tightness of this bound is an important open problem. For the special
case of proof-of-work an upper bound of $\exp(-\Omega(k))$ was shown
first in \cite{DBLP:conf/eurocrypt/GarayKL15} and further verified in
extended security models
by~\cite{DBLP:conf/crypto/GarayKL17,DBLP:conf/eurocrypt/PassSS17}.
%(here the parameter $T$ corresponds to the total run-time of the
%blockchain).
In the proof-of-stake setting, on the other hand, the tightness of the
bound remains open.  While recent proof-of-stake algorithms have been
presented with rigorous analyses that rival proof-of-work in many
regards, they suffer from a quadratic relationship between $k$ and
$\log(\varepsilon)$. For example, the Ouroboros
protocols~\cite{KRDO17,DBLP:conf/eurocrypt/DavidGKR18,DBLP:journals/iacr/BadertscherGKRZ18},
as well as Snow White~\cite{DBLP:journals/iacr/BentovPS16a}, provide
an upper bound on $\varepsilon$ of $\exp(-\Omega(\sqrt{k}))$; this
should be compared with $\varepsilon = \exp(-\Theta(k))$ for
proof-of-work. The significant gap from the known lower bound was
attributed to a notable, general attack that distinguished PoS from
PoW: Known as the \emph{nothing-at-stake} problem, this refers to the
ability of an adversarial coalition of players to strategically reuse
a winning PoS lottery to extend multiple blockchains.

\ignore{%%%%%%  
The success of Bitcoin and, in general, usage of blockchains for
supporting and archiving the results of consensus has led to
a concerted effort to develop rigorous formal tools for reasoning
about blockchain dynamics. These efforts were motivated both by the
Bitcoin proof-of-work blockchain itself and the desire for alternative
blockchain protocols that are organized around other resources (e.g.,
proof-of-space, proof-of-stake, etc.).
% In particular,
% Bitcoin adopts a ``proof-of-work'' convention which demands that
% participants compete in a continuing race to solve computational
% problems. This mechanism is remarkably successful, trusted, and
% well-understood; on the other hand, it demands unsatisfactory resource
% commitment that grows with the population and commitment of its
% participants. One influential counter-proposal calls for participants
% to be weighted by their stake; a quantity determined by the blockchain
% itself.
In this article, we establish rigorous, quantitative bounds on the
time necessary for transactions to settle for a broad family of
blockchain protocols adopting the \emph{longest-chain rule}, notably
including proof-of-stake blockchains such as 
Snow White~\cite{DBLP:journals/iacr/BentovPS16a} and 
the Ouroboros family~\cite{KRDO17,DBLP:conf/eurocrypt/DavidGKR18,DBLP:journals/iacr/BadertscherGKRZ18}. 
The principal feature of our new analysis is that it applies to
\emph{proof-of-stake} based blockchain systems,
which must contend
with challenging adversarial behavior that does not exist in
proof-of-work systems:
\begin{itemize}
\item \emph{Nothing at stake attacks.} When an adversary has the right
  to extend a proof-of-stake blockchain, he may produce many different
  blocks that extend different chains of the system or, similarly,
  yield many different extensions of a particular longest chain---this
  corresponds to ``nothing at stake'' attacks that can permit an
  adversary to construct, on-the-fly, competing blockchains at no
  cost. In contrast, the \emph{total} number of blocks produced by a
  minority adversary in a proof-of-work based system is dominated by
  the number of blocks that are honestly produced, and constructing additional blocks
  requires additional work.
\item \emph{Known leader schedules.} In some proof-of-stake based
  blockchains, (part or all of) the future schedule of participants
  permitted to add to the chain is public. In contrast, the right to
  add a block in proof-of-work settings is determined in a stochastic
  online fashion that does not permit the adversary significant
  ``look-ahead''.
\end{itemize}
}%%%%%%%%%%%%%%%%

\paragraph{Our results.}
Our objective is to control the common-prefix error
$\varepsilon$ as tightly as possible while
making minimal assumptions on the underlying blockchain
protocol.  We work in a general model formulated by a simple family of
\emph{blockchain axioms}. The axioms themselves are easy to interpret
and few in number. This permits us to abstract many features of the
underlying blockchain protocol (e.g., the details of the
leader-election process, the cryptographic security of the relevant
signature schemes and hash functions, and randomness generation),
while still establishing results that are strong enough to directly
incorporate into existing specific analyses.
%%(For example, our techniques can directly improve
%the analysis of Snow White, Ouroboros, Ouroboros Praos, and Ouroboros
%Genesis.)

Our most interesting finding is a quite tight theory of common prefix
that depends \emph{only on the schedule of participants certified to
  add a block}. Under common assumptions about this schedule, we
achieve the optimal relationship $\varepsilon =
\exp(-\Theta(k))$. This directly improves the common prefix guarantees
(and settlement times) of existing proof-of-stake blockchains such as
Snow White~\cite{DBLP:journals/iacr/BentovPS16a},
Ouroboros~\cite{KRDO17}, Ouroboros
Praos~\cite{DBLP:conf/eurocrypt/DavidGKR18}, and Ouroboros
Genesis~\cite{DBLP:journals/iacr/BadertscherGKRZ18}.
%
%analysis of~\cite{KRDO17}, which established that the probability of
%a ``depth-$k$'' settlement failure at time $T$ was no more than
%$T \exp(-\Omega(\sqrt{k}))$. Our new techniques establish
%that the probability of a settlement failure at time $T$ is no more
%than $\exp(-\Omega(k))$. 
%%Note that this is independent of the
%%the time elapsed $T$, 
Specifically, this improves the scaling in the exponent from
$\sqrt{k}$ to $k$ and establishes a tight characterization for
$\varepsilon = \exp(-\Theta(k))$. (In fact, we even obtain reasonable
control of the constants.)  We remark that our assumptions about the
schedule distribution can be weakened---without any effect on the
final bounds---to apply to martingale-style distributions such
as those that arise in the analysis of adaptive
adversaries~\cite{DBLP:conf/eurocrypt/DavidGKR18,DBLP:journals/iacr/BadertscherGKRZ18}.
%While we discuss this in detail later, we remark that
%this is important for applying our techniques to security proofs
%involving adaptive adversaries.% and, in particular, to the
%analysis of the Ourorboros Genesis protocol.

Our new analysis offers an additional, but lower order, improvement for
several of these blockchains. The existing analysis of, e.g.,
Ouroboros Praos~\cite{DBLP:conf/eurocrypt/DavidGKR18}, required a
union bound to be taken over the entire lifetime of the protocol in
order to rule out a common prefix violation at a particular point of
time; thus such events were actually bounded above by a function of
the form $T \exp(-\Omega(\sqrt{k}))$, where $T$ is the lifetime of the
protocol. While this event \emph{does} depend on the entire dynamics
of the protocol, we show how to avoid this pessimistic tail bound to
achieve a ``single shot'' common prefix violation---at a particular
time of interest---of form $\exp(-\Theta(k))$; this removes the
dependence on $T$.

From a technical perspective, we contrast the structure of our proofs
with existing techniques for the PoW case. The PoW results find a
direct connection between common-prefix and the behavior of a biased,
one-dimensional random walk. Interestingly, our results give a tight
relationship between the general (e.g., PoS) case and a pair of
\emph{coupled} biased random walks. A major challenge in the analysis
is to bound the behavior of this richer stochastic process.  Our tools
yield precise, explicit upper bounds on the probability of persistence
violations that can be directly applied to tune the parameters of
deployed PoS systems. See Appendix~\ref{sec:exact-prob} where we
record some concrete results of the general theory. The importance of
these results in the practice of PoS blockchain systems cannot be
overstated: they provide, for the first time, concrete error bounds
for settlement times for PoS blockchains that follow the longest chain
rule.

\paragraph{Further analytic details.} Our approach begins with the
graph-theoretic framework of \emph{forks} and \emph{margin} developed
for the analysis of the Ouroboros~\cite{KRDO17} protocol.  (A fork is
an abstraction of the protocol execution given the outcomes of the
leader-election process.)  We begin by generalizing the notion of
margin to account for local, rather than global, features of a leader
schedule, and provide an exact, recursive closed form for this new
quantity (see Section~\ref{sec:recursion}). This proof identifies an
optimal online adversary (i.e., a fork-building strategy whose current
decisions do not depend on the future) for PoS blockchain algorithms
%when
%divergence is measured in slots. 
%In addition, we present an \emph{optimal} online adversary 
with the remarkable property that the sequence of forks produced by
this adversary \emph{simultaneously} achieve the worst-case (slot)
common-prefix violations associated with all slots (see
Section~\ref{sec:canonical-forks}). We then study the stochastic
process generated when the \emph{characteristic string}---a Boolean
string representing the outcome of the leader election scheme---is
given by a family of i.i.d.\ Bernoulli random variables. In this case,
we identify a generating function that bounds the tail events off
interest, and analytically upper bound the growth of the function. We
then show how to extend the analysis to the setting where the
characteristic string is drawn from a martingale sequence.  As it
happens, this more general distribution arises naturally in the
analyses of PoS protocols that survive adaptive adversaries; e.g.,
Ouroboros Genesis~\cite{DBLP:journals/iacr/BadertscherGKRZ18}.  We
obtain the pleasing result that the common prefix error probability in
the martingale case is no more than that in the i.i.d.\ Bernoulli
case.

\paragraph{Direct consequences.} 
Our results establish consistency bounds in a quite general
setting---see below: In particular, they directly imply
$\exp(-\Theta(k))$ consistency for the Sleepy consensus (Snow
White)~\cite{DBLP:conf/asiacrypt/PassS17}, Ouroboros~\cite{KRDO17},
Ouroboros Praos~\cite{DBLP:conf/eurocrypt/DavidGKR18}, and Ouroboros
Genesis~\cite{DBLP:journals/iacr/BadertscherGKRZ18} blockchain
protocols. (The Ouroboros Praos and Ouroboros Genesis analyses in fact
directly relied on an earlier e-print version of the present article
for their settlement estimates.)

\paragraph{Related work.} 
Blockchain protocol analysis in the PoW-setting was initiated
in~\cite{DBLP:conf/eurocrypt/GarayKL15} and further improved in
\cite{DBLP:conf/eurocrypt/PassSS17,DBLP:conf/crypto/GarayKL17}.  The
established security bounds for consistency are linear in the security
parameter.  Sleepy
consensus~\cite[Theorem~13]{DBLP:conf/asiacrypt/PassS17} provides a
consistency bound of the form
$\exp(-\Omega(\sqrt{k}))$. %\footnote{We note that the term $T$ stems from a union bound applied over all communication rounds (CHECK XXX).}
Note that~\cite{DBLP:conf/asiacrypt/PassS17} is not a PoS protocol per
se, but it is possible to turn it into one (as was demonstrated in
\cite{DBLP:journals/iacr/BentovPS16a}). The analysis of the Ouroboros
blockchain~\cite{KRDO17} achieves $\exp(-\Omega(\sqrt{k}))$. We remark
that the analyses of Ouroboros
Praos~\cite{DBLP:conf/eurocrypt/DavidGKR18} and Ouroboros
Genesis~\cite{DBLP:journals/iacr/BadertscherGKRZ18} developed
significant new machinery for handling other challenges (e.g.,
adaptive adversaries, partial synchrony), but directly referred to a
preliminary version of this article to conclude their guarantees of
$ \exp(-\Omega(k))$.

%which were only experimentally verified. 

Our results complement the recent results of
\cite{DBLP:journals/corr/abs-1809-06528}, which also considers
longest-chain PoS protocols. \cite{DBLP:journals/corr/abs-1809-06528}
focuses on identifying dynamics unique to longest-chain PoS
protocols. In particular, they show that longest-chain PoS protocols
that are \emph{predictable} (i.e., for which some portion of the
schedule of slot leaders is known ahead of time) are necessarily
vulnerable to ``predictable double-spends.''  The conventional defense
against such attacks is to wait for the specified settlement time to
elapse before accepting a transaction, which (until now) has resulted
in slow confirmation times. As such,
\cite{DBLP:journals/corr/abs-1809-06528} raised the question of
whether long confirmation times are a necessary evil in longest-chain
PoS blockchains.  As double-spending attacks imply a consistency
violation, our results show that PoS protocols can safely decrease
settlement times to asymptotically match PoW protocols without
sacrificing security against double-spends.

Because we focus on the longest-chain rule, our
analysis is not applicable to protocols like
Algorand~\cite{DBLP:journals/corr/Micali16} which, in fact, offer
settlement in expected constant time without invoking blockchain
reorganisation or forks; however, Algorand lacks the ability to
operate in the ``sleepy''~\cite{DBLP:conf/asiacrypt/PassS17} or
``dynamic availability''~\cite{DBLP:journals/iacr/BadertscherGKRZ18}
setting.
% (which permits an evolving population of participants).
In our combinatorial analysis, synchronous operation is assumed
against a rushing adversary; this is without loss of generality
vis-a-vis the result of \cite{DBLP:conf/eurocrypt/DavidGKR18} where it
was shown how to reduce the combinatorial analysis in the partially
synchronous setting to the synchronous one.  We note that a number of
works have shown how to use a blockchain protocol to bootstrap a
cryptographic protocol that can offer faster settlement time under
stronger assumptions than honest majority, e.g., Hybrid
Consensus~\cite{DBLP:conf/wdag/PassS17} or
Thunderella~\cite{DBLP:conf/eurocrypt/PassS18}; our results are
orthogonal and synergistic to those since they can be used to improve
the settlement time bounds of the blockchain protocol that operates as
a fallback mechanism.

\paragraph{Outline.} We begin in Section~\ref{sec:model} by describing
a simple general model for blockchain
dynamics. Section~\ref{sec:definitions} builds on this model to set
down a number of basic definitions required for the proofs. The first
part of the main proof is described in Section~\ref{sec:recursion},
which develops a ``relative'' version of the theory of margin
from~\cite{KRDO17}; most details are then relegated to
Section~\ref{sec:margin-proof} in order to move quickly to the
consistency estimates in Section~\ref{sec:estimates}.  In
Section~\ref{sec:canonical-forks}, we present an optimal online
adversary who can simultaneously maximize the relative margins for all
prefixes of the characteristic string.  \ignore{We then provide two
  different settlement estimates in Section~\ref{sec:estimates};
  roughly, the two bounds trade off generality with the strength of
  the final estimates.} Finally, in Appendix~\ref{sec:exact-prob}, we
compute exact upper bounds on $k$-settlement error probabilities for
various values of $k$ and describe a simple $O(k^3)$-time algorithm to
compute these probabilities in general.
%The \texttt{C++} source code is publicly available 
%at~\href{https://github.com/saad0105050/forkable-strings-code}{https://github.com/saad0105050/forkable-strings-code}~\cite{PrForkableCode}.

%%% Local Variables:
%%% mode: latex
%%% TeX-master: "main"
%%% End:

\section{The blockchain axioms and the settlement security model}
\label{sec:model}

Typical blockchain consensus protocols
%---including Bitcoin, Ouroboros, Snow
%White, and Ouroboros Praos---
call for each participant to maintain a \emph{blockchain}; this is a
data structure that organizes transactions and other protocol metadata
into an ordered historical record of ``blocks.'' A basic design goal
of these systems is to guarantee that participants' blockchains always
agree on a common prefix; the differing suffixes of the chains held
by various participants roughly correspond to the possible future
states of the system. Thus the major analytic challenge is to ensure
that---despite evolving adversarial control of some of the
participants---the portion of honest participants' blockchains that
might pairwise disagree is confined to a short suffix. This analysis
in turn supports the fundamental guarantee of \emph{consistency}
for these algorithms, which asserts that data appearing
deep enough in the chain can be considered to be stable, or ``settled.''

\begin{sloppypar}
  We adopt a discrete notion of time organized into a sequence of
  \emph{slots} $\{\slot_0, \slot_1, \ldots\}$ and assume all protocol
  participants have the luxury of synchronized clocks that report the
  current slot number. As discussed above, the protocols we consider
  rely on two algorithmic devices:
\begin{itemize}
\item A \emph{leader election mechanism}, which randomly assigns to each time
  slot a set of ``leaders'' permitted to post a new block in that slot.
\item The \emph{longest-chain rule}, which calls for the leader(s) of
  each slot to add a block to the end of the longest blockchain she
  has yet observed, and broadcast this new chain to other participants.
\end{itemize}
The Bitcoin protocol uses a proof-of-work mechanism to carry out
leader election, which can be modeled using a random oracle
\cite{DBLP:conf/eurocrypt/GarayKL15,DBLP:conf/eurocrypt/PassSS17,DBLP:conf/crypto/GarayKL17}.
Proof-of-stake systems typically require more intricate leader
election mechanisms; for example, the Ouroboros
protocol~\cite{KRDO17} uses a full multi-party private computation to distribute clean randomness,
while Snow White~\cite{DBLP:journals/iacr/BentovPS16a}, Algorand
\cite{DBLP:journals/corr/Micali16}, and Ouroboros Praos
\cite{DBLP:conf/eurocrypt/DavidGKR18} use hashing and a family of
values determined on-the-fly. Despite these differences, all existing
analyses show that the leader election mechanism suitably approximates
an ideal distribution, which is also the approach we will adopt for our analysis.

%Given a blockchain protocol that uses these two mechanisms, our
%analysis allows us to draw meaningful conclusions about its consistency
%properties.
\end{sloppypar}

\subsection{The blockchain axioms and forks}

To simplify our analysis, we assume a synchronous communication
network in the presence of a \emph{rushing} adversary: in particular,
any message broadcast by an honest participant at the beginning of a
particular slot is received by the adversary first, who may decide
strategically and individually for each recipient in the network
whether to inject additional messages and in what order all messages
are to be delivered prior to the conclusion of the slot. (See
\S\ref{sec:model-comments} below for comments on this network
assumption.)

Given this, the behavior of the protocol when carried out by a group
of honest participants (who follow the protocol in the presence of an
adversary who may only reorganize messages) is clear. Assuming that
the system is initialized with a common ``genesis block''
corresponding to $\slot_0$ and the leader election process in fact
elects a single leader per slot, the players observe a common,
linearly growing blockchain:
\begin{center}
  \begin{tikzpicture}[>=stealth', auto, semithick,
    flat/.style={circle,draw=black,thick,text=black,font=\small}]
    \node[flat]    at (0,0)  (base) {$0$};
    \node[flat]    at (1,0)  (n1) {$1$};
    \node[flat]    at (2,0)  (n2) {$2$};
    \node[flat,white]    at (3,0)  (n3) {$\ \ \  $};
    \node at (3,0) {$\ldots$};
    \draw[thick,->] (base) to (n1);
    \draw[thick,->] (n1) to (n2);
    \draw[thick,->] (n2) to (n3);
  \end{tikzpicture}
\end{center}
\noindent
Here node $i$ represents the block broadcast by the leader of slot $i$
and the arrows represent the direction of increasing time. (Note that
the requirement of a single leader per slot is important in this
simple picture; it is possible for a network adversary to induce
divergent views between the players by taking advantage of slots where
more than a single honest participant is elected a leader.)

\paragraph{The blockchain axioms: Informal discussion.}
The introduction of adversarial participants or multiple slot leaders
complicates the family of possible blockchains that could emerge from
this process. To explore this in the context of our protocols, we work
with an abstract notion of a blockchain which
% (as informally suggested above)
ignores all internal structure. We consider a fixed assignment of
leaders to time slots, and assume that the blockchain uses a proof
mechanism to ensure that any block labeled with slot $\slot_t$ was
indeed produced by a leader of slot $\slot_t$; this is guaranteed in
practice by appropriate use of a secure digital signature scheme.

Specifically, we treat a \emph{blockchain} as
a sequence of abstract blocks, each labeled with a slot number, so
that:
\begin{enumerate}[label={\textbf{A\arabic*}}., series=axiom]
\item The blockchain begins with a fixed ``genesis'' block, assigned to slot $\slot_0$.
\item The (slot) labels of the blocks are in strictly increasing order.
\end{enumerate}
It is further convenient to introduce the structure of a directed
graph on our presentation, where each block is treated as a vertex; in
light of the first two axioms above, a blockchain is a path beginning
with a special ``genesis'' vertex, labeled $0$, followed by vertices
with strictly increasing labels that indicate which slot is associated
with the block. %(See the example below.)
\begin{center}
  \begin{tikzpicture}[>=stealth', auto, semithick,
    flat/.style={circle,draw=black,thick,text=black,font=\small}]
    \node[flat]    at (0,0)  (base) {$0$};
    \node[flat]    at (1,0)  (n1) {$2$};
    \node[flat] at (2,0)  (n2) {$4$};
    \node[flat] at (3,0)  (n3) {$5$};
    \node[flat] at (4,0)  (n4) {$7$};
    \node[flat] at (5,0)  (n5) {$9$};
    \draw[thick,->] (base) to (n1);
    \draw[thick,->] (n1) to (n2);
    \draw[thick,->] (n2) to (n3);
    \draw[thick,->] (n3) -- (n4);
    \draw[thick,->] (n4) -- (n5);
  \end{tikzpicture}
\end{center}
The protocols of interest call for honest players to add a
\emph{single} block %(to a single previous chain in its local state)
during any slot. In particular:
\begin{enumerate}[label={\textbf{A\arabic*}}., resume=axiom]
\item If a slot $\slot_t$ was assigned to a single honest player, 
then a single block is created---during the entire protocol---with the label $\slot_t$.
\end{enumerate}
Recall that blockchains are \emph{immutable} in the sense that any
block in the chain commits to the entire previous history of the
chain; this is achieved in practice by including with each block a
collision-free hash of the previous block. These properties imply that
if a specific slot $\slot_t$ was assigned to a unique honest player,
then any chain that includes the unique block from $\slot_t$ must also include
that block's associated prefix in its entirety.

As we analyze the dynamics of blockchain algorithms, it is convenient
to maintain an entire family of blockchains at once. As a matter of
bookkeeping, when two blockchains agree on a common prefix, we can
glue together the associated paths to reflect this, as indicated
below.
\begin{center}
  \begin{tikzpicture}[>=stealth', auto, semithick,
    flat/.style={circle,draw=black,thick,text=black,font=\small}]
    \node[flat]    at (0,0)  (base) {$0$};
    \node[flat]    at (1,0)  (n1) {$2$};
    \node[flat] at (2,0)  (n2) {$4$};
    \node[flat] at (3,0)  (n3) {$5$};
    \node[flat] at (4,.5)  (n4a) {$7$};
    \node[flat] at (5,.5)  (n5a) {$9$};
    \node[flat] at (4,-.5)  (n4b) {$8$};
    \node[flat] at (5,-.5)  (n5b) {$9$};
    \draw[thick,->] (base) to (n1);
    \draw[thick,->] (n1) to (n2);
    \draw[thick,->] (n2) to (n3);
    \draw[thick,->] (n3) to (n4a);
    \draw[thick,->] (n4a) to (n5a);
    \draw[thick,->] (n3) to (n4b);
    \draw[thick,->] (n4b) to (n5b);
  \end{tikzpicture}
  \end{center}
  When we glue together many chains to form such a diagram, we call it
  a ``fork''---the precise definition appears below. Observe that
  while these two blockchains agree through the vertex (block) labeled
  5, they contain (distinct) vertices labeled 9; this reflects two
  distinct blocks associated with slot 9 which, in light of the axiom
  above, must have been produced by an adversarial participant.
  
  Finally, as we assume that messages from honest players are
  delivered without delay, we note a direct consequence of the longest
  chain rule:
\begin{enumerate}[label={\textbf{A\arabic*}}., resume=axiom]
\item If two honestly generated blocks $B_1$ and $B_2$ are labeled
  with slots $\slot_1$ and $\slot_2$ for which $\slot_1 < \slot_2$,
  then the length of the unique blockchain terminating at $B_1$ is
  strictly less than the length of the unique blockchain terminating at $B_2$.
\end{enumerate}
Recall that the honest participant assigned to slot
$\slot_2$ will be aware of the blockchain terminating at $B_1$ that
was broadcast by the honest player in slot $\slot_1$ as a result of
synchronicity; according to the longest-chain rule, it must have
placed $B_2$ on a chain that was at least this long. In contrast, not
all participants are necessarily aware of all blocks generated by
dishonest players, and indeed dishonest players may often want to
delay the delivery of an adversarial block to a participant or show
one block to some participants and show a completely different block
to others.

\paragraph{Characteristic strings, forks, and the formal axioms.}
Note that with the axioms we have discussed above, whether or not a
particular fork diagram (such as the one just above) corresponds to a valid
execution of the protocol depends on how the slots have been awarded to the parties by the
leader election mechanism. We introduce the notion of a ``characteristic'' string as a convenient
means of representing information about slot leaders in a given execution.
\begin{definition}[Characteristic string]
  Let $\slot_1, \ldots, \slot_{n}$ be a sequence of slots. A \emph{characteristic string} $w$ is an element of $\{0,1\}^n$ defined for a particular execution of a blockchain protocol so that
  \[
    w_t =   \begin{cases}
    0 & \text{if $\slot_{t}$ was assigned to a single honest participant},\\
%    1 & \text{if $\slot_{t}$ was assigned to an adversarial participant},
    1 & \text{otherwise.}
  \end{cases}
\]
\end{definition}
For two Boolean strings $x$ and $w$, 
we write $x \Prefix w$ iff $x$ is a strict prefix of $w$. 
Similarly, 
we write $x \PrefixEq w$ iff either $x = w$ or $x \Prefix w$. 
The empty string $\varepsilon$ is a prefix to any string. 
With this discussion behind us, we set down the formal object we use
to reflect the various blockchains adopted by honest players during
the execution of a blockchain protocol. This definition formalizes the blockchains axioms discussed above.

%%%%Forks

\begin{definition}[Fork; \cite{KRDO17}]
  Let $w\in \{0,1\}^n$ and let $H = \{ i \mid w_i = 0\}$. A
  \emph{fork} for the string $w$ consists of a directed and rooted
  tree $F=(V,E)$ with a labeling $\ell:V\to\{0,1,\dots,n\}$. We insist
  that each edge of $F$ is directed away from the root vertex and
  further require that
  \begin{enumerate}[label=(F{\arabic*}.)]
  \item\label{fork:root} the root vertex $r$ has label $\ell(r)=0$;
  \item\label{fork:monotone} the labels of vertices along any directed path are strictly increasing;
  \item\label{fork:unique-honest} each index $i\in H$ is the label for exactly one vertex of $F$;
  \item\label{fork:honest-depth} for any  vertices $i,j\in H$, if $i<j$, then the depth of vertex $i$ in $F$ is strictly less than the depth of vertex $j$ in $F$.
  \end{enumerate}
\end{definition}

If $F$ is a fork for the characteristic string $w$, we write
$F\vdash w$.  Note that the
conditions~\ref{fork:root}--\ref{fork:honest-depth} are direct
analogues of the axioms A1--A4 above. See Fig.~\ref{fig:fork} for an
example fork. A final notational convention: If $F \vdash x$ and
$\hat{F} \vdash w$, we say that $F$ is a \emph{prefix} of $\hat{F}$,
written $F \fprefix \hat{F}$, if 
% the string $x \in \{0,1\}^\ell$ is a prefix of the string $w \in \{0,1\}^{\ell + m}$ 
$x \PrefixEq w$
and $F$ appears as a
consistently-labeled subgraph of $\hat{F}$. 
(Specifically, each path of $F$ appears, with identical labels, in $\hat{F}$.)

% Observe that any string of the form $0^k$ has a unique
% fork consisting of a single path:
% \begin{center}
%   \begin{tikzpicture}[scale=1,>=stealth', auto, semithick,
%     flat/.style={circle,draw=black,thick,text=black,font=\small}]
%     \node[flat] at (0,0)  (base) {$0$};
%     \node[flat] at (1,0)  (n1) {$1$};
%     \node[flat] at (2,0)  (n2) {$2$};
%     \node[flat] at (5,0)  (n5) {$k$};
%     \draw[thick,->] (base) to (n1);
%     \draw[thick,->] (n1) to (n2);
%     \draw[thick,->,dashed] (n2) to (n5);
%   \end{tikzpicture}
% \end{center}
% On the other hand, there are in fact an infinite number of forks for
% any string with at least one ``1,'' as slots for which $w_i = 1$ can
% be associated with any number of vertices. Axiom~\ref{fork:monotone}
% reflects that any legal chain must consist of blocks with increasing
% slot labels. Axiom~\ref{fork:unique-honest} reflects the fact that honest plays produce a single block. Axiom~\ref{fork:honest-depth} reflects that each new
% honest vertex is always placed at a depth strictly greater than all
% previous honest vertices, because honest users always choose to add
% their block to the longest visible chain, and we assume honest blocks
% can be seen by all users. (By contrast, adversaries may play on
% shorter tines, or may ``hide'' dishonest blocks from other users until
% later slots.)

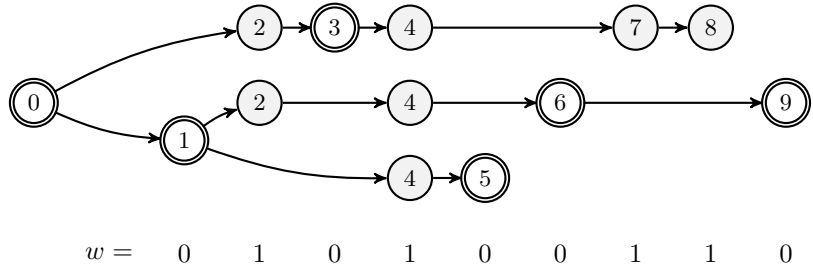
\begin{figure}[th]
\centering
\begin{tikzpicture}[>=stealth', auto, semithick,
  honest/.style={circle,draw=black,thick,text=black,double,font=\small},
  malicious/.style={fill=gray!10,circle,draw=black,thick,text=black,font=\small}]
    \node at (0,-2) {$w =$};
    \node at (1,-2) {$0$};
    \node[honest]    at (1,-.5)  (ab1) {$1$};
    \node at (2,-2) {$1$};
    \node[malicious] at (2,0)  (b2) {$2$}; \node[malicious] at (2,1) (c1) {$2$};
    \node at (3,-2) {$0$};    \node[honest]    at (3,1)  (c2) {$3$};
    \node at (4,-2) {$1$};    \node[malicious] at (4,0)  (b3) {$4$}; \node[malicious] at (4,-1) (a2) {$4$};
\node[malicious] at (4,1) (c3) {$4$};
    \node at (5,-2) {$0$};    \node[honest]    at (5,-1) (a3) {$5$};
    \node at (6,-2) {$0$};    \node[honest]    at (6,0)  (b4) {$6$};
    \node at (7,-2) {$1$};    \node[malicious] at (7,1)  (c4) {$7$};
    \node at (8,-2) {$1$};    \node[malicious] at (8,1)  (c5) {$8$};
    \node at (9,-2) {$0$};    \node[honest]    at (9,0)  (b5) {$9$};
    \node[honest] at (-1,0) (base) {$0$};
    % \node[state,honest] at (3,-1) (bottom) {};
    % \node[state,honest] at (7,1) (top) {$H$};
    \draw[thick,->] (base) to[bend left=10] (c1);
    \draw[thick,->] (base) to[bend right=10] (ab1);
    \draw[thick,->] (ab1) to[bend right=10] (a2);
    \draw[thick,->] (a2) -- (a3);
    \draw[thick,->] (ab1) to[bend left=10] (b2);
    \draw[thick,->] (b2) -- (b3);
    \draw[thick,->] (b3) -- (b4);
    \draw[thick,->] (b4) -- (b5);
    \draw[thick,->] (c1) -- (c2);
    \draw[thick,->] (c2) -- (c3);
    \draw[thick,->] (c3) -- (c4);
    \draw[thick,->] (c4) -- (c5);
    % \draw[thick,<->] (3,0) -- (7,0) node[pos=.5] {$\gap(f)$};
    \end{tikzpicture}
    \caption{A fork $F$ for the characteristic string $w = 010100110$;
      vertices appear with their labels and honest vertices are
      highlighted with double borders. Note that the depths of the
      (honest) vertices associated with the honest indices of $w$ are
      strictly increasing. Note, also, that this fork has two disjoint
      paths of maximum depth.}
    \label{fig:fork}
  \end{figure}

  Let $w$ be a characteristic string.  The directed paths in the fork
  $F \Fork w$ originating from the root are called \emph{tines}; these
  are abstract representations of blockchains. (Note that a tine might
  not terminate at a leaf of the fork.)  We naturally extend the label
  function $\ell$ for tines: i.e., $\ell(t) \triangleq \ell(v)$ where
  the tine $t$ terminates at vertex $v$. The length of a tine $t$ is
  denoted by $\length(t)$.

 \paragraph{Viable tines.}
 The longest-chain rule dictates that honest players build on chains
 that are at least as long as all previously broadcast honest
 chains. It is convenient to distinguish such tines in the analysis:
 specifically, a tine $t$ of $F$ is called \emph{viable} if its length
 is at least the depth of any honest vertex $v$ for which
 $\ell(v) \leq \ell(t)$. A tine $t$ is \emph{viable at slot $s$} if
 the portion of $t$ appearing over slots $0,\ldots, s$ has length at
 least that of any honest vertices labeled from this set. (As noted,
 the properties~\ref{fork:unique-honest} and~\ref{fork:honest-depth}
 together imply that an honest observer at slot $s$ will only adopt a
 viable tine.)  The \emph{honest depth} function
 $\hdepth : H \rightarrow [n]$ gives the depth of the (unique) vertex
 associated with an honest slot; by~\ref{fork:honest-depth},
 $\hdepth(\cdot)$ is strictly increasing.
  
  \subsection{Settlement and the common prefix
    property}\label{sec:cp-settlement}
  
  We are now ready to explore the power of an adversary in this
  setting who has corrupted a (perhaps evolving) coalition of the
  players. We focus on the possibility that such an adversary can
  blatantly confound consistency of the honest player's
  blockchains. In particular, we consider the possibility that, at
  some time $t$, the adversary conspires to produce two blockchains of
  maximum length that diverge prior to a previous slot $s \leq t$; in
  this case honest players adopting the longest-chain rule may clearly
  disagree about the history of the blockchain after slot $s$. We call
  such a circumstance a \emph{settlement violation}.

  To reflect this in our abstract language, let $F \Fork w$ be a fork
  corresponding to an execution with characteristic string $w$. Such a
  settlement violation induces two viable tines $t_1, t_2$ with the
  same length that diverge prior to a particular slot of interest. We
  record this below.
  
\begin{definition}[Settlement with parameters $s,k \in \NN$]\label{def:settlement}
  Let $w \in \{0,1\}^n$ be a characteristic string. Let
  $F \Fork w_1 \ldots w_t$ be a fork for a prefix of $w$ with
  $s + k \leq t \leq n$.  We say that a slot $s$ is \emph{not
    $k$-settled} in $F$ if the fork contains two tines $t_1, t_2$ of
  maximum length that ``diverge prior to $s$,'' i.e., they either
  contain different vertices labeled with $s$, or one contains a
  vertex labeled with $s$ while the other does not. Note that such
  tines are viable by definition.
  % admits a $\kSlotCP$ violation 
  % witnessed by two tines $t_1, t_2$ 
  % such that $s + k \leq \ell(t_1) \leq \ell(t_2)$; 
  Otherwise, \emph{slot $s$ is $k$-settled in $F$}. We say that a slot
  $s$ is \emph{$k$-settled} (for the characteristic string $w$) if it
  is $k$-settled in every fork $F \Fork w_1, \ldots w_t$, for each
  $t \geq s+k$.
  % containing two viable tines $t_1, t_2$ which 
  % (i.) satisfy $s + k \leq \ell(t_1) \leq \ell(t_2)$ and 
  % (ii.) the pair $(t_1, t_2)$ witness a $\kSlotCP$ violation.
\end{definition}

\paragraph{Common prefix.} Settlement violations are a convenient and
intuitive proxy for the notion of common prefix discussed in the
introduction. Indeed, as we show in Section~\ref{sec:cp-forks}, the
two notions are equivalent, so we have the luxury of discussing
settlement violations which have the advantage of a more ready
interpretation. Concretely, we will simultaneously upper bound---using
the same analytic techniques---the probability of settlement
violations and common prefix violations.

Recall that the common prefix property with parameter $k$ asserts
that, for any slot index $s$, if an honest observer at slot $s + k$
adopts a blockchain $\Chain$, the prefix $\Chain[0 : s]$ will be
present in every honestly-held blockchain at or after slot $s + k$.
(Here, $\Chain[0 : s]$ denotes the prefix of the blockchain $\Chain$
containing only the blocks issued from slots $0, 1, \ldots, s$.)

We translate this property into the framework of forks.  Consider a
tine $t$ of a fork $F \vdash w$.  The \emph{trimmed} tine
$t\TrimSlot{k}$ is defined as the portion of $t$ labeled with slots
$\{ 0, \ldots, \ell(t) - k\}$. For two tines, we use the notation
$t_1 \PrefixEq t_2$ to indicate that the tine $t_1$ is a
prefix of tine $t_2$.

\begin{definition}[Common Prefix Property with parameter $k \in \NN$]\label{def:cp-slot}
  Let $w$ be a characteristic string. A fork $F \vdash w$ satisfies
  $\kSlotCP$ if, for all pairs $(t_1, t_2)$ of viable tines $F$ for
  which $\ell(t_1) \leq \ell(t_2)$, we have $t_1\TrimSlot{k} \PrefixEq t_2$. 
  Otherwise, we say that the tine-pair $(t_1, t_2)$ is a witness to a $\kSlotCP$ violation.
  Finally, \emph{$w$ satisfies $\kSlotCP$} if every fork $F \vdash w$ satisfies $\kSlotCP$.
  %We
  %denote this property by $\kSlotCP$.
  % Let $\Chain_1$ and $\Chain_2$ be two blockchains adopted by 
  % two (not necessarily distinct) honest players 
  % at the onset of slots $r_1$ and $r_2$, respectively, with $r_1 \leq r_2$. 
  % Then $\Chain_1\TrimSlot{k} \PrefixEq \Chain_2$. 
  % We denote this property by $\kSlotCP$.
\end{definition} 
If a string $w$ does not possess the $\kSlotCP$ property, 
we say that \emph{$w$ violates $\kSlotCP$}.
Observe that we defined the common prefix property in terms of
deleting any blocks associated with the \emph{last $k$ trailing slots}
from a local blockchain $\Chain$.  Traditionally
(cf. \cite{C:GarKiaLeo17}), this property has been defined in terms of
deleting a suffix of (block-)length $k$ from $\Chain$.  We denote the
block-deletion-based version of the common prefix property as the
$\kCP$ property.  Note, however, that a $\kCP$ violation immediately
implies a $\kSlotCP$ violation, so bounding the probability of a
$\kSlotCP$ violation is sufficient to rule out both events.

\subsection{Adversarial attacks on settlement time; the settlement game}\label{sec:game} 

To clarify the relationship between forks and the chains at play in a
canonical blockchain protocol, we define a game-based model below that
explicitly describes the relationship between forks and executions.
By design, the probability that the adversary wins this game is at
most the probability that a slot $s$ is not $k$-settled. We remark
that while we focus on settlement violations for clarity, one could
equally well have designed the game around common prefix violations.

Consider the \emph{$(\Distribution,T;s,k)$-settlement game}, played
between an adversary $\Adversary$ and a challenger $\Challenger$ with
a leader election mechanism modeled by an ideal distribution
$\Distribution$. Intuitively, the game should reflect the ability of
the adversary to achieve a settlement violation; that is, to present
two maximally-long viable blockchains to a future honest observer,
thus forcing them to choose between two alternate histories which
disagree on slot $s$.
%Specifically, the $\Adversary$ will furnish a fork with
%two viable tines such that (i.) they diverge prior to slot $s$ and
%(ii.) both have (equal and) maximal length among all chains produced
%by the protocol at some later time $t \geq s+k$.
The challenger plays the role(s) of the honest players during the
protocol.

Note that in typical PoS settings the distribution $\Distribution$ is
determined by the combined stake held by the adversarial players, the
leader election mechanism, and the dynamics of the protocol. The most
common case (as seen in Snow White~\cite{DBLP:conf/asiacrypt/PassS17}
and Ouroboros~\cite{KRDO17}) guarantees that the characteristic string
$w = w_1 \ldots w_T$ is drawn from an i.i.d.\ distribution for which
$\Pr[w_i = 1] \leq (1 - \epsilon)/2$; here the constant
$(1-\epsilon)/2$ is directly related to the stake held by the
adversary. Settings involving adaptive adversaries (e.g., Ouroboros
Praos~\cite{DBLP:conf/eurocrypt/DavidGKR18} and Ouroboros
Genesis~\cite{DBLP:journals/iacr/BadertscherGKRZ18}) yield the weaker
martingale-type guarantee that
$\Pr[w_i = 1 \mid w_1, \ldots, w_{i-1}] \leq (1 - \epsilon)/2$.

\begin{center}
  \fbox{
    \begin{minipage}{.9 \textwidth}
      \begin{center}
        \textbf{The $(\Distribution,T;s,k)$-settlement game}
      \end{center}
      \begin{enumerate}

      \item A characteristic string $w \in \{0,1\}^T$ is drawn from
        $\mathcal{D}$ and provided to $\Adversary$. (This reflects the results of the leader
        election mechanism.)

      \item Let $A_0 \vdash \varepsilon$ denote the initial fork for
        the empty string $\varepsilon$ consisting of a single node
        corresponding to the genesis block.

      \item For each slot $t = 1, \ldots, T$ in increasing order:
        \begin{enumerate}

        \item If $w_t = 0$, this is an honest slot. In this case, the
          challenger is given the fork
          $A_{t-1} \vdash w_1 \ldots w_{t-1}$ and must determine a new
          fork $F_{t} \vdash w_1 \ldots w_t$ by adding a single vertex
          (labeled with $t$) to the end of a longest path in
          $A_{t-1}$.  (If there are ties, $\Adversary$ may choose
          which path the challenger adopts.)

        \item If $w_t = 1$, this is an adversarial slot. $\Adversary$
          may set $F_t \vdash w_1\ldots w_t$ to be an arbitrary fork
          for which $A_{t-1} \fprefix F_t$.
          
        \item (Adversarial augmentation.) $\Adversary$ determines an
          arbitrary fork $A_t \vdash w_1 \ldots, w_{t}$ for
          which $F_{t} \fprefix A_{t}$.
        \end{enumerate}
         Recall that $F \fprefix F'$ indicates that $F'$
          contains, as a consistently-labeled subgraph, the fork $F$.
      \end{enumerate}
      % With each slot $\slot_i$, we can associate a subset $\mathcal{V}_i$ of paths in the fork $A_t$ 
      % such that the paths in $\mathcal{V}_i$ are \emph{viable at $\slot_i$}, i.e., 
      %   they are at least as long as the depth of the last honest vertex at or prior to $\slot_i$. 
      % (Note that the association $\slot_i \mapsto \mathcal{V}_i$ is not unique: 
      % multiple subsets can be associated with the same slot.)
      
      % We say that 
      $\Adversary$ \emph{wins the settlement game} if slot $s$ is not
      $k$-settled in some fork $A_t$ (with $t \geq s+k$).
    %   We say that $\Adversary$ \emph{wins} the settlement game if 
    %   some $t, a, b$ satisfying $s + k \leq a \leq b \leq t$, 
    %   the fork $A_t$ contains two tines, 
    %   $\Chain_a$ and $\Chain_b$
    %   % $\Chain_a \in \mathcal{V}_a$ and $\Chain_b \in \mathcal{V}_b$, 
    %   such that 
    %   (i.) $\Chain_a$ (resp. $\Chain_b$) is viable at slot $a$ (resp. slot $b$); 
    %   (i.) $\Chain_a$ (resp. $\Chain_b$) has the maximal length among all viable tines at slot $a$ (resp. slot $b$); and 
    %   (ii.) $\Chain_a$ and $\Chain_b$ \emph{diverge prior to $\slot_s$}---specifically, 
    %   they either
    %   contain different vertices labeled with $s$, or one contains a
    %   vertex labeled with $s$ while the other does not. 
    % %   We say that $\Adversary$ \emph{wins} the settlement game if, for
    % %   some $t \geq s+k$, there are two paths in the fork $A_t$ where both paths 
    % %   (i.) have the maximal length among all paths in the fork and
    % %   (ii.) \emph{diverge prior to $\slot_s$}---specifically, they either
    % %   contain different vertices labeled with $s$ or one contains a
    % %   vertex labeled $s$ and the other does not.
    \end{minipage}
  }
\end{center}

\begin{definition}
  Let $\Distribution$ be a distribution on $\{0,1\}^T$. Then define
  the \emph{$(s,k)$-settlement insecurity} of $\Distribution$ to be
  \[
    \mathbf{S}^{s,k}[\Distribution] \triangleq \max_{\Adversary}\, \Pr[\text{$\Adversary$ wins the $(\Distribution, T; s, k)$-settlement game}]
    \,,
  \]
  this maximum taken over all adversaries $\Adversary$.
\end{definition}

\paragraph{Remarks.}
% A subset $\mathcal{V}_i$ associated with the slot $\slot_i$ represents the ``local, viable chains'' shown to an honest observer by the system; he updates his state by adopting a maximally-long chain from this subset where ties are broken by the adversary. It is likely that different observers will have a different view of the current state of the system. 
% Next, observe that the adversary wins 
% the $(\Distribution, T; s,k)$-settlement game if 
% he can create a fork containing two tines that 
% (i.) witness a $\kSlotCP$ violation while
% (ii.) ``disagreeing'' about slot $s$. 
Observe that the adversarial augmentation step permits the adversary to
``suddenly'' inject new paths in the fork between two honest
players at adjacent slots; this corresponds to circumstances
when the adversary chooses to deliver a new blockchain to an honest
participant which may consist of an earlier honest chain with some
adversarial blocks appended to the end. Observe, additionally, that
the behavior of the challenger in the game is entirely deterministic,
as it simply plays according to the longest-chain rule (even
permitting the adversary to break ties). Thus the result of the game
is entirely determined by the characteristic string $w$ drawn from
$\Distribution$ and the choices of the adversary $\Adversary$. 
% For a Boolean string $w$, let $w[i : j], 1 \leq i \leq j$ denote the substring 
% $w_i w_{i+1} \ldots w_j$. 
We record the
following immediate conclusion:
\begin{lemma}\label{lem:main-forks}
  Let 
  $s, k, T \in \NN$. 
  Let $\Distribution$ be a distribution on $\{0,1\}^T$. Then
  \[
    \mathbf{S}^{s,k}[\Distribution] \leq 
      \Pr_{w \sim \Distribution}[\text{slot $s$ is not $k$-settled for $w$}]
    \,.
  %     \Pr_w\left[\parbox{70mm}{
  %       there exists an integer $t \geq s + k$ 
  %       and a fork $F \Fork w[1 : t]$ such that 
  %       $F$ contains two viable tines that 
  %       violate $\kSlotCP$ and 
  %       diverge prior to slot $s$}
  %       % there exists a prefix of $w$ 
  %       % of length at least $s + k$ 
  %       % and a fork $F$ on the prefix such that 
  %       % $F$ contains two tines that 
  %       % violate $\kSlotCP$ and 
  %       % diverge prior to slot $s$}
  %     \right]\,,
  \]
  % where the string $w$ is drawn from the distribution $\Distribution$.
\end{lemma}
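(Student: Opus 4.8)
The plan is to show that, for every fixed adversary $\Adversary$, the event ``$\Adversary$ wins the $(\Distribution,T;s,k)$-settlement game'' is contained in the event ``slot $s$ is not $k$-settled for $w$'', viewing both as events over the draw $w\sim\Distribution$ together with any internal randomness of $\Adversary$. Since the latter event depends only on $w$, its probability under this joint distribution equals $\Pr_{w\sim\Distribution}[\text{slot $s$ is not $k$-settled for $w$}]$, and it carries no dependence on $\Adversary$; so once the containment is established, taking the maximum over $\Adversary$ yields the claimed inequality.

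First I would establish the structural invariant that, in any run of the game on a characteristic string $w\in\{0,1\}^T$ against any (possibly randomized) adversary, every fork $F_t$ and $A_t$ produced during the game satisfies $F_t\Fork w_1\cdots w_t$ and $A_t\Fork w_1\cdots w_t$. This is an induction on $t$. The base case is immediate, since $A_0\Fork\varepsilon$ by definition. In the inductive step, the adversarial-slot move ($w_t=1$) and the adversarial-augmentation move are constrained by the game rules to output forks for $w_1\cdots w_t$, so those cases need nothing. The one case warranting a brief check is the honest-slot move ($w_t=0$): here $F_t$ is obtained from $A_{t-1}\Fork w_1\cdots w_{t-1}$ by appending a single new vertex labelled $t$ to the end of a longest tine of $A_{t-1}$. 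One verifies \ref{fork:root}--\ref{fork:honest-depth} directly: \ref{fork:root} and \ref{fork:monotone} are inherited (the new label $t$ exceeds every label present in $A_{t-1}$); \ref{fork:unique-honest} holds because $t$ is the only new honest index and it receives exactly one vertex; and \ref{fork:honest-depth} holds because a longest tine of $A_{t-1}$ has length at least the depth of any honest vertex of $A_{t-1}$ (such a vertex lies on a root-to-vertex path of exactly that length), so the new vertex acquires depth strictly greater than the depth of every honest vertex with a smaller label.

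Next I would combine this invariant with the definitions. Fix any $\Adversary$ and any outcome $(w,\rho)$ on which $\Adversary$ wins. By the winning condition there is some $t\ge s+k$ and some fork $A_t$ arising in the game in which slot $s$ is not $k$-settled; by the invariant, $A_t\Fork w_1\cdots w_t$ with $s+k\le t\le T$. By Definition~\ref{def:settlement} this is exactly the statement that slot $s$ is not $k$-settled for $w$. Hence
\[
  \Pr[\text{$\Adversary$ wins the $(\Distribution,T;s,k)$-settlement game}]
  \;\le\; \Pr_{w\sim\Distribution}[\text{slot $s$ is not $k$-settled for $w$}]\,,
\]
and since the right-hand side is independent of $\Adversary$, taking the maximum over all adversaries gives $\mathbf{S}^{s,k}[\Distribution]\le\Pr_{w\sim\Distribution}[\text{slot $s$ is not $k$-settled for $w$}]$.

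As the word ``immediate'' in the surrounding text suggests, there is no real obstacle here; the only point I would expect a referee to probe is the verification that the challenger's honest-slot move genuinely yields a fork---in particular that axiom~\ref{fork:honest-depth} is preserved when the new vertex is attached to a longest tine---since it is precisely this observation that licenses reading each game transcript as a sequence of legitimate forks and thereby invoking Definition~\ref{def:settlement}.
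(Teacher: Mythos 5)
Your proposal is correct and matches what the paper intends: the paper records this lemma as an ``immediate conclusion'' precisely because any winning run exhibits a fork $A_t \Fork w_1\cdots w_t$ with $t \ge s+k$ in which slot $s$ is not $k$-settled, which by Definition~\ref{def:settlement} is exactly the event on the right-hand side. Your additional induction verifying that the challenger's honest-slot move preserves axioms \ref{fork:root}--\ref{fork:honest-depth} is a careful fleshing-out of the same argument, not a different route.
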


In the subsequent sections, we will develop some further notation and
tools to analyze this event.  We will investigate two different
families of distributions, those with i.i.d.\ coordinates and those
with martingale-type conditioning guarantees. For $T \in \NN$ and
$\epsilon \in (0, 1)$, let $B_\epsilon = (B_1, \ldots, B_n)$ denote
the random variable taking values in $\{0,1\}^n$ so that the $B_i$ are
independent and $\Pr[B_i = 1] = (1 - \epsilon)/2$; we let
$\mathcal{B}_\epsilon$ denote the distribution on $\{0,1\}^n$
associated with $B_\epsilon$. When $\epsilon$ can be inferred from
context, we simply write $B$ and $\mathcal{B}$.
% More generally, we say
% that a random variable $W = (W_1, \ldots, W_n)$ taking values in
% $\{0,1\}^T$ is \emph{$\epsilon$-conditional} if, for each $i$,
% \begin{equation}\label{eq:conditioning}
%   \Pr[w_i=1 \mid w_1, \ldots, w_{i-1}] \leq (1 - \epsilon)/2\,.
% \end{equation}
% (This indicates arbitrary conditioning on the previous $i-1$
% variables.) 
% Note that $\mathcal{B}_\epsilon$ is $\epsilon$-conditional.  

We also study a more general family of distributions, defined next.

\begin{definition}[$\epsilon$-martingale condition]\label{def:eps-martingale}
  Let $W = (W_1, \ldots, W_n)$ be a random variable taking values in
  $\{0,1\}^n$.  We say that $W$ satisfies the
  \emph{$\epsilon$-martingale condition} if for each
  $t \in \{1, \ldots, n\}$,
  \[
    \Exp[W_t \mid W_1, \cdots, W_{t-1}] \leq (1-\epsilon)/2\,.
  \]
  Equivalently,
  $\Pr[W_t = 1\mid W_1, \ldots, W_{t-1}] \leq (1-\epsilon)/2$. 
  The conditioning on the variables
  $W_1, \cdots, W_{t-1}$ is arbitrary in both cases; as a consequence,
  $\Pr[W_t = 1] \leq (1-\epsilon)/2$. As a matter of notation, we let
  $\mathcal{W}$ denote the distribution associated with the random
  variable $W$.  We use the term ``$\epsilon$-martingale condition''
  to qualify both a random variable and its distribution.
\end{definition}
There are settings, such as
Genesis~\cite{DBLP:journals/iacr/BadertscherGKRZ18}, where this
martingale-type conditioning is important.  Note that
$\mathcal{B}_\epsilon$ satisfies the $\epsilon$-martingale condition.
Now we are ready to state our main theorem.

\begin{theorem}[Main theorem]\label{thm:main}
  Let $\epsilon \in (0, 1), s, k, T \in \NN$.  Let
  $\mathcal{W}$ and $\mathcal{B}_\epsilon$ be two distributions on
  $\{0,1\}^T$ where $\mathcal{B}_\epsilon$ is defined above and
  $\mathcal{W}$ satisfies the $\epsilon$-martingale condition.
  Then 
  % both $\mathbf{S}^{s,k}[\mathcal{W}]$ and $\mathbf{S}^{s,k}[\mathcal{B}_\epsilon]$ 
  % are at most 
  % \[
  %   \exp\bigl(-\Omega(\epsilon^3 (1 - O(\epsilon))k)\bigr)
  %   \,.
  % \]
  \[
    \mathbf{S}^{s,k}[\mathcal{W}] 
      \leq \mathbf{S}^{s,k}[\mathcal{B}_\epsilon] 
      \leq \exp\bigl(-\Omega(\epsilon^3 (1 - O(\epsilon))k)\bigr)
    \,.
  \]
  (Here, the asymptotic notation hides constants that do not depend on $\epsilon$ or $k$.)
\end{theorem}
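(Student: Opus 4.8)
The plan is to reduce the statement to the analysis of a biased two–dimensional random walk and then to control that walk by an exponential–supermartingale (equivalently, generating–function) argument.

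\medskip
\noindent\emph{Step 1: reduction to a walk.} By Lemma~\ref{lem:main-forks}, $\mathbf{S}^{s,k}[\Distribution]\le\Pr_{w\sim\Distribution}[\text{slot }s\text{ is not }k\text{-settled for }w]$, and conversely the adversary — who is handed $w$ at the outset of the settlement game and faces a deterministic challenger — can realize the worst fork for $w$; made precise by the canonical-fork construction of Section~\ref{sec:canonical-forks}, this gives $\mathbf{S}^{s,k}[\Distribution]=\Pr_{w\sim\Distribution}[\text{slot }s\text{ is not }k\text{-settled for }w]$. I then invoke the relative-margin machinery of Section~\ref{sec:recursion}: each prefix $w_1\cdots w_t$ is assigned a pair $(\rho_s(w_1\cdots w_t),\mu_s(w_1\cdots w_t))$, a relative reach and a relative margin for the split at slot $s$, such that (i) slot $s$ is not $k$-settled for $w$ iff $\mu_s(w_1\cdots w_t)\ge0$ for some $t\ge s+k$; (ii) the pair obeys an explicit recursion whose one-step update depends only on the current state and the next bit $w_{t+1}$, with an honest bit $0$ inducing negative drift and an adversarial bit $1$ inducing positive drift; and (iii) $\mu_s\le\rho_s$, with $\rho_s$ bounded above by the reflected reach walk on $\NN$. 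The equivalence of settlement with $\kSlotCP$ from Section~\ref{sec:cp-forks} lets the same estimate cover the common-prefix error, so it suffices to bound $\Pr_{w\sim\Distribution}[\exists t\ge s+k:\ \mu_s(w_1\cdots w_t)\ge0]$ for $\Distribution\in\{\mathcal{W},\mathcal{B}_\epsilon\}$.

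\medskip
\noindent\emph{Step 2: the first inequality.} From the recursion, the bit-$1$ update dominates the bit-$0$ update coordinatewise and both updates are monotone in the state; hence $\mu_s(w_1\cdots w_t)$ is nondecreasing in each coordinate of $w$, so $A:=\{\exists t\ge s+k:\ \mu_s(w_1\cdots w_t)\ge0\}$ is an up-set in $\{0,1\}^T$. The $\epsilon$-martingale condition $\Pr[W_t=1\mid W_1,\dots,W_{t-1}]\le(1-\epsilon)/2=\Pr[B_t=1]$ is exactly what is needed to build, bit by bit, a coupling of $\mathcal{W}$ and $\mathcal{B}_\epsilon$ with $W_t\le B_t$ for every $t$; evaluating this coupling on the up-set $A$ yields $\mathbf{S}^{s,k}[\mathcal{W}]=\Pr_{\mathcal{W}}[A]\le\Pr_{\mathcal{B}_\epsilon}[A]=\mathbf{S}^{s,k}[\mathcal{B}_\epsilon]$.

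\medskip
\noindent\emph{Step 3: the second inequality.} Under $w\sim\mathcal{B}_\epsilon$ the walk $(\rho_s,\mu_s)$ is genuinely biased. First, the state at slot $s$ is benign: $\rho_s(w_1\cdots w_s)$ is stochastically dominated by the stationary law of the reflected reach walk — geometric with ratio $\tfrac{1-\epsilon}{1+\epsilon}$ — and $\mu_s(w_1\cdots w_s)\le\rho_s(w_1\cdots w_s)$; this controls the initial condition and is where the $(1-O(\epsilon))$ factor enters. Next I introduce an exponential potential $\Phi(\rho,\mu)=\alpha^{\mu}\beta^{\rho}$ (more precisely a nonnegative combination of such monomials tuned to the two regimes of the recursion) and choose $\alpha>1$, $\beta<1$ so that $\Exp[\Phi(\text{next state})\mid\text{current state}]\le(1-\delta)\,\Phi(\text{current state})$ with $\delta=\Omega(\epsilon^{3})$; then $(1-\delta)^{-(t-s)}\Phi(\rho_s(w_1\cdots w_t),\mu_s(w_1\cdots w_t))$ is a nonnegative supermartingale in $t\ge s$. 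Since $\Phi\ge1$ whenever $\mu\ge0$, Doob's maximal inequality applied from slot $s+k$ onward gives $\Pr[\exists t\ge s+k:\ \mu_s\ge0\mid w_1\cdots w_{s+k}]\le\Phi(\text{state}_{s+k})$; taking expectations and using the supermartingale property over slots $s,\dots,s+k$ together with $\Exp[\Phi(\text{state}_s)]=O(1)$ yields
\[
  \Pr_{w\sim\mathcal{B}_\epsilon}\bigl[\exists t\ge s+k:\ \mu_s(w_1\cdots w_t)\ge0\bigr]\ \le\ O(1)\,(1-\delta)^{k}\ =\ \exp\bigl(-\Omega(\epsilon^{3}(1-O(\epsilon))k)\bigr)\,.
\]
Note that the maximal inequality handles the entire window $t\ge s+k$ in one stroke, so no union bound over the protocol lifetime $T$ is incurred — the promised ``single-shot'' improvement. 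Finally, reading the same recursion as a transfer/generating-function computation gives the $O(k^{3})$-time exact evaluation recorded in Appendix~\ref{sec:exact-prob}.

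\medskip
\noindent\emph{Main obstacle.} The crux is Step~3. In the proof-of-work setting consistency collapses to a single biased one-dimensional walk; here the reach and the relative margin are coupled — the margin can fall below $0$ only in coordination with the behavior of the reach at its reflecting barrier — and the difficulty is to exhibit a single exponential potential that both dominates the failure event and contracts in expectation. Verifying the supermartingale inequality across all regimes of the recursion while keeping the contraction factor a genuine power of $\epsilon$ (rather than leaking further $\epsilon$-factors, or obtaining only a $\sqrt{k}$-type bound as in prior work) is the technical heart of the argument; the exponent's dependence $\epsilon^{3}$ is an artifact of this estimate and is not claimed to be optimal.
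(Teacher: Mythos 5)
Your Steps~1 and~2 track the paper's own route: the reduction through balanced forks and relative margin (Lemma~\ref{lem:main-forks}, Observation~\ref{obs:settlement-balanced-fork}, Fact~\ref{fact:margin-balance}) and the first inequality via a bitwise monotone coupling of $\mathcal{W}$ under $\mathcal{B}_\epsilon$ together with monotonicity of the winning event are exactly how the paper argues (cf.\ Lemma~\ref{lemma:rho-stationary} and the proof of Bound~\ref{bound:geometric}). The divergence, and the problem, is Step~3. The paper does not use a one-step contracting potential; it bounds $\Pr[\mu_x(y)\ge 0]$ by an epoch decomposition and explicit generating functions: the descent/ascent series $\gf{D},\gf{A}$, the dominating epoch series $\gf{\hat M}(Z)=pZ\gf{D}(Z)+qZ\gf{D}(Z)\gf{A}(Z\gf{D}(Z))$, and a radius-of-convergence computation giving $1+\epsilon^3/2+O(\epsilon^4)$; the composition $\gf{A}(Z\gf{D}(Z))$ is precisely where the cubic in $\epsilon$ comes from, and the prefix $x$ is handled by dominance of $\rho(x)$ by the stationary geometric law. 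Your sketch replaces all of this by the assertion that there is an exponential potential $\Phi$ with $\Exp[\Phi(\text{next})\mid\text{current}]\le(1-\delta)\Phi(\text{current})$ uniformly over the state space with $\delta=\Omega(\epsilon^3)$, and that assertion is never established.

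This is a genuine gap, not a presentational one. A single monomial $\Phi=\alpha^\mu\beta^\rho$ provably cannot satisfy the drift condition for small $\epsilon$: writing $\gamma=\alpha\beta$, the interior regime ($\rho>0$, $\mu\neq 0$) forces $p\gamma+q\gamma^{-1}<1$, i.e.\ $\gamma>1$; the sticky regime ($\rho>0$, $\mu=0$, where a $0$-bit leaves $\mu$ unchanged) forces $p\gamma+q\beta^{-1}<1$; and the reflecting regime ($\rho=0$) forces $p\gamma+q\alpha^{-1}<1$. Since $\max(\alpha^{-1},\beta^{-1})\ge\gamma^{-1/2}$, the last two require $p\gamma+q\gamma^{-1/2}<1$, and this function equals $1$ at $\gamma=1$ and is increasing there (its derivative is $p-q/2=(1-3\epsilon)/4>0$), so no $\gamma>1$ works when $\epsilon$ is small. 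Your parenthetical ``nonnegative combination of such monomials'' is therefore carrying the entire theorem, and no candidate is exhibited or verified; the boundary coupling between $\rho$ and $\mu$ is exactly the difficulty the paper's $\gf{A}(Z\gf{D}(Z))$ analysis was built to handle. The natural fallback---exponentiating the paper's additive potential $\rho+\alpha\bar\mu$ with $\alpha=(1+\epsilon)/(2\epsilon)$, as in Appendix~\ref{sec:martingale-proof}---does yield a supermartingale, but its contraction rate is only $\Omega(\epsilon^4)$, which is strictly weaker than the claimed $\exp(-\Omega(\epsilon^3(1-O(\epsilon))k))$ and so does not prove the theorem as stated. (Your observation that Doob's maximal inequality removes the geometric sum over suffix lengths in Corollary~\ref{cor:main} is a nice touch, but it only saves an $O(1)$ factor and does not address the rate.) To close the gap you would either have to construct and verify a multi-term Lyapunov function whose uniform contraction is genuinely $\Omega(\epsilon^3)$---which amounts to re-deriving the principal eigenvalue information that the paper extracts from the radius of convergence of $\gf{\hat L}$---or simply follow the paper's generating-function argument for Bound~\ref{bound:analytic}.
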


By techniques similar to the ones used to prove this result, 
we obtain the following theorem pertaining
directly to $\kSlotCP$ (and $\kCP$).
\begin{theorem}[Main theorem; $\kCP$ version] \label{thm:main-CP} Let
  $\epsilon \in (0,1)$ and $T \in \NN$. Let $w \in \{0,1\}^T$ be a
  random variable satisfying the $\epsilon$-martingale condition.
  Then
  \[
    \Pr[\text{$w$ violates $\kCP$}] 
      \leq \Pr[\text{$w$ violates $\kSlotCP$}] 
      \leq T \cdot \exp\bigl(-\Omega(\epsilon^3 (1 - O(\epsilon))k)\bigr)
      \,.
  \]
\end{theorem}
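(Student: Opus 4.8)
The plan is to deduce this statement from the single-shot estimates underlying Theorem~\ref{thm:main} by a union bound over the $T$ possible points of disagreement. The first inequality requires no work: as noted immediately after Definition~\ref{def:cp-slot}, a $\kCP$ violation always entails a $\kSlotCP$ violation (trimming the last $k$ slots of a chain removes no more blocks than trimming the last $k$ blocks, so the $\kSlotCP$-trimmed prefix contains the $\kCP$-trimmed prefix, and a non-prefix cannot become a prefix by keeping more of it). Hence $\Pr[w\text{ violates }\kCP]\le\Pr[w\text{ violates }\kSlotCP]$, and it suffices to bound the latter.

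For the second inequality, I would first localize a $\kSlotCP$ violation to a single slot. Suppose a fork $F\Fork w$ witnesses a violation through a pair of viable tines $(t_1,t_2)$ with $\ell(t_1)\le\ell(t_2)$ and $t_1\TrimSlot{k}\not\PrefixEq t_2$. Then $t_1$ and $t_2$ first disagree at some slot $s$ with $s\le\ell(t_1)-k$, so that $s+k\le\ell(t_1)\le\ell(t_2)$; fixing this $s$, the pair $(t_1,t_2)$ is exactly the kind of viable divergent configuration governed by the relative-margin quantity of Section~\ref{sec:recursion}, and---via the equivalence of settlement and common-prefix violations established in Section~\ref{sec:cp-forks}---it is controlled by the same event analyzed for $k$-settlement of slot $s$. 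Consequently
\[
  \Pr[w\text{ violates }\kSlotCP]\;\le\;\sum_{s=1}^{T}\Pr\bigl[\text{slot }s\text{ admits a viable divergent pair for }w\text{ of the above form}\bigr].
\]

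Each summand is then bounded by precisely the machinery behind Theorem~\ref{thm:main}: the canonical online adversary of Section~\ref{sec:canonical-forks} simultaneously maximizes the relevant relative margins, so the per-slot event reduces to a single relative-margin random variable being nonnegative; the martingale-to-Bernoulli domination used for Theorem~\ref{thm:main} carries the $\epsilon$-martingale distribution $\mathcal{W}$ to $\mathcal{B}_\epsilon$; and the generating-function tail estimate of Section~\ref{sec:estimates} bounds the resulting Bernoulli probability by $\exp\bigl(-\Omega(\epsilon^3(1-O(\epsilon))k)\bigr)$ (replacing $k$ by $k-O(1)$ if a small amount of slack is needed to reconcile ``viable'' with ``maximal-length'' tines, which affects only the hidden constants). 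Summing the $T$ identical bounds yields $T\cdot\exp\bigl(-\Omega(\epsilon^3(1-O(\epsilon))k)\bigr)$, as claimed.

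The main obstacle is the localization step of the middle paragraph: one must verify that a $\kSlotCP$ violation---whose witnessing tines need only be \emph{viable}, not globally longest, and need not have equal length---still lands inside the fixed-slot relative-margin event that the estimates of Section~\ref{sec:estimates} bound, and that the union is legitimately indexed by the single divergence slot $s$ rather than by a pair of tine endpoints, so that the price of the union bound is a factor of $T$ and not $T^2$. Once this combinatorial bookkeeping is settled, the probabilistic core is inherited essentially verbatim from the proof of Theorem~\ref{thm:main}.
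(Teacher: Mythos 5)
Your proposal is correct and follows essentially the same route as the paper: reduce $\kCP$ to $\kSlotCP$, convert the $\kSlotCP$ violation via Theorem~\ref{thm:cp-fork} and Fact~\ref{fact:margin-balance} into non-negativity of a relative margin $\mu_x(y)$ with $|y|\geq k$, union-bound over the prefix length $|x|$ (which is what yields the factor $T$ rather than $T^2$), and bound each term by the martingale-to-Bernoulli dominance and the generating-function estimate (Corollary~\ref{cor:main}, Bound~\ref{bound:geometric}). The ``localization obstacle'' you flag is exactly what Theorem~\ref{thm:cp-fork} already resolves (viable, unequal-length witness tines are massaged into an $x$-balanced fork there), so no additional bookkeeping---and no appeal to the canonical adversary of Section~\ref{sec:canonical-forks}---is needed.
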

The proofs of these theorems are presented in Section~\ref{sec:thm-proofs}.
Additionally, we provide a $O(k^3)$-time algorithm for 
computing an explicit upper bound on these probabilities; cf. Appendix~\ref{sec:exact-prob}.

\subsection{Survey of the proofs of the main theorems}\label{sec:args-survey}
A central object in our combinatorial analysis is an ``$x$-balanced fork'' 
for a characteristic string $w = xy$. 
Such a fork contains two distinct, %(viable,)
maximum-length tines 
that are disjoint over $y$; 
see Definition~\ref{def:balanced-fork} for details. 
% An $x$-balanced fork is important because 
% In Observation~\ref{obs:settlement-balanced-fork}, we observe that a
A settlement violation for the slot $|x| + 1$ implies an $x$-balanced fork for the string $xy$; 
see Observation~\ref{obs:settlement-balanced-fork}. 
In particular, for any distribution on characteristic strings in $\{0,1\}^n$ and
$s + k \leq n$,
\[
  \Pr_{w}[\text{slot $s$ is not $k$-settled} ] 
    \leq
  % \Pr_{w}\left[\parbox{75mm}{exists a decomposition $w = xyz$ with $|x| = s - 1, |y| \geq k+1, |z| \geq 0$ and an $x$-balanced fork for $xy$}\right] \,.
    \Pr_w\left[\parbox{55mm}{
      there is a decomposition $w = xyz$ and a fork $F \Fork xy$, 
      where $|x| = s - 1$ and $|y| \geq k + 1$, 
      so that $F$ is $x$-balanced
    }\right] 
    \,.
\]
(This is a variant of Lemma~\ref{lemma:settlement-margin} from Section~\ref{sec:thm-proofs}.)

As promised above, common prefix violations can be handled the same
way: we likewise establish (see Section~\ref{sec:cp-forks}; Theorem~\ref{thm:cp-fork}) that a common
prefix violation implies that there exists a balanced fork for some prefix of
$w$.  Specifically, for any distribution of characteristic strings, 
\begin{equation}\label{eq:pr-cp-fork}
  \Pr_w[\text{$w$ violates $\kSlotCP$}] 
    \leq 
    \Pr_{w}\left[\parbox{55mm}{
      there is a decomposition $w = xyz$ and a fork $F \Fork xy$, 
      where $|y| \geq k + 1$, 
      so that $F$ is $x$-balanced
    }\right] 
    \,. 
\end{equation}

Next, in Section~\ref{sec:recursion}, 
we give a recursive expression for the combinatorial quantity 
``relative margin,'' written
$\mu_x(y)$ (see Definition~\ref{def:margin} in Section~\ref{sec:definitions}). 
We establish that,
for an arbitrary decomposition of the characteristic string $w = xy$, 
the event ``there is an $x$-balanced fork for $xy$'' 
is equivalent to the event 
``the relative margin $\mu_x(y)$ is non-negative;'' 
this is Fact~\ref{fact:margin-balance}. 
In Lemma~\ref{lem:relative-margin}, we develop an exact recursive presentation for $\mu_x(y)$; hence we can bound the probability of a common prefix violation
(or a settlement violation) 
by reasoning about the non-negativity of the relative margin 
and, in particular, without reasoning directly about forks. 

In Section~\ref{sec:estimates}, we prove two bounds for the probability 
\[
  \Pr_{\substack{w = xy\\|x| = s}}[\mu_x(y) \geq 0]\,,
\]
for a fixed length $s$.  The first bound pertains to the setting where
$w = xy$ is drawn from $\mathcal{B}_\epsilon$. The second pertains to
any distribution $\mathcal{W}$ satisfying the $\epsilon$-martingale
condition.  For characteristic strings with distribution
$\mathcal{B}_\epsilon$, we identify a random variable which
stochastically dominates $\mu_x(y)$ and is amenable to exact analysis
via generating functions; this yields the bound
\[
  \Pr_{w=xy}[\mu_x(y) \geq 0] \leq \exp(-\Omega(|y|))
  \,.
\]
Notice that this bound does not depend on $s$, the length of $x$.  The
result for distributions satisfying the $\epsilon$-martingale
condition then follows from stochastic dominance
(Lemma~\ref{lemma:rho-stationary}).  See Section~\ref{sec:estimates}
for details.

%; the uncertainty comes
%entirely from the characteristic string.
It immediately follows that 
an $(s,k)$-settlement violation (or a $\kSlotCP$ violation) is a rare event 
for distributions of interest. 
The multiplicative factor $T$ in Theorem~\ref{thm:main-CP} comes from a union bound 
taken over all prefixes of $w$.

\subsection{Comments on the model}
\label{sec:model-comments}

\paragraph{Analysis in the $\Delta$-synchronous setting.} The security
game above most naturally models a blockchain protocol over a
synchronous network with immediate delivery (because each ``honest''
play of the challenger always builds on a fork that contains the fork
generated by previous honest plays). However, the model can be easily
adapted to protocols in the $\Delta$-synchronous model adopted by
the Snow White and Ouroboros Praos protocols and analyses. In
particular, \citet{DBLP:conf/eurocrypt/DavidGKR18} developed a
``$\Delta$-reduction'' mapping on the space of characteristic strings
that permits analyses of forks (and the related statistics of
interest, cf. \S\ref{sec:definitions}) in the $\Delta$-synchronous
setting by a direct appeal to the synchronous setting.

\paragraph{Public leader schedules.} One attractive feature of this
model is that it gives the adversary full information about the future
schedule of leaders. The analysis of some protocols indeed demand this
(e.g., Ouroboros, Snow White). Other protocols---especially those
designed to offer security against adaptive adversaries (Praos,
Genesis)---in fact contrive to keep the leader schedule private. Of
course, as our analysis is in the more difficult ``full information''
model, it applies to all of these systems.

\paragraph{Bootstrapping multi-phase algorithms; stake shift.} We remark that
several existing proof-of-stake blockchain protocols proceed in
phases, each of which is obligated to generate the randomness (for
leader election, say) for the next phase based on the current stake
distribution. The blockchain security properties of each phase are
then individually analyzed---assuming clean randomness---which yields
a recursive security argument; in this context the game outlined above
precisely reflects the single phase analysis.

%%% Local Variables:
%%% mode: latex
%%% TeX-master: "main"
%%% End:

\section{Definitions}
\label{sec:definitions}
We rely on the elementary framework of forks and margin
from~\citet{KRDO17}. We restate and briefly discuss the pertinent
definitions below. With these basic notions behind us, we then define
a new ``relative'' notion of margin, which will allow us to
significantly improve the efficacy of these tools for reasoning about
settlement times.
%In particular, these tools will allow us to reason
%about the possibility that an adversary can produce two alternate
%histories of the blockchain that diverge prior to a particular block.

Recall that for a given execution of the protocol, we record the
result of the leader election process via a \emph{characteristic
  string} $w \in \{0,1\}^T$, defined such that $w_i = 0$ when a unique
and honest party is assigned to slot $i$ and $w_i = 1$ otherwise.
A vertex of a fork is said to be \emph{honest}
  if it is labeled with an index $i$ such that $w_i=0$.

\begin{definition}[Tines, length, and height]
  Let $F \vdash w$ be a fork for a characteristic string.  A
  \emph{tine} of $F$ is a directed path starting from the root. For
  any tine $t$ we define its \emph{length} to be the number of edges
  in the path, and for any vertex $v$ we define its \emph{depth} to be
  the length of the unique tine that ends at $v$. 
  If a tine $t_1$ is a strict prefix of another tine $t_2$, we write $t_1 \Prefix t_2$. 
  Similarly, if $t_1$ is a non-strict prefix of $t_2$, we write $t_1 \PrefixEq t_2$.
  The longest common prefix of two tines $t_1, t_2$ is denoted by $t_1 \Intersect t_2$. 
  That is, $\ell(t_1 \Intersect t_2) = \max\{\ell(u) \SuchThat \text{$u \PrefixEq t_1$ and $u \PrefixEq t_2$} \}$. 
  The \emph{height} of
  a fork (as usual for a tree) is the length of the longest tine,
  denoted $\height(F)$. 
\end{definition}

\begin{definition}[The $\sim_x$ relations]
  For two tines $t_1$ and $t_2$ of a fork $F$, we write $t_1 \sim t_2$
  when $t_1$ and $t_2$ share an edge; otherwise we write
  $t_1 \nsim t_2$. We generalize this equivalence relation to reflect
  whether tines share an edge over a particular suffix of $w$: for
  $w = xy$ we define $t_1 \sim_x t_2$ if $t_1$ and $t_2$ share an edge
  that terminates at some node labeled with an index in $y$;
  otherwise, we write $t_1 \nsim_x t_2$ (observe that in this case the
  paths share no vertex labeled by a slot associated with $y$).  We
  sometimes call such pairs of tines \emph{disjoint} (or, if
  $t_1 \nsim_x t_2$ for a string $w = xy$, \emph{disjoint over
    $y$}). Note that $\sim$ and $\sim_\varepsilon$ are the same
  relation.
\end{definition}

%Informally, $t_1\sim_x t_2$ indicates that when we restrict our view of history to only blocks \emph{after}
%the prefix $x$, $t_1$ and $t_2$ share an edge (and thus agree on at least one block after that point).

The basic structure we use to use to reason about settlement times is
that of a ``balanced fork.''

\begin{definition}[Balanced fork; cf.\ ``flat'' in \cite{KRDO17}]\label{def:balanced-fork} A
  fork $F$ is \emph{balanced} if it contains a pair of tines $t_1$ and
  $t_2$ for which $t_1\nsim t_2$ and
  $\length(t_1)=\length(t_2)=\height(F)$. We define a relative notion
  of balance as follows: a fork $F \vdash xy$ is \emph{$x$-balanced}
  if it contains a pair of tines $t_1$ and $t_2$ for which
  $t_1 \not\sim_x t_2$ and $\length(t_1) = \length(t_2) = \height(F)$.
\end{definition}

Thus, balanced forks contain two completely disjoint, maximum-length
tines, while $x$-balanced forks contain two maximum-length tines that
may share edges in $x$ but must be disjoint over the rest of the
string. 
See Figures~\ref{fig:balanced} and~\ref{fig:x-balanced} 
for examples of balanced forks.
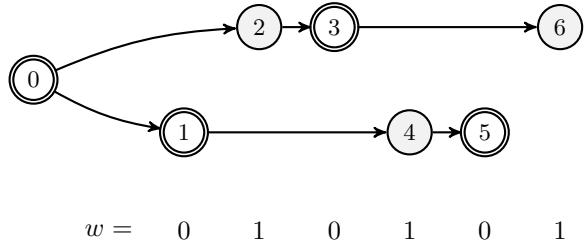
\begin{figure}[ht]
  \centering
  \begin{tikzpicture}[>=stealth', auto, semithick,
    honest/.style={circle,draw=black,thick,text=black,double,font=\small},
   malicious/.style={fill=gray!10,circle,draw=black,thick,text=black,font=\small}]
    \node at (0,-2) {$w =$};
  \node at (1,-2) {$0$}; \node[honest] at (1,-.7) (b1) {$1$};
  \node at (2,-2) {$1$}; \node[malicious] at (2,.7) (a1) {$2$};
  \node at (3,-2) {$0$}; \node[honest] at (3,.7) (a2) {$3$};
  \node at (4,-2) {$1$}; \node[malicious] at (4,-.7) (b2) {$4$};
  \node at (5,-2) {$0$}; \node[honest] at (5,-.7) (b3) {$5$};
  \node at (6,-2) {$1$}; \node[malicious] at (6,.7) (a3) {$6$};
    \node[honest] at (-1,0) (base) {$0$};
  \draw[thick,->] (base) to[bend left=10] (a1);
      \draw[thick,->] (a1) -- (a2);
      \draw[thick,->] (a2) -- (a3);
  \draw[thick,->] (base) to[bend right=10] (b1);
      \draw[thick,->] (b1) -- (b2);
      \draw[thick,->] (b2) -- (b3);
    \end{tikzpicture} 
  \caption{A balanced fork}
  \label{fig:balanced}
\end{figure}

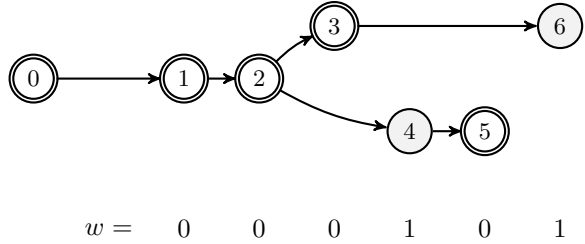
\begin{figure}[ht]
  \centering
  \begin{tikzpicture}[>=stealth', auto, semithick,
    honest/.style={circle,draw=black,thick,text=black,double,font=\small},
   malicious/.style={fill=gray!10,circle,draw=black,thick,text=black,font=\small}]
    \node at (0,-2) {$w =$};
  \node at (1,-2) {$0$}; \node[honest] at (1,0) (ab1) {$1$};
  \node at (2,-2) {$0$}; \node[honest] at (2,0) (ab2) {$2$};
  \node at (3,-2) {$0$}; \node[honest] at (3,.7) (a3) {$3$};
  \node at (4,-2) {$1$}; \node[malicious] at (4,-.7) (b3) {$4$};
  \node at (5,-2) {$0$}; \node[honest] at (5,-.7) (b4) {$5$};
  \node at (6,-2) {$1$}; \node[malicious] at (6,.7) (a4) {$6$};
    \node[honest] at (-1,0) (base) {$0$};
  \draw[thick,->] (base) -- (ab1);
  \draw[thick,->] (ab1) -- (ab2);
  \draw[thick,->] (ab2) to[bend left=10] (a3);
    \draw[thick,->] (a3) -- (a4);
  \draw[thick,->] (ab2) to[bend right=10] (b3);
    \draw[thick,->] (b3) -- (b4);
    \end{tikzpicture} 
  \caption{An $x$-balanced fork, where $x=00$}
  \label{fig:x-balanced}
\end{figure}

\paragraph{Balanced forks and settlement time.}
A fundamental question arising in typical blockchain settings is how
to determine \emph{settlement time}, the delay after which the
contents of a particular block of a blockchain can be considered
stable. The existence of a balanced fork is a precise indicator for
``settlement violations'' in this sense. Specifically, consider a
characteristic string $xy$ and a transaction appearing in a block
associated with the first slot of $y$ (that is, slot $|x| + 1$). One
clear violation of settlement at this point of the execution is the
existence of two chains---each of maximum length---which diverge
\emph{prior to $y$}; in particular, this indicates that there is an
$x$-balanced fork $F$ for $xy$. Let us record this observation below.

\begin{observation}\label{obs:settlement-balanced-fork}
  Let $s, k \in \NN$ be given and 
  let $w$ be a characteristic string. 
  Slot $s$ is not $k$-settled for the characteristic string $w$ 
  if 
  there exist a decomposition $w = xyz$, 
  where $|x| = s - 1$ and $|y| \geq k+1$, 
  and an $x$-balanced fork for $xy$. 
\end{observation}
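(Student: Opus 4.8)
The plan is to unpack the definitions on both sides and exhibit the required pair of diverging maximum-length tines directly from the $x$-balanced fork. First I would fix a decomposition $w = xyz$ with $|x| = s-1$ and $|y| \geq k+1$ together with an $x$-balanced fork $F \Fork xy$; set $t = |xy| = s - 1 + |y| \geq s + k$, so that $F$ is a fork for the length-$t$ prefix $w_1 \ldots w_t$ of $w$, and the constraint $s + k \leq t \leq n$ from Definition~\ref{def:settlement} is met. By the definition of $x$-balance, $F$ contains tines $t_1, t_2$ with $\length(t_1) = \length(t_2) = \height(F)$ and $t_1 \not\sim_x t_2$; in particular $t_1$ and $t_2$ are tines of maximum length in $F$, which is exactly the ``two tines of maximum length'' requirement of Definition~\ref{def:settlement}.

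Next I would verify that these two maximum-length tines ``diverge prior to $s$'' in the sense of Definition~\ref{def:settlement}, i.e., that they do not treat slot $s$ (the first slot of $y$) identically: either they contain different vertices labeled $s$, or exactly one of them contains a vertex labeled $s$. This is where $t_1 \not\sim_x t_2$ is used: since $x = w_1 \ldots w_{s-1}$ and $y$ begins at slot $s$, the relation $t_1 \not\sim_x t_2$ says the two tines share no edge terminating at a vertex whose label lies in $\{s, s+1, \ldots, t\}$, and hence (by the parenthetical remark in the definition of $\sim_x$) they share no vertex labeled by a slot of $y$ at all. In particular, they cannot share a common vertex labeled $s$. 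So either one of them has a vertex labeled $s$ and the other does not, or they have distinct vertices labeled $s$ (or neither has one) --- in every case consistent with ``diverge prior to $s$'' once we also observe that if both had a vertex labeled $s$ those vertices are forced to be distinct. Strictly, the definition's clause ``they either contain different vertices labeled with $s$, or one contains a vertex labeled with $s$ while the other does not'' is precisely the negation of ``both contain the same vertex labeled $s$ (or both omit slot $s$)''; I should make sure the $x$-balanced condition rules out the latter, which requires noting that two maximum-length tines that share no edge over $y$ must in fact already diverge before slot $s$ is reached, so at least one genuinely contains a vertex over the $y$-region and the tines part ways no later than the root of $y$. A short argument here: since $|y| \geq k+1 \geq 1$ the fork has vertices labeled in $y$, and the two maximum-length tines, being disjoint over $y$, cannot both pass through a common slot-$s$ vertex; hence slot $s$ is witnessed as unsettled.

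Finally I would assemble: having produced, inside the fork $F \Fork w_1 \ldots w_t$ with $t \geq s+k$, two tines of maximum length that diverge prior to $s$, Definition~\ref{def:settlement} says slot $s$ is not $k$-settled in $F$; and since a slot is $k$-settled for $w$ only if it is $k$-settled in \emph{every} fork for every sufficiently long prefix, the existence of this one bad fork $F$ shows slot $s$ is not $k$-settled for $w$. That completes the implication. The main obstacle I anticipate is purely definitional bookkeeping: reconciling the ``share no edge over $y$'' formulation of $\not\sim_x$ with the ``different vertices labeled $s$, or one has such a vertex and the other doesn't'' formulation of divergence in Definition~\ref{def:settlement}, and in particular being careful that the edge-based definition of $\sim_x$ does preclude sharing the slot-$s$ vertex (it does, because an honest or adversarial vertex labeled $s$ reached along both tines would force a shared incoming edge only if the preceding vertex agrees --- so the cleanest route is to invoke the parenthetical ``observe that in this case the paths share no vertex labeled by a slot associated with $y$'' directly, rather than re-deriving it). No delicate estimates are needed; this is a one-paragraph unwinding once the quantifiers are lined up.
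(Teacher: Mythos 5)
Your overall route is the intended one: the paper records this Observation without a formal proof (the preceding paragraph treats it as immediate), and the natural argument is exactly yours --- take the two maximum-length tines $t_1 \not\sim_x t_2$ guaranteed by $x$-balance, note $|xy| = s-1+|y| \geq s+k$, and invoke Definition~\ref{def:settlement}, using the fact that disjointness over $y$ means the tines share no vertex labeled by a slot of $y$. The bookkeeping steps (length of the prefix, maximum length $=$ height, ``one bad fork suffices'' because settlement quantifies over all forks) are all handled correctly.

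The gap is the corner case you yourself flagged and then did not actually close. Definition~\ref{def:settlement} glosses ``diverge prior to $s$'' as: the tines contain \emph{different} vertices labeled $s$, or exactly one of them contains a vertex labeled $s$. Your ``short argument'' only establishes that $t_1$ and $t_2$ cannot share a common slot-$s$ vertex; it does not rule out the case in which \emph{neither} tine contains a vertex labeled $s$, and that case genuinely occurs: if $w_s = 1$ both maximum-length tines may simply skip slot $s$ (e.g., $x = 0$, $y = 11$, with the two tines $0\!\to\!1\!\to\!3$ and $0\!\to\!1\!\to\!3'$), and even when $w_s = 0$ both may route around the unique honest slot-$s$ vertex. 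In that situation the exhibited pair does not witness non-settlement under the literal gloss, so the final step of your proof does not go through as written; note that the existence of vertices labeled in $y$ somewhere in the fork is irrelevant to whether these two particular tines contain one at slot $s$. To repair it you must either read ``diverge prior to $s$'' in its intended sense --- the tines' common prefix ends strictly before slot $s$, i.e., they share no vertex labeled $\geq s$, which $t_1 \not\sim_x t_2$ gives immediately --- or, keeping the literal gloss, handle the both-omit-$s$ case by constructing a (possibly different) fork in which two maximum-length tines do disagree at slot $s$; the latter is not automatic and is not supplied. A secondary point worth noting: the paper actually applies the Observation in the converse direction (Lemma~\ref{lemma:settlement-margin} and \S\ref{sec:args-survey} use ``slot $s$ not $k$-settled $\Rightarrow$ an $x$-balanced fork exists''), whereas you proved the direction literally stated; your choice is faithful to the wording, but be aware the two directions are not interchangeable and only the converse is what the subsequent probability bounds rely on.
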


In fact, every $\kSlotCP$ violation produces a balanced fork as well;
see Theorem~\ref{thm:cp-fork} in Section~\ref{sec:cp-forks}.  In
particular, to provide a rigorous $k$-slot settlement
guarantee---which is to say that the transaction can be considered
settled once $k$ slots have gone by---it suffices to show that with
overwhelming probability in choice of the characteristic string
determined by the leader election process (of a full execution of the
protocol), no such forks are possible. Specifically, if the protocol
runs for a total of $T$ time steps yielding the characteristics string
$w = xy$ (where $w \in \{0,1\}^T$ and the transaction of interest
appears in slot $|x| + 1$ as above) then it suffices to ensure that
there is no $x$-balanced fork for $x\hat{y}$, where $\hat{y}$ is an
arbitrary prefix of $y$ of length at least $k + 1$; see
Corollary~\ref{cor:main} in Section~\ref{sec:estimates}.  Note that
for systems adopting the longest chain rule, this condition must
necessarily involve the \emph{entire future dynamics} of the
blockchain. We remark that our analysis below will in fact let us take
$T = \infty$.

\begin{definition}[Closed fork]
A fork $F$ is \emph{closed} if every leaf is honest. For convenience, we say the trivial fork is closed.
\end{definition}

Closed forks have two nice properties that make them especially useful in reasoning about the view of honest parties.
First, a closed fork must have a unique longest tine (since honest parties are aware of all previous honest blocks, and honest
parties observe the longest chain rule). Second, recalling our description of the
settlement game, closed forks intuitively capture decision points for the adversary.
The adversary can potentially show many tines to many honest parties, but once an honest node has been placed on top of 
a tine, any adversarial blocks beneath it are part of the public record and are visible to all honest parties. For these
reasons, we will often find it easier to reason about closed forks than arbitrary forks. % (without loss of generality).

The next few definitions are the start of a general toolkit for reasoning about an adversary's capacity to build highly diverging paths in forks, based on the underlying characteristic string.
%current state of a fork.

%%%%Reach and margin
\begin{definition}[Gap, reserve, and reach]\label{def:gap-reserve-reach}
For a closed fork $F \vdash w$ and its unique longest tine $\hat{t}$, we define the \emph{gap} of a tine $t$ to be $\gap(t)=\length(\hat{t})-\length(t)$.
Furthermore, we define the \emph{reserve} of $t$, denoted $\reserve(t)$, to be the number of adversarial indices in $w$ that appear after the terminating vertex of $t$. More precisely, if $v$ is the last vertex of $t$, then
\[
  \reserve(t)=|\{\ i \mid w_i=1 \ and \ i > \ell(v)\}|\,.
  \]
These quantities together define the \emph{reach} of a tine: $
\reach(t)=\reserve(t)-\gap(t)$.
\end{definition}

The notion of reach can be intuitively understood as a measurement of
the resources available to our adversary in the settlement
game. Reserve tracks the number of slots in which the adversary has
the right to issue new blocks.  When reserve exceeds gap (or
equivalently, when reach is nonnegative), such a tine could be
extended---using a sequence of dishonest blocks---until it is as long
as the longest tine. Such a tine could be offered to an honest player
who would prefer it over, e.g., the current longest tine in the
fork. In contrast, a tine with negative reach is too far behind to be
directly useful to the adversary at that time.

\begin{definition}[Maximum reach]
For a closed fork $F\vdash w$, we define $\rho(F)$ to be the largest reach attained by any tine of $F$, i.e., 
\[
\rho(F)=\underset{t}\max \ \reach(t)\,.
\]
Note that $\rho(F)$ is never negative (as the longest tine of any fork always has reach at least 0). We overload this notation to denote the maximum reach over all forks for a given characteristic string: 
\[
\rho(w)=\underset{\substack{F\vdash w\\\text{$F$ closed}}}\max\big[\underset{t}\max \ \reach(t)\big]\,.
\]
\end{definition}

\begin{definition}[Margin]\label{def:margin}
The \emph{margin} of a fork $F\vdash w$, denoted $\mu(F)$, is defined as 
\begin{equation}\label{eq:margin-absolute}
\mu(F)=\underset{t_1\nsim t_2}\max \bigl(\min\{\reach(t_1),\reach(t_2)\}\bigr)\,,
\end{equation}
where this maximum is extended over all pairs of disjoint tines of
$F$; thus margin reflects the ``second best'' reach obtained over all
disjoint tines. In order to study splits in the chain over particular portions of a
string, we generalize this to define a ``relative'' notion of margin:
If $w = xy$ for two strings $x$ and $y$ and, as above, $F \vdash w$,
we define
\[
  \mu_x(F)=\underset{t_1\nsim_x t_2}\max \bigl(\min\{\reach(t_1),\reach(t_2)\}\bigr)\,.
\]
Note that $\mu_\varepsilon(F) = \mu(F)$.

For convenience, we once again overload this notation to denote the
margin of a string. $\mu(w)$ refers to the maximum value of $\mu(F)$
over all possible closed forks $F$ for a characteristic string $w$:
\[
\mu(w)=\underset{\substack{F\vdash w,\\ \text{$F$ closed}}}\max \, \mu(F)\,.
\]
Likewise, if $w = xy$ for two strings $x$ and $y$ we define
\[
\mu_x(y)=\underset{\substack{F\vdash w,\\ \text{$F$ closed}}} \max \, \mu_x(F)\,.
\]
%(Cf.~\cite{KRDO17}, which defined and studied the ``absolute'' version
%$\mu(\cdot)$ of this quantity of~\eqref{eq:margin-absolute}.)
\end{definition}
Note that, at least informally, ``second-best'' tines are of natural
interest to an adversary intent on the construction of an $x$-balanced fork, 
which involves two (partially disjoint) long tines.

\paragraph{Balanced forks and relative margin.}
\citet{KRDO17} showed that a balanced fork can be constructed for a
given characteristic string $w$ if and only if there exists some
closed $F\vdash w$ such that $\mu(F)\geq 0$.  We record a relative
version of this theorem below, which will ultimately allow us to
extend the analysis of \cite{KRDO17} to more general class of
disagreement and settlement failures.

\begin{fact}\label{fact:margin-balance}
  Let $xy \in \{0,1\}^n$ be a characteristic string. Then there is an
  $x$-balanced fork $F \vdash xy$ if and only if $\mu_x(y) \geq 0$.
\end{fact}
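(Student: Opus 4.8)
The plan is to prove the two implications directly, following the template of \citet{KRDO17} for the absolute version (``a balanced fork for $w$ exists iff $\mu(w)\ge 0$'') and carrying the decomposition $w=xy$ through the bookkeeping. Two elementary fork operations do all the work. First, the \emph{closing} operation: iteratively delete adversarial leaves from a fork until none remain. Since honest vertices are never adversarial leaves, they are never deleted; consequently every vertex lying on a path to an honest vertex survives (it never becomes a leaf), so depths of surviving vertices are unchanged and axioms F1--F4 are preserved. The process terminates and yields a closed fork for the same characteristic string, with height no larger than the original. Second, the \emph{padding} operation: append to the end of a tine $t$ a fresh directed path of adversarial vertices whose labels form an increasing sequence of adversarial slots lying strictly after $\ell(t)$; this again yields a valid fork (F1, F3, F4 untouched, F2 by choice of labels).

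For the ``if'' direction I would start from a closed fork $F\vdash xy$ witnessing $\mu_x(y)\ge 0$, i.e.\ with tines $t_1\nsim_x t_2$ and $\reach(t_i)\ge 0$. Since $F$ is closed its longest tine has length $\height(F)$, so $\reach(t_i)\ge 0$ means $\reserve(t_i)\ge\gap(t_i)=\height(F)-\length(t_i)$: each $t_i$ has at least $\height(F)-\length(t_i)$ adversarial slots available strictly after its terminus. Pad each $t_i$ independently with exactly that many fresh adversarial vertices to obtain $F'\vdash xy$. The two extended tines then have length exactly $\height(F)=\height(F')$ (nothing longer was created), and they remain disjoint over $y$ because the appended edges are brand new, even if a slot label happens to be reused between the two pads. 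Hence $F'$ is $x$-balanced.

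For the ``only if'' direction I would take an $x$-balanced fork $F\vdash xy$ with tines $t_1\nsim_x t_2$ of common length $\height(F)$ and close it to get a closed fork $\tilde F\vdash xy$ with $\height(\tilde F)\le\height(F)$. For $i=1,2$ let $u_i$ be the deepest vertex of $t_i$ surviving in $\tilde F$ and $s_i\PrefixEq t_i$ the prefix of $t_i$ ending at $u_i$; each $s_i$ is a tine of $\tilde F$, and $s_1\nsim_x s_2$ since $s_i\PrefixEq t_i$. Every vertex of $t_i$ strictly below $u_i$ was deleted, hence adversarial, and these labels form an increasing sequence of adversarial slots exceeding $\ell(u_i)$; therefore $\reserve(s_i)\ge\length(t_i)-\length(s_i)=\height(F)-\length(s_i)$. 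Using $\gap(s_i)=\height(\tilde F)-\length(s_i)$ in the closed fork $\tilde F$,
\[
\reach(s_i)=\reserve(s_i)-\gap(s_i)\ge\bigl(\height(F)-\length(s_i)\bigr)-\bigl(\height(\tilde F)-\length(s_i)\bigr)=\height(F)-\height(\tilde F)\ge 0\,.
\]
Thus $\mu_x(\tilde F)\ge\min\{\reach(s_1),\reach(s_2)\}\ge 0$, so $\mu_x(y)\ge\mu_x(\tilde F)\ge 0$.

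I expect the ``only if'' direction to be the only delicate point: one must verify that closing does not destroy the two maximum-length tines beyond use, and the combination $\reserve(s_i)\ge\height(F)-\length(s_i)$ together with $\height(\tilde F)\le\height(F)$ is precisely what keeps the reach nonnegative. The remaining items---re-checking the fork axioms after each modification, and the trivial edge cases (empty $xy$, the root vertex, tines that need no padding)---are routine and will be dispatched quickly.
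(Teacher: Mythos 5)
Your proposal is correct and follows essentially the same route as the paper's own (sketched) proof: for $\mu_x(y)\ge 0\Rightarrow$ balanced, you pad each of the two disjoint nonnegative-reach tines with $\gap(t_i)$ fresh adversarial vertices exactly as the paper does, and for the converse you read off $\reach\ge 0$ from $\gap=0$ and $\reserve\ge 0$. The only difference is that you make explicit the closing step in the converse direction (passing from the given $x$-balanced fork to a closed fork $\tilde F$ and checking $\reach(s_i)\ge \height(F)-\height(\tilde F)\ge 0$), which the paper's sketch elides even though $\mu_x(y)$ and $\reach$ are defined via closed forks; this is a welcome tightening, not a different argument.
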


\begin{proof}
  The proof is immediate from the definitions. We sketch the details for completeness.
  
  Suppose $F$ is an $x$-balanced fork for $xy$. Then $F$ must contain a pair of tines $t_1$ and $t_2$ for which
  $t_1 \not\sim_x t_2$ and $\length(t_1) = \length(t_2) = \height(F)$. We observe that (1) $\gap(t_i)=0$ for both $t_1$ and $t_2$, and (2) reserve is always a nonnegative quantity. Together with the definition of $\reach$, these two facts immediately imply $\reach(t_i) \geq 0$. Because $t_1$ and $t_2$ are edge-disjoint over $y$ and $\min\{\reach(t_1),\reach(t_2)\}\geq0,$ we conclude that $\mu_x(y)\geq 0$, as desired. 

  Suppose $\mu_x(y)\geq 0$. Then there is some closed fork $F$ for $xy$ such that $\mu_x(F)\geq0$. By the definition of
   relative margin, we know that $F$ has two tines $t_1$, $t_2$ such that $t_1\nsim_x t_2$ and 
  $\reach(t_i)\geq0$. Recall that we define reach by $\reach(t)=\reserve(t)-\gap(t)$, and so in this case 
 it follows that $\reserve(t_i) - \gap(t_i)\geq0$. Thus, an $x$-balanced fork $F'\vdash xy$ can be constructed from $F$ by 
 appending a path of $\gap(t_i)$ adversarial vertices to each $t_i$.
\end{proof}

As indicated above, we can define the ``forkability'' of a
characteristic string in terms of its margin.
\begin{definition}[Forkable strings]\label{def:forkable}
  A charactersitic string $w$ is \emph{forkable} if its margin is non-negative, i.e., $\mu(w) \geq 0$.
  Equivalently, $w$ is forkable if there is a balanced fork for $w$.
\end{definition}
Although this definition is not necessary for our presentation, it
reflects the terminology of existing literature.

%%% Local Variables:
%%% mode: latex
%%% TeX-master: "main"
%%% End:

\section{Common prefix violation and balanced forks}
\label{sec:cp-forks}
In this section, we show that a common prefix violation implies the
existence of a balanced fork. This allows us to bound consistency
errors by reasoning about balanced forks.  In particular,
inequality~\eqref{eq:pr-cp-fork} is a direct consequence of the
theorem below.

\begin{theorem}\label{thm:cp-fork}
  Let $k, T \in \NN$.  
  Let $w \in \{0,1\}^T$ be a characteristic string which violates $\kSlotCP$. 
  Then 
  there exist a decomposition $w = xyz$ and a fork $\hat{F} \Fork xy$, 
  where $|y| \geq k + 1$, 
  so that $\hat{F}$ is $x$-balanced. 
  %
  % Then $w$ can be written $w = xyz$, where 
  % % $x,y, z \in \{0,1\}^*$, 
  % $|y| \geq k + 1$, so that 
  % there is an $x$-balanced fork $\hat{F} \Fork xy$. 
\end{theorem}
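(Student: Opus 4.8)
The plan is to translate the $\kSlotCP$ violation into the single inequality $\mu_x(y) \geq 0$ for a suitable decomposition $w = xyz$ with $|y| \geq k+1$, and then invoke Fact~\ref{fact:margin-balance} to extract the promised $x$-balanced fork $\hat{F} \Fork xy$. So the proof is really about locating the right split and certifying a relative margin, with no direct fork surgery at the end.

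First I would unpack the hypothesis. A $\kSlotCP$ violation means there is a fork $F \Fork w$ and a pair of viable tines $t_1, t_2$ with $\ell(t_1) \leq \ell(t_2)$ but $t_1\TrimSlot{k} \not\PrefixEq t_2$. Let $v = t_1 \Intersect t_2$ be their longest common prefix and set $j = \ell(v)$. If we had $j \geq \ell(t_1) - k$, then $t_1\TrimSlot{k}$ (the portion of $t_1$ labelled with slots $\{0,\dots,\ell(t_1)-k\}$) would be a prefix of $v$, hence of $t_2$, contradicting the violation; so $j \leq \ell(t_1) - k - 1$. Now set $x = w_1\cdots w_j$, $m = \ell(t_2)$, $y = w_{j+1}\cdots w_m$, and $z = w_{m+1}\cdots w_{|w|}$. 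Then $w = xyz$ and $|y| = m - j \geq \ell(t_1) - j \geq k+1$, as required.

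By Fact~\ref{fact:margin-balance} it now suffices to show $\mu_x(y) \geq 0$, i.e.\ to exhibit a closed fork $G \Fork xy$ containing two tines $\tau_1 \nsim_x \tau_2$ with $\reach_G(\tau_1) \geq 0$ and $\reach_G(\tau_2) \geq 0$. The natural construction for $G$ starts from $F$: discard every vertex with label exceeding $m$ (this produces a fork for $xy = w_1\cdots w_m$, leaves all depths unchanged, and so keeps the images of $t_1, t_2$ viable and still branching apart at $v$), and then repeatedly delete adversarial leaves to obtain a closed fork, appending adversarial vertices beforehand where necessary so as not to destroy the two distinguished tines when they happen to end at adversarial vertices. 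Let $\tau_1, \tau_2$ be the tines of $G$ coming from $t_1, t_2$. Since they branch apart at $v$ and $\ell(v) = j = |x|$, they share no edge terminating at a vertex labelled in $y$, i.e.\ $\tau_1 \nsim_x \tau_2$.

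The reach estimate $\reach_G(\tau_i) \geq 0$ is the heart of the argument and the step I expect to be the main obstacle: $t_1$ and $t_2$ are viable but need be neither of equal length nor of maximum length in $F$, so one must argue that the adversarial indices lying beyond each $\tau_i$ (its reserve in $xy$) suffice to cover its gap to $\height(G)$. Viability is the lever here. Because $t_2$ is viable and $\ell(t_2) = m$, every honest vertex of $G$ has depth at most $\length(t_2)$, so $\height(G) \leq \length(t_2)$ (the longest tine of a closed fork ends at an honest vertex); this makes $\gap_G(\tau_2)$ nonpositive, hence $\reach_G(\tau_2)\ge 0$, and together with a counting comparison among the honest indices in $[\ell(\tau_1),m]$, their strictly increasing honest depths, and the adversarial indices in that range, it should bound $\gap_G(\tau_1)$ by $\reserve_G(\tau_1)$ as well. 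One may have to choose the cut index $m$, or iterate the leaf-deletion/extension step, with some care to make this accounting close, and to handle cleanly the case where $t_1$ or $t_2$ terminates at an adversarial vertex (where truncating to the last honest vertex perturbs the trimmed tine and so must be revisited). Once $\reach_G(\tau_i)\ge 0$ is established we conclude $\mu_x(y)\ge 0$, and Fact~\ref{fact:margin-balance} yields an $x$-balanced fork $\hat F \Fork xy$ with $|y| \geq k+1$, as claimed. (We note in passing that the factor $T$ appearing in Theorem~\ref{thm:main-CP} arises from a union bound over the at most $T$ possible lengths of the prefix $xy$; it plays no role in the present statement.)
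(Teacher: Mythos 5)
Your reduction to ``find a decomposition with $\mu_x(y)\geq 0$ and invoke Fact~\ref{fact:margin-balance}'' is a perfectly good framing (it is how the paper uses the theorem afterwards), and your observation that the slot divergence of a violating pair is at least $k+1$ is correct. But the step you yourself flag as the heart of the matter---showing $\reach_G(\tau_1)\geq 0$ for the shorter tine---is not merely technically delicate: it is false for the decomposition you chose, because you work with an \emph{arbitrary} violating pair $(t_1,t_2)$ and cut at $j=\ell(t_1\Intersect t_2)$ and $m=\ell(t_2)$. Concretely, let $w=0\,1^{k+1}\,0^{M}$ with $M\geq k+2$, and consider the fork consisting of the root, the honest vertex at slot $1$ (depth $1$), an adversarial chain $b_2\to\cdots\to b_{k+2}$ hanging off the root (depths $1,\dots,k+1$), and the honest vertices at slots $k+3,\dots$ forming a chain above slot $1$ (depths $2,3,\dots$). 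Take $t_1$ to be the adversarial chain ($\length(t_1)=k+1$, $\ell(t_1)=k+2$) and $t_2$ the honest chain; both are viable, $\ell(t_1)\leq\ell(t_2)$, and $t_1\TrimSlot{k}\not\PrefixEq t_2$, so this is a genuine $\kSlotCP$ violation. Here $t_1\Intersect t_2$ is the root, so your decomposition gives $x=\varepsilon$ and $y=w$, and the recursion of Lemma~\ref{lem:relative-margin} gives $\mu(w)<0$: there is \emph{no} balanced fork for $xy$, so no amount of care in the closure/reserve accounting can rescue the argument. (The theorem itself is of course fine: $x=0$, $y=1^{k+1}$ works---but that decomposition is not the one your construction produces.)

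The missing idea is the extremal selection of the witness pair, which is where the paper's proof does its real work: among all violating pairs in all forks, it picks one maximizing the slot divergence and, among those, minimizing $|\ell(t_2)-\ell(t_1)|$ (conditions \eqref{eq:tines} and \eqref{eq:minimality}). Those extremality properties are exactly what rule out the scenario above; they are used to show that the last common vertex $u$ is honest, that the fork prefix $F_x$ has a unique longest tine ending at $u$, that no honest index lies strictly between $\ell(t_1)$ and $\ell(t_2)$ \eqref{eq:no-honest-index}, and that every honest index before $\beta$ has depth at most $\min(\length(t_1),\length(t_2))$ \eqref{eq:honest-depth}. It is these facts---not viability alone---that make both (pinched, cut, possibly trimmed) tines maximal in a fork for $xy$; in my example the honest depths between $\ell(t_1)$ and $\ell(t_2)$ exceed $\length(t_1)$ by far more than the available adversarial reserve, which is precisely the failure mode the extremal choice forbids. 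A secondary, fixable point: the paper also ends $y$ at $\beta-1$, where $\beta$ is the first honest index $\geq\ell(t_2)$, rather than at $\ell(t_2)$; and your plan to ``append adversarial vertices'' so the distinguished tines survive closing is not consistent with closedness (closed forks have only honest leaves)---the standard move is to truncate to the last honest vertex and credit the deleted adversarial labels to the reserve. But the essential gap is the absence of the extremal-pair argument; without it the claimed inequality $\mu_x(y)\geq 0$ simply does not hold for your choice of $x$ and $y$.
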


\newcommand{\Final}[1]{\tilde{#1}}

% \subsection{Proof of Theorem~\ref{thm:cp-fork}}
\begin{proof}
  
  % \newcommand{\Base}{x}

  % The proof closely follows the argument of \cite[Theorem 4.26]{KRDO17}.  
  Recall that $\ell(t)$ is the slot index of the last vertex of tine
  $t$.  Define $A \triangleq \bigcup_{F \Fork w} A_F$ where, for a
  given fork $F \Fork w$, define
  \[
    A_F \triangleq \left\{
      (\tau_1, \tau_2) \SuchThat \parbox{60mm}{       
      $\tau_1, \tau_2$ are two viable tines in the fork $F$, 
      $\ell(\tau_1) \leq \ell(\tau_2)$, and 
      the pair $(\tau_1, \tau_2)$ is a witness to a $\kSlotCP$ violation}
     \right\}
     \,.
  \]
  % Consider the set of tine-pairs $(\tau_1, \tau_2)$ from all forks on $w$, 
  % such that $\ell(\tau_1) \leq \ell(\tau_2)$ and the pair witnesses a $\kSlotCP$ violation. 
  % Let $T$ be the set of all such tine-pairs.
  % \begin{definition}[Slot divergence]    
  % \end{definition}
  Define the \emph{slot divergence} of two tines as 
  $\SlotDivergence(\tau_1, \tau_2) \defeq \ell(\tau_1) - \ell(\tau_1 \Intersect \tau_2)$ 
  where $\tau_1 \Intersect \tau_2$ denotes the common prefix 
  of the tines $\tau_1$ and $\tau_2$. 
  Recalling the definition of a $\kSlotCP$ violation, it is clear that 
  \begin{equation}\label{eq:divergence}
      \SlotDivergence(\tau_1, \tau_2) \geq k + 1 \quad \text{for all } (\tau_1, \tau_2) \in A
      \,.
  \end{equation}
  % For concreteness and simplicity, we assume that the nodes in $F$ 
  % have been suitably labeled so that the following two conditions are met:
  Notice that there must be a tine-pair $(t_1, t_2) \in A$ which satisfies the following two conditions: 
    \begin{equation}\label{eq:tines}
      \SlotDivergence(t_1, t_2) 
      % = \SlotDivergence(w) 
      % = \max_{(\tau_1, \tau_2) \in T} \SlotDivergence(\tau_1, \tau_2) 
      \text{ is maximal over $A$}
      \,, \text{and}
    \end{equation}
  % and
  \begin{equation}\label{eq:minimality}
    | \ell(t_2) - \ell(t_1) | 
    \text{ is minimal among all tine-pairs in $A$ 
      for which~\eqref{eq:tines} holds.}
  \end{equation}
  The tines $t_1, t_2$ will play a special role in our proof; 
  let $F$ be a fork containing these tines. 

  \paragraph{The prefix $x$, fork $F_x$, and vertex $u$.} 
  Let $u$ denote the last vertex on the tine
  $t_1 \cap t_2$, as shown in the diagram below, and let
  $\alpha \triangleq \ell(u) = \ell(t_1 \cap t_2)$. 
  Let $x \triangleq w_1, \ldots, w_\alpha$ 
  and let $F_x$ be the fork-prefix of $F$ supported on $x$. 
  We will argue that $u$ must be honest and, in addition, that 
  $F_x$ must contain a unique longest tine $t_u$ terminating 
  at the vertex $u$. 
  We will also identify a substring $y, |y| \geq k + 1$ 
  such that $w$ can be written as $w = xyz$. 
  Then we will construct a balanced fork $\tilde{F}_y \Fork y$ by 
  modifying the subgraph of $F$ supported on $y$. 
  We will finish the proof by constructing an $x$-balanced fork by 
  suitably appending $\tilde{F}_y$ to $F_x$.
  % and then appealing to Fact~\ref{fact:margin-balance}.
    
  \begin{center}
      \begin{tikzpicture}[>=stealth', auto, semithick,
        unknown/.style={circle,draw=black,thick,font=\small},
        honest/.style={circle,draw=black,thick,double,font=\small},
        malicious/.style={fill=gray!10,circle,draw=black,thick,font=\small}]
        \node[honest] at (0,0) (u) {$u$};
        \node[malicious] at (3,.5)  (z1) {};
        \node[malicious] at (5,-.5)   (z2) {};
        \path (z1) ++(.4,.4) node {$t_1$};
        \path (z2) ++(.4,.4) node {$t_2$};
        \draw[thick,<-] (u) to (-1,0);
        \draw[thick,<-,gray] (z1) to[out=180,in=20] (u);
        \draw[thick,<-,gray] (z2) to[out=180,in=-20] (u);
      \end{tikzpicture}
    \end{center}
  %  Let $\beta$ denote the smallest honest index of $w$ for which
  %  $\beta \geq \ell(t_2) = \max(\ell(t_1), \ell(t_2))$, with the convention that
  %  $\beta = n+1$ if there is no such index.

    \paragraph{$u$ must be an honest vertex.}
    We observe, first of all, that the vertex $u$ cannot be adversarial:
    otherwise it is easy to construct an alternative fork
    $F^\prime \Fork w$ and a pair of tines in $F^\prime$ that violate~\eqref{eq:tines}. 
    Specifically, construct $F^\prime$ from $F$ by
    adding a new (adversarial) vertex $u^\prime$ to $F$ for which
    $\ell(u^\prime) = \ell(u)$, adding an edge to $u^\prime$ from the
    vertex preceding $u$, and replacing the edge of $t_1$ following $u$
    with one from $u^\prime$; then the other relevant properties of the
    fork are maintained, but the slot divergence of the resulting tines has
    increased by at least one. (See the diagram below.)
    \begin{center}
      \begin{tikzpicture}[>=stealth', auto, semithick,
        unknown/.style={circle,draw=black,thick,font=\small},
        honest/.style={circle,draw=black,thick,double,font=\small},
        malicious/.style={fill=gray!10,circle,draw=black,thick,font=\small}]
        \node[malicious] at (2,0) (v) {$u$};
        \node[malicious,dotted] at (2,1) (u) {$u^\prime$};
        \node[unknown] at (4,-.5)  (b1) {};
        \node[unknown] at (4,.5)  (a1) {};
        \node[unknown] at (0,0) (base) {};
        \node at (7,.5) (t1) {$t_1$};
        \node at (7,-.5) (t2) {$t_2$};
        % \node[state,honest] at (3,-1) (bottom) {};
        % \node[state,honest] at (7,1) (top) {$H$};
        \draw[thick,->] (base) -- (v);
        \draw[thick,->] (v) -- (a1);
        \draw[thick,->] (v) -- (b1);
        \draw[thick,->,dotted] (u) -- (a1);
        \draw[thick,->,dotted] (base) -- (u);
        \draw[thick,<-,gray] (t1) to[in=20,out=200] (a1);
        \draw[thick,<-,gray] (t2) to[in=20,out=200] (b1);
        \draw[thick,<-,gray] (base) to (-1,0);
        % \draw[thick,<->] (3,0) -- (7,0) node[pos=.5] {$\gap(f)$};
      \end{tikzpicture}
    \end{center}
    
    \paragraph{$F_x$ has a unique, longest (and honest) tine $t_u$.}
    A similar argument implies that the fork
    $F_x$ has a unique vertex of depth $\depth(u)$: namely, $u$ itself. In
    the presence of another vertex $u^\prime$ (of $F_x$) with depth
    $\depth(u)$, ``redirecting'' $t_1$ through $u^\prime$ (as in the
    argument above) would likewise result in a fork with 
    a larger slot divergence. 
    To see this, notice that $\ell(u^\prime)$ must be strictly less than $\ell(u)$ 
    since $\ell(u)$ is an honest slot (which means $u$ is the only vertex at that slot).
    Thus $\ell(\cdot)$ would indeed be increasing along
    this new tine (resulting from redirecting $t_1$).
    As $\alpha$ is the last index of the string $x$, this additionally
    implies that $F_x$ has no vertices of depth exceeding $\depth(u)$. 
    Let $t_u \in F_x$ be the tine with $\ell(t_u) = \alpha$. 
    \begin{equation}\label{eq:tu}
        \text{The honest tine $t_u$ is the unique longest tine in $F_x$}
        \,.
    \end{equation}

    % Without loss of generality we may assume that $\ell(z_2)$, the
    % honest index labeling $z_2$, is in fact the first honest index in
    % $w$ appearing after $\ell(z_1)$. To justify this, let $\beta$ denote
    % this first honest index of $w$ after $\ell(z_1)$ and let $x$ denote
    % the unique vertex of $F$ for which $\ell(x) = \beta$. Note that
    % $\hdepth(x) > \hdepth(z_1)$. If the tine $t$ ending at $x$ shares an
    % edge with $t_1$ after $u$, then $t$ is disjoint from $t_2$ after $u$
    % and it follows that $\divergence(t,t_2) > \divergence(t_1,t_2)$, a
    % contradiction. Thus $t$ shares no edges with $t_1$ after $u$, and
    % $\length(t) > \length(t_1)$; it follows that
    % $\divergence(t_1,t) = \divergence(t_1,t_2)$ and we may assume
    % $t_2 = t$ in the remainder of the argument.

    \paragraph{Identifying $y$.}
    Let $\beta$ denote the smallest honest index of $w$ for which
    $\beta \geq \ell(t_2)$, with the convention that if there is no such
    index we define $\beta = T + 1$. 
    Observe that $\beta - 1 \geq \ell(t_1)$. 
    (If $\ell(t_2)$ is an honest slot then $\beta = \ell(t_2)$ 
    but $\ell(t_1) < \ell(t_2)$. 
    The case $\ell(t_1) = \ell(t_2)$ is possible if $\ell(t_2)$ is an adversarial slot; 
    but then $\beta > \ell(t_2)$.)
    These indices, $\alpha$ and $\beta$, distinguish the
    substrings $y = w_{\alpha+1} \ldots w_{\beta-1}$ and 
    $z = w_{\beta} \ldots w_T$; 
    we will focus on $y$ in the remainder of the proof. 
    Since the function
    $\ell(\cdot)$ is strictly increasing along any tine, observe that
    \begin{equation*}
        |y| 
        = \beta - \alpha - 1 
        \geq \ell(t_1) - \ell(u) 
        \geq k + 1
        \,.
    \end{equation*}
    Hence $y$ has the desired length and it suffices to establish that it is forkable.
    We can extract from $F$ a balanced fork (for $y$) in
    two steps: (i.) we subject the fork $F$ to some minor
    restructuring to ensure that all ``long'' tines pass through $u$;
    (ii.) we construct a flat fork by treating the vertex $u$ as the
    root of a portion of the subtree of $F$ labeled with the indices of
    $y$. At the conclusion of the construction, the segments of the
    two tines $t_1$ and $t_2$ will yield the required ``long, disjoint, equal-length''
    tines satisfying the definition of a balanced fork.

    \paragraph{Honest indices in $xy$ have low depths.}
    The minimality assumption~\eqref{eq:minimality} implies that any honest
    index $h$ for which $h < \beta$ has depth no more than
    $\min(\length(t_1),\length(t_2))$: specifically,
    \begin{equation}\label{eq:honest-depth}
      h < \beta \quad\Longrightarrow \quad \hdepth(h) \leq \min(\length(t_1), \length(t_2))\,.
    \end{equation}
    To see this, consider an honest index $h,h < \beta$ and a tine $t_h$
    for which $\ell(t_h) = h$. Recall that $t_1$ and $t_2$ are viable and 
    that $h < \ell(t_2)$. (If $\ell(t_2)$ is honest, it is obvious. 
    Otherwise, $h < \ell(t_2) < \beta$ since $\ell(t_2)$ is adversarial.) 
    As $t_2$ is viable, it follows immediately that
    $\hdepth(h)  = \length(t_h) \leq \length(t_2)$. 
    Similarly, if $h \leq \ell(t_1)$
    then $\hdepth(h) \leq \length(t_1)$ since $t_1$ is viable as well. 
    The remaining case, i.e., when $\ell(t_1) < h < \ell(t_2)$, can be ruled out 
    by the argument below.

    \paragraph{There is no honest index between $\ell(t_1)$ and $\ell(t_2)$.}
    We claim that 
    \begin{equation}\label{eq:no-honest-index}
        \text{There is no honest index $h$ satisfying $\ell(t_1) < h < \ell(t_2)$}
        \,.
    \end{equation}
    The claim above is trivially true if $\ell(t_1) = \ell(t_2)$.
    Otherwise, suppose (toward a contradiction) 
    that $h$ is an honest index satisfying $\ell(t_1) < h < \ell(t_2)$. 
    Let $t_h$ be the (honest) tine at slot $h$. 
    The tine-pair $(t_1, t_h)$ may or may not be in $A$. 
    We will show that both cases lead to contradictions.
    \begin{itemize}
      \item If $(t_1, t_h)$ is in $A$ and $\ell(t_1 \Intersect t_h) \leq \ell(u)$, 
      $\SlotDivergence(t_1, t_h)$ is at least $\SlotDivergence(t_1, t_2)$. 
      In fact, due to~\eqref{eq:tines}, this inequality must be an equality. 
      However, the assumption $\ell(t_1) < h < \ell(t_2)$ contradicts~\eqref{eq:minimality}. 

      \item If $(t_1, t_h)$ is in $A$ and $\ell(t_1 \Intersect t_h) > \ell(u)$, 
      it follows that $\SlotDivergence(t_h, t_2) > \SlotDivergence(t_1, t_2)$. 
      As the latter quantity is at least $k + 1$, $(t_h, t_2)$ must be in $A$. 
      The preceding inequality, however, contradicts~\eqref{eq:tines}.

      \item If $(t_1, t_h) \not \in A$, 
      $\SlotDivergence(t_1, t_h)$ is at most $k$.
      As $\SlotDivergence(t_1, t_2)$ is at least $k + 1$, 
      % it follows that $\ell(t_1) - \ell(t_1 \Intersect t_h) > \ell(u)$.
      $t_h$ and $t_1$ must share a vertex after slot $\ell(u)$. 
      Since $\ell(t_1) < h < \ell(t_2)$ by assumption, 
      $\SlotDivergence(t_h, t_2) > \SlotDivergence(t_1, t_2) \geq k + 1$ 
      and, as a result, $(t_h, t_2) \in A$. 
      However, the preceding strict inequality violates condition~\eqref{eq:tines}. 
    \end{itemize}

  \paragraph{A fork $\pinch{u}{F}$ where all long tines go through $u$.}
    In light of the remarks above, we observe that the fork $F$ may be
    ``pinched'' at $u$ to yield an essentially identical fork
    $\pinch{u}{F} \vdash w$ with the exception that all tines of length
    exceeding $\depth(u)$ pass through the vertex $u$. Specifically, the
    fork $\pinch{u}{F} \vdash w$ is defined to be the graph obtained
    from $F$ by changing every edge of $F$ directed towards a vertex of
    depth $\depth(u) + 1$ so that it originates from $u$. To see that
    the resulting tree is a well-defined fork, it suffices to check that
    $\ell(\cdot)$ is still increasing along all tines of
    $\pinch{u}{F}$. For this purpose, consider the effect of this
    pinching on an individual tine $t$ terminating at a particular
    vertex $v$---it is replaced with a tine $\pinch{u}{t}$ defined so
    that:
    \begin{itemize}
    \item If $\length(t) \leq \depth(u)$, the tine $t$ is unchanged:
      $\pinch{u}{t} = t$.
    \item Otherwise, $\length(t) > \depth(u)$ and $t$ has a vertex $v$
      of depth $\depth(u) + 1$; note that $\ell(v) > \ell(u)$ because
      $F_x$ contains no vertices of depth exceeding $\depth(u)$. Then
      $\pinch{u}{t}$ is defined to be the path given by the tine
      terminating at $u$, a (new) edge from $u$ to $v$, and the suffix
      of $t$ beginning at $z$. (As $\ell(v) > \ell(u)$ this has the
      increasing label property.)
    \end{itemize}
    Thus the tree $\pinch{u}{F}$ is a legal fork on the same vertex set;
    note that the depths of vertices in $F$ and $\pinch{u}{F}$ are
    identical.
    
    \paragraph{Constructing a shallow fork $F_y \Fork y$.}
    By excising the tree rooted at $u$ from this pinched fork
    $\pinch{u}{F}$, we may extract a fork for the string
    $w_{\alpha+1} \dots w_T$. Specifically, consider the induced
    subgraph $\cut{u}{F}$ of $\pinch{u}{F}$ given by the vertices
    $\{u\} \cup \{ v \mid \depth(v) > \depth(u)\}$. By treating $u$ as a
    root vertex and suitably defining the labels $\cut{u}{\ell}$ of
    $\cut{u}{F}$ so that $\cut{u}{\ell}(v) = \ell(v) - \ell(u)$, this
    subgraph has the defining properties of a fork for
    $w_{\alpha+1} \ldots w_T$. In particular, considering that
    $\alpha$ is honest it follows that each honest index $h > \alpha$
    has depth $\hdepth(h) > \length(u)$ and hence $h$ labels a vertex in
    $\cut{u}{F}$.  For a tine $t$ of $\pinch{u}{F}$, we let $\cut{u}{t}$
    denote the suffix of this tine beginning at $u$, which forms a tine
    in $\cut{u}{F}$. (If $\length(t) \leq \depth(u)$, we define
    $\cut{u}{t}$ to consist solely of the vertex $u$.)  Note that
    $\cut{u}{t_1}$ and $\cut{u}{t_2}$ share no edges in the fork
    $\cut{u}{F}$.
    
    Finally, let $F_y$ denote the subtree obtained from $\cut{u}{F}$
    as the union of all tines $\cut{u}{t}$ of $\cut{u}{F}$ so that all
    labels of $\cut{u}{t}$ are drawn from $y$ (as it appears as a prefix
    of $w_{\alpha+1} \ldots w_T$), and
    \begin{equation}\label{eq:tines-Fy}
      \length(\cut{u}{t}) \leq \max_{\substack{h \leq |y|\\ \text{$h$ honest} } } \hdepth(h)
      \,.
    \end{equation}
    It is immediate that $F_y \vdash y$. 
    
    \paragraph{Two longest viable tines in $F_y$.}
    Consider the tines $\cut{u}{t_1}$ and $\cut{u}{t_2}$. As mentioned
    above, they share no edges in $\cut{u}{F}$ and hence the prefixes
    $\check{t_1}$ and $\check{t_2}$ (of $\cut{u}{t_1}$ and
    $\cut{u}{t_2}$) appearing in $F_y$ share no edges. 
    % By~\eqref{eq:tines-Fy}, 
    % the lengths of $\check{t}_1, \check{t}_2 \in F_y$ are 
    % at most $\hdepth(h)$ where $h$ is the largest honest index in $y$.
    We wish to
    show that these prefixes have the maximal length in $F_y$, making $F_y$ balanced, as desired. 
    Let $h$ be the largest honest index in $y$. 
    Since the lengths of the tines in $F_y$ 
    are at most $\hdepth(h)$, 
    it suffices to show that the lengths of 
    $\check{t}_i, i \in \{1,2\}$ is at least $\hdepth(h)$. 

    This is immediate for the tine
    $\check{t}_1$ since all labels of $\cut{u}{t_1}$ are drawn from
    $y$ and, considering~\eqref{eq:honest-depth}, its depth is
    at least that of all relevant honest vertices. 
    As for $\check{t_2}$,
    observe that if $\ell(t_2)$ is not honest then $\beta > \ell(t_2)$
    so that, as with $\check{t}_1$, the tine $\check{t}_2$ is labeled by
    $y$ so that the same argument, relying
    on~\eqref{eq:honest-depth}, ensures that the $\length(\check{t}_2)$ 
    is at least 
    the depth of all relevant honest vertices. 
    If $\ell(t_2)$ is
    honest, $\beta = \ell(t_2)$, and the terminal vertex of
    $\cut{u}{t_2}$ does not appear in $F_y$ (as $\ell(\cut{u}{t_2})$ falls outside 
    $y$). In this case, however,
    $\length(\cut{u}{t_2}) > \hdepth(h)$ for any honest index $h$ of
    $y$. 
    It follows that
    $\length(\check{t_2})$, which equals $\length(\cut{u}{t_2}) - 1$, 
    is at least the
    depth of any honest index of $y$, as desired. 
    Thus we have proved
    \begin{equation}\label{eq:two-long-tines}
        \text{$\check{t}_1$ and $\check{t}_2$ are 
        two maximally long viable tines in $F_y \Fork y$}
        \,.
    \end{equation}

    \paragraph{Constructing a flat fork $\tilde{F}_y \Fork y$.}    
    Let us identify the fork prefix $\tilde{F}_y \ForkPrefix F_y$ which 
    is either identical to $F_y$ or differs from $F_y$ 
    in only one of the tines $\check{t}_1, \check{t}_2$. 
    In particular, if $\length(\check{t}_1) = \length(\check{t}_2)$, we set $\tilde{F}_y = F_y$. 
    Otherwise, let $\check{t}_a$ be the longer of the two tines $\check{t}_1, \check{t}_2$; 
    let $\check{t}_b$ be the shorter one. 
    We modify $F_y$ by deleting some trailing adversarial nodes from $\check{t}_a$ 
    until it has the same length as $\check{t}_b$; 
    we set $\tilde{F}_y$ as the resulting fork 
    and, in addition, 
    set $\tilde{t}_b = \check{t}_b$ and 
    $\tilde{t}_a$ as the tine after trimming $\check{t}_a$. 
    
    We claim that $\tilde{F}_y$ is balanced. 
    The claim is obvious if $\length(\check{t}_1) = \length(\check{t}_2)$.
    Otherwise, thanks to~\eqref{eq:two-long-tines}, 
    it remains to show that the longer tine, $\check{t}_a$, 
    has sufficiently many trailing adversarial nodes which, 
    if deleted, yields $\length(\Final{t}_1) = \length(\Final{t}_2)$. 
    To that end, let $h_i$ be the index of the last honest vertex 
    on $\check{t}_i \in F_y, i \in \{1,2\}$. 
  %   Since $t_1, t_2$ were viable tines in $F$, it follows that $\length(\check{t_1}) \geq \hdepth(h_2)$ and 
  %   $\length(\check{t_2}) \geq \hdepth(h_1)$.
    
    Suppose $\length(\check{t}_2) > \length(\check{t}_1)$. 
    By~\eqref{eq:no-honest-index}, we also have $\length(\check{t}_1) \geq \hdepth(h_2)$  
    and hence we can trim some of the trailing adversarial nodes from $\check{t}_2$ 
    to get the tine $\Final{t}_2$ 
    whose length is the same as that of $\check{t}_1$. 
    Otherwise, suppose $\length(\check{t}_1) > \length(\check{t}_2)$. 
    Since $t_2$ is a viable tine in $F$, we also have $\length(\check{t}_2) \geq \hdepth(h_1)$. 
    Thus we can trim some of the trailing adversarial nodes from $\check{t}_1$
    to have a tine $\Final{t}_1$ 
    whose length is the same as that of $\check{t}_2$. 
    In any case, the quantity $\min(\length(\Final{t}_1), \length(\Final{t}_2))$ 
    remains the same as $\min(\length(\check{t}_1), \length(\check{t}_2))$. 
    Thus the fork $\tilde{F}_y$ has at least two tines, $\Final{t}_1$ and $\Final{t}_2$, that achieve the maximum length of all tines in $\tilde{F}_y$; hence $\tilde{F}_y$ is balanced.

    \paragraph{An $x$-balanced fork $\hat{F} \ForkPrefix F$.} 
    Let us identify the root of the fork $\tilde{F}_y$ with the vertex $u$ of $F_x$ and 
    let $\hat{F}$ be the resulting graph (after ``gluing'' the root of $\tilde{F}_y$ to $u$). 
    By~\eqref{eq:tu}, it is easy to see that the fork 
    $\hat{F} \ForkPrefix F$ 
    is indeed a valid fork on the string $x y$. 
    Moreover, $\hat{F}$ is $x$-balanced since $\tilde{F}_y$ is balanced. 
    The claim in Theorem~\ref{thm:cp-fork} follows immediately since $|y| \geq k + 1$.
  \end{proof}

  % \end{proof}
%%% Local Variables:
%%% mode: latex
%%% TeX-master: "main"
%%% End:

\section{A simple recursive formulation of relative margin}
\label{sec:recursion}
A significant finding of~\citet{KRDO17} is that the margin of a
characteristic string $\mu(w)$---the maximum value of a quantity taken
over a (typically) exponentially-large family of forks---can be given
a simple, mutually recursive formulation with the associated quantity
of reach $\rho(w)$. Specifically, they prove the following lemma.

%%% original lemma
\begin{lemma}[{\cite[Lemma~4.19]{KRDO17}}]\label{lem:margin} 
  $\rho(\varepsilon) = 0$ 
  where $\varepsilon$ is the empty string, and, for all nonempty strings $w\in\{0,1\}^*$,
  \begin{equation}
    \rho(w1) = \rho(w)+1\,, \qquad\text{and}\qquad
    \rho(w0) = \begin{cases} 0 & \text{if $\rho(w) = 0$,}\\
      \rho(w)-1 & \text{otherwise.}
    \end{cases}
		\label{eq:rho-recursive}
  \end{equation}
  Furthermore, margin satisfies the mutually recursive relationship
  $\mu(\varepsilon) = 0$ and for all $w \in \{0,1\}^*$,
  \begin{equation}
    \mu(w1) = \mu(w)+1\,,\qquad\text{and}\qquad
    \mu(w0) = \begin{cases}
      0 & \text{if $\rho(w)>\mu(w)=0$,} \\
%      \mu(w)-1 & \text{if $\rho(w)=0$,} \\
      \mu(w)-1 & \text{otherwise.}
    \end{cases}
		\label{eq:mu-recursive}
  \end{equation}
  Additionally, there exists a closed fork $F\vdash w$ such that
  $\rho(F)=\rho(w)$ and $\mu(F)=\mu(w)$.
  %(It is convenient to separate the case $\rho(w) = 0$ from the other case which also yields $\mu(w) - 1$ in the proof, so we reflect that in the statement of the theorem.)
\end{lemma}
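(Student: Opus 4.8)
The plan is to establish all three assertions — the recursion for $\rho$, the recursion for $\mu$, and the existence of a single closed fork $F\vdash w$ with $\rho(F)=\rho(w)$ and $\mu(F)=\mu(w)$ — simultaneously, by induction on $|w|$; the base case $w=\varepsilon$ is the trivial one-vertex fork, for which $\rho=\mu=0$ and joint optimality is vacuous. Two observations will be used throughout. First, $\reserve(t)$ is a function of $\ell(t)$ and $w$ only, not of the fork's structure; hence deleting adversarial leaves leaves every reserve unchanged and can only (weakly) decrease the height, so it can only (weakly) increase the reach of every surviving tine. In particular one may always ``close up'' a fork for $w$ by repeatedly deleting adversarial leaves, producing a closed fork without decreasing the reach values of interest. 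Second, in a closed fork the longest tine is unique and terminates at an honest vertex — indeed at the honest vertex bearing the largest honest index — since honest depths are strictly increasing by axiom (F4).

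For the inductive step I split on the last symbol. Appending a $1$ is immediate: the last slot $|w|+1$ of $w1$ is adversarial and maximal, so a closed fork for $w1$ can have no vertex there (it would be an adversarial leaf), whence closed forks for $w1$ and for $w$ coincide; re-reading such a fork over $w1$ raises $\reserve(t)$ by exactly one for every tine and leaves every $\gap$ fixed, so every reach — hence $\rho$, $\mu$, and the joint witness — rises by exactly one. Appending a $0$ is the substantive case: now $|w|+1$ is honest, so every closed $G\vdash w0$ contains a unique vertex $v_0$ at slot $|w|+1$, and axiom (F4) forces $\depth(v_0)$ to exceed all other honest depths, which (with the second observation) makes $v_0$ a leaf hung on a longest tine of the underlying fork on $w$ and the endpoint of $G$'s unique longest tine. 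The reach recursion is then bookkeeping: extending the inductive joint-optimal fork $F\vdash w$ by hanging $v_0$ on its longest tine gives a closed fork for $w0$ in which every old tine's gap grows by one and reserve is unchanged, so every old reach drops by one while the $v_0$-tine has reach $0$; this shows $\rho(w0)\ge\max(\rho(w)-1,0)$. The reverse inequality comes from taking an optimal closed $G\vdash w0$, deleting $v_0$ (which drops the height by at least one, since $v_0$ was the unique longest endpoint) and closing up: every surviving reach rises by at least one, and a reach-$\rho(w0)$ tine can be taken to survive, so $\rho(w)\ge\rho(w0)+1$ whenever $\rho(w0)>0$; the case $\rho(w0)=0$ is read off the local picture.

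The margin recursion — in particular its exceptional clause $\mu(w0)=0$ when $\rho(w)>\mu(w)=0$ — together with maintenance of joint optimality, is where I expect the real work. Via the same two operations (extend by $v_0$; delete $v_0$ and close up), the generic outcome is $\mu(w0)=\mu(w)-1$: a disjoint pair of tines witnessing $\mu(w)$ becomes, after hanging $v_0$ on the longest tine, a disjoint pair with min-reach one smaller, and conversely any disjoint pair over $w0$ maps, after deleting $v_0$ and closing up, to a disjoint pair over $w$ with min-reach at least as large (so $\mu(w0)\le\mu(w)$ always). The subtlety is the exceptional regime $\rho(w)>\mu(w)=0$: here the inductive fork has a reach-$\ge 1$ tine, and — this is the point the hypothesis must supply — one that is \emph{disjoint from the longest tine}; after hanging $v_0$ on the longest tine this tine survives at reach $\ge 0$ and stays disjoint from the $v_0$-tine (reach $0$), so $\mu(w0)$ is pushed back up to $0$ rather than down to $-1$. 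I expect to need a slightly strengthened inductive hypothesis here — asserting not merely that some closed fork attains $\rho(w)$ and $\mu(w)$, but that one does so with a reach-$\rho(w)$ tine disjoint from its longest tine whenever $\mu(w)=0<\rho(w)$ — and one must check both that this configuration is genuinely available in that regime and that one cannot beat $0$ there (the latter from the ``delete and close up'' map). The same strengthened fork, possibly after appending a single adversarial vertex to a reach-$\rho(w0)$ tine to serve as the partner in the $\mu$-witness, furnishes the joint optimum for $w0$, closing the induction. Pinning down this exceptional case, and the precise strengthened invariant that propagates it, is the main obstacle.
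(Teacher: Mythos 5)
Your skeleton (induction on $|w|$, the easy $w1$ step, and the two moves ``hang the new honest vertex on the longest tine'' / ``delete it and close up'') is the right general shape, and your handling of the $\rho$ recursion and of the generic bound $\mu(w0)\ge\mu(w)-1$ is sound. But the obstacle you yourself flagged is fatal as proposed: the strengthened invariant --- that whenever $\mu(w)=0<\rho(w)$ some jointly optimal closed fork contains a tine of reach $\rho(w)$ (or even reach $\ge 1$) edge-disjoint from its longest tine --- is simply false. Take $w=0011$: the recursion gives $\rho(w)=2$ and $\mu(w)=0$, so this is the exceptional regime, yet the \emph{only} closed fork for $0011$ is the single chain $0\to 1\to 2$ (the adversarial slots $3,4$ follow the last honest slot, so a closed fork can contain no vertex labelled $3$ or $4$). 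In that chain every positive-reach tine is a prefix of the longest tine, and the only tine disjoint from it is the trivial root tine, of reach $0$. Consequently your extension move cannot realize $\mu(w0)=0$: hanging the slot-$5$ vertex on top of the chain yields the chain $0\to1\to2\to 5$, whose margin is $-1$, whereas $\mu(00110)=0$ is witnessed only by a fork with a \emph{second branch} $0\to 3\to 4\to 5$ built out of adversarial vertices. In other words, the new honest vertex must in general be attached at the end of a freshly created adversarial path, not onto the existing longest tine, and no invariant about the already-built closed fork for $w$ can substitute for this.

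This is exactly what the notion of \emph{conservative extension} (Definition~\ref{def:extension}) and the strategy of Section~\ref{sec:strategy-x} provide in the paper's proof of the generalization (Lemma~\ref{lem:relative-margin}, which specializes to the present lemma at $x=\varepsilon$; the lemma itself is cited from~\cite{KRDO17}, whose argument is the same in structure): one extends a \emph{zero-reach} tine $s$, spending its $\reserve(s)=\gap(s)$ adversarial slots to place $\gap(s)$ new adversarial vertices and then the honest vertex at depth $\height(F)+1$, and one chooses $s$ to be the zero-reach tine diverging earliest from a maximum-reach tine $t_\rho$. In the exceptional regime the margin witness itself has reach $0$, so $s$ diverges from $t_\rho$ at least as early, and the new honest tine (reach $0$) is disjoint from $t_\rho$ (reach $\rho(w)-1\ge 0$), which is what pushes $\mu(w0)$ back to $0$ (Proposition~\ref{prop:muxy0-lowerbound-adv}, Case~1); the invariant that actually propagates concerns the witness pair containing a maximal-reach tine, not disjointness from the longest tine. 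A secondary gap: your delete-and-close-up map by itself does not give the upper bound $\mu(w0)\le\mu(w)-1$ in the case $\rho(w)=\mu(w)=0$ (it only excludes $\mu(w0)\ge 1$); to rule out $\mu(w0)=0$ there you must also use $\rho(w)=0$, since the surviving partner tine would have reach $\ge 1$ in a closed fork for $w$. Finally, note that appending an adversarial vertex to serve as a margin partner, as in your last step, leaves the fork non-closed; the closed witness fork must be produced by the conservative extension itself, as in the paper.
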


We prove an analogous recursive statement for relative margin, recorded below.

\begin{lemma}[Relative margin]\label{lem:relative-margin}
  Given a fixed string $x\in\{0,1\}\text{\emph{*}}$,
  $\mu_x(\varepsilon) =\rho(x)$ 
  where $\varepsilon$ is the empty string, and, for all nonempty strings $w=xy\in\{0,1\}\text{\emph{*}},$
  \begin{equation}
    \mu_x(y1)= \mu_x(y)+1\,,\qquad\text{and}\qquad
    \mu_x(y0)= \begin{cases}
      0 & \text{if } \rho(xy) > \mu_x(y)=0\,, \\
%      \mu_x(y)-1 &  \text{if } \rho(xy)=0\,, \\
      \mu_x(y)-1 & \text{otherwise.}
    \end{cases}
		\label{eq:mu-relative-recursive}
  \end{equation}
  Additionally, there exists a closed fork $F\vdash xy$ such that
  $\rho(F)=\rho(xy)$ and $\mu_x(F)=\mu_x(y)$.
  %(It is convenient to
  %separate the case $\rho(w) = 0$ from the other case which also
  %yields $\mu(w) - 1$ in the proof, so we reflect that in the
  %statement of the lemma.)
\end{lemma}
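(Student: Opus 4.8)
The plan is to mirror the proof of Lemma~\ref{lem:margin}, carrying it out as a single induction on $|y|$ (with $x$ held fixed) that \emph{simultaneously} (a) verifies the stated recurrence for $\mu_x(y)$ and (b) produces a ``universal'' closed fork $F^{\star}\vdash xy$ realizing both $\rho(F^{\star})=\rho(xy)$ and $\mu_x(F^{\star})=\mu_x(y)$. Note that $\rho(xy)\ge\mu_x(y)$ always, since in a fork realizing $\mu_x(y)$ the minimum of two reaches is at most the maximum reach. Throughout we use freely the $\rho$-recurrence of Lemma~\ref{lem:margin} and the fact that a closed fork has a unique longest tine, which is honest; the latter forces the height of a closed fork to equal its largest honest depth $\max_{h\text{ honest}}\hdepth(h)$. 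For the base case $y=\varepsilon$ (so $w=x$): no edge of a fork for $x$ lies over the empty suffix, so \emph{every} ordered pair of tines---including a tine with itself---is disjoint over $y$; hence $\mu_x(F)=\max_t\reach(t)=\rho(F)$ for each closed $F\vdash x$, and maximizing over $F$ gives $\mu_x(\varepsilon)=\rho(x)$, with any closed fork attaining $\rho(x)$ serving as $F^{\star}$.

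For the inductive step on a trailing $1$: since slot $n+1$ (where $n=|xy|$) is adversarial, $F^{\star}$ is \emph{already} a valid closed fork for $xy1$, and passing to the longer string raises the reserve of every tine by exactly one while changing no length and no height; thus every reach, and hence both $\rho$ and $\mu_x$, increases by one, giving $\mu_x(y1)\ge\mu_x(y)+1$ together with the universal fork for $xy1$. The matching inequality $\mu_x(y1)\le\mu_x(y)+1$ is obtained as in \cite{KRDO17}: starting from an optimal closed fork for $xy1$, delete its slot-$(n+1)$ vertices (necessarily leaves, $n+1$ being the largest label) and restore closure by pruning any newly exposed adversarial leaves; one checks that this lowers the relevant reaches by at most one and never destroys disjointness over $y$, so the resulting closed fork for $xy$ certifies $\mu_x(y)\ge\mu_x(y1)-1$.

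The inductive step on a trailing $0$ is the crux. Now slot $n+1$ must label a single honest vertex $v$, which---because a closed fork's height equals its largest honest depth---must sit at depth exactly one more than the old height, atop a longest tine; hence the new height is one more than the old. Attaching $v$ to the unique longest tine of $F^{\star}$ increases every inherited tine's gap by one and leaves its reserve untouched, so each inherited reach drops by one, while the extended longest tine has reach $0$; disjointness over $y$ is preserved for inherited tine pairs, giving $\mu_x(y0)\ge\mu_x(y)-1$. In the exceptional regime $\mu_x(y)=0<\rho(xy)$ one wants the stronger bound $\mu_x(y0)\ge 0$: here one first \emph{reshapes} $F^{\star}$ so that some tine of reach at least $1$ is disjoint over $y$ from the (unique) longest tine, and then pairs that tine with the extended longest tine (whose reach is $0$) to witness a nonnegative relative margin for $xy0$. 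For the matching upper bounds, take an optimal closed fork for $xy0$, delete its unique slot-$(n+1)$ vertex, restore closure by pruning, and bound the relative margin (and reach) of the resulting closed fork for $xy$; this recovers $\mu_x(y0)\le 0$ in the exceptional regime and $\mu_x(y0)\le\mu_x(y)-1$ otherwise, and also yields the universal fork for $xy0$.

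I expect the honest ($0$) case to be the main obstacle: propagating \emph{both} invariants through it---producing one fork that at once realizes $\rho(xy0)$ and $\mu_x(y0)$---and in particular engineering the ``boost to $0$'' by the reshaping of $F^{\star}$ that keeps a reach-$\ge 1$ tine off the longest tine. The converse directions' dependence on a ``pruning adversarial leaves does not decrease relative margin'' sublemma, localized to the suffix $y$, is the other delicate point; the remaining bookkeeping with $\gap$, $\reserve$, and $\reach$ is routine and is deferred to Section~\ref{sec:margin-proof}.
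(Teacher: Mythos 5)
Your overall architecture matches the paper's: a single induction on $|y|$ carrying the invariant that one closed fork simultaneously attains $\rho(xy)$ and $\mu_x(y)$, the base case via ``every pair of tines, including a tine with itself, is disjoint over $\varepsilon$,'' the trivial adversarial-slot step, a lower bound for the honest step by a conservative extension, and an upper bound by passing from an optimal closed fork for $xy0$ back to a closed fork for $xy$ (your ``delete and prune'' is in substance the paper's use of the closed fork prefix together with its two claims about how reach changes under an extension). The genuine gap is exactly at the step you flag as the crux, and your proposed fix for it does not work. You assert that the new honest vertex ``must sit at depth exactly one more than the old height, atop a longest tine,'' and accordingly, in the regime $\rho(xy)>\mu_x(y)=0$, you plan to \emph{reshape} $F^{\star}$ so that some tine of reach at least $1$ is disjoint over $y$ from the unique longest tine, and then extend that longest tine. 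The structural claim is false: axiom A4 only forces the new honest vertex to have depth exceeding all previous honest depths (which equals $\height(F)$ for a closed fork), and the adversary may first graft adversarial vertices onto a \emph{shorter} tine and place the honest vertex on top of those. Moreover the reshaping you need is impossible in general. Take $x=\varepsilon$, $y=01$: then $\rho(01)=1>\mu(01)=0$, the unique closed fork for $01$ is the single chain root$\to v_1$ (any slot-$2$ vertex would be an adversarial leaf, and $v_1$ is the unique slot-$1$ vertex), and the only tine of reach $\geq 1$ in it is the longest tine itself, which is not disjoint from itself over $y$. So no closed fork for $xy$ has the property your reshaping demands; yet $\mu(010)=0$, so the bound you are trying to certify is true. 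The same phenomenon occurs with nonempty $x$, e.g.\ $x=0$, $y=01$, where the unique closed fork for $001$ is the chain root$\to v_1\to v_2$.

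The correct move, which is the paper's, is not to extend the longest tine but to conservatively extend a \emph{zero-reach} tine that is disjoint from the maximal-reach witness over the suffix: since $\mu_x(y)=0$, the witness pair $(t_\rho,t_x)$ supplies a zero-reach tine $t_x$ with $\ell(t_x\Intersect t_\rho)\leq |x|$; one pads $t_x$ (in the paper, the zero-reach tine diverging earliest from $t_\rho$) with exactly $\gap(t_x)$ adversarial vertices up to height $\height(F)$ and places the honest slot-$(|xy|+1)$ vertex on top, all inside one closed fork for $xy0$. The resulting honest tine $\sigma$ has reach $0$, the tine $t_\rho$ has reach $\rho(xy)-1\geq 0$, and the two are disjoint over $y0$, giving $\mu_x(y0)\geq 0$; in the example above this produces root$\to v_1$ together with root$\to a_2\to h_3$. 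In other words, the ``boost to $0$'' is obtained by choosing \emph{which tine receives the honest vertex} (spending that tine's reserve to reach full height), not by rearranging the fork so that the longest tine avoids a high-reach tine; with this substitution, and with the upper-bound direction handled as you sketch (taking care that the optimal fork for $xy0$ need not be a conservative extension, i.e.\ the honest vertex may sit deeper than $\height(F)+1$), your induction goes through and coincides with the argument in Section~\ref{sec:margin-proof}.
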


We delay the proof of Lemma~\ref{lem:relative-margin} to
Section~\ref{sec:margin-proof}, preferring to immediately focus on the
application to settlement times in Section~\ref{sec:estimates}.

\paragraph{Discussion.} The proof of Lemma~\ref{lem:relative-margin}
shares many technical similarities with the proof of
Lemma~\ref{lem:margin} given by~\citet{KRDO17}. However, there is an
important respect in which the proofs differ. Each of the proofs
requires the definition of a particular adversary (which, in effect,
constructs a fork achieving the worst case reach and margin guaranteed
by the lemma). The adversary constructed by~\cite{KRDO17} can create a
balanced fork for $w$ whenever $\mu(w) \geq 0$ (i.e., $w$ is
``forkable''). However, the adversary only focuses on the problem of
producing disjoint tines over the \emph{entire string} $w$ (consistent
with the definition of $\mu(\cdot)$). The ``optimal online adversary,''
developed in Section~\ref{sec:canonical-forks},
% developed during the proof of Lemma~\ref{lem:relative-margin},
%, in
%contrast,
uses a more sophisticated rule for extending chains (tines) of the
fork. 
Notably, this adversary can \emph{simultaneously maximize relative margin
  over all prefixes of the string}. 

% This adversary is a purist: if $w$ is not forkable, he gives up. As an
% illuminating example, consider the characteristic string
% $w=0^n0^k1^k0^k$, for $n\gg k$. The purist adversary will produce a
% single tine containing only the honest nodes, but we can imagine that
% a different adversary could notice that it is possible to build a fork
% with divergence of $2k$, and act differently than the purist
% adversary.
%
% Here, we develop and optimal online adversary
% that makes decisions with the goal of maximizing divergence.
% Crucially, if $w$ is forkable, we want our new adversary to still successfully fork $w$.

%%% Local Variables:
%%% mode: latex
%%% TeX-master: "main"
%%% End:

\section{General settlement guarantees and proof of main theorems}
\label{sec:estimates}
With the recursive formulation for relative margin in hand, 
we study the stochastic process that arises when the
characteristic string $w$ is chosen from a distribution 
satisfying the $\epsilon$-martingale condition. 
Let us write $w = xy$ (where the decomposition is arbitrary) and 
let $E$ be the event that the relative margin $\mu_x(y)$ is non-negative. 
As Fact~\ref{fact:margin-balance} and Observation~\ref{obs:settlement-balanced-fork} point out, 
this event has a direct bearing on the settlement violation on $w$. 

In this section, we prove two bounds on the probability of the event $E$.
The first bound corresponds to the distribution 
$\mathcal{B}_\epsilon$ 
whereas the second bound applies to any distribution that 
satisfies the $\epsilon$-martingale condition. 
(Recall that the distribution $\mathcal{B}_\epsilon$, mentioned in Theorem~\ref{thm:main}, 
satisfies the $\epsilon$-martingale condition with equality.)
Our exposition in this section culminates in the proofs of our main theorems. 

We start with the following theorem 
which is a direct consequences of these bounds; see Section~\ref{sec:bounds} for a proof.
\begin{theorem}\label{thm:plain-main}
  Let $T, k \in \NN$.
  Let $w \in \{0,1\}^T$ be a random variable
  satisfying the $\epsilon$-martingale condition. 
  Consider the decomposition $w = xy, |y| = k$.  
  Then
  \[
    \Pr_{w = xy}[\text{there is an $x$-balanced fork for $xy$}] 
    = \Pr_{w = xy}[\mu_x(y) \geq 0] 
    \leq \exp(-\Omega(k))
    % = \exp({-\epsilon^3 (1 - O(\epsilon)) k/2})
    \,.
  \]
  (The asymptotic notation hides constants that depend only on $\epsilon$.)
\end{theorem}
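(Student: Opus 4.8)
The plan is to recast $\{\mu_x(y)\ge 0\}$ as an event about a coupled pair of biased random walks, prove the bound for $w\sim\mathcal B_\epsilon$ by an exact generating-function computation, and then deduce the general $\epsilon$-martingale case by stochastic dominance — essentially the route outlined in the proof survey.

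First I would use Lemma~\ref{lem:relative-margin} to leave the combinatorics of forks behind. Once the split $w=xy$ with $|y|=k$ is fixed, the pair $\bigl(\rho(xy'),\mu_x(y')\bigr)$, as $y'$ runs over the prefixes of $y$, is a deterministic function of $\rho(x)$ and the bits of $y$: it starts at $(\rho(x),\rho(x))$ because $\mu_x(\varepsilon)=\rho(x)$; reading a $1$ increments both coordinates; reading a $0$ sends $\rho\mapsto\max(\rho-1,0)$ and $\mu_x\mapsto\mu_x-1$, except that $\mu_x$ remains at $0$ in the reflecting case $\rho>\mu_x=0$. By Fact~\ref{fact:margin-balance}, the event of the theorem is exactly $\{\mu_x(y)\ge 0\}$ for the terminal value, so it suffices to bound the probability that this two-dimensional walk ends with non-negative second coordinate. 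I would also record two monotonicity facts, both immediate by induction from the recursions of Lemmas~\ref{lem:margin} and~\ref{lem:relative-margin}: the update is coordinatewise monotone in $(\rho,\mu_x)$, and it is monotone under flipping an input bit from $0$ to $1$; hence $\{\mu_x(y)\ge 0\}$ is a monotone event in the characteristic string.

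For $w\sim\mathcal B_\epsilon$ I would first remove the dependence on $|x|$. Since the sequence of values of $\rho$ along $x$ is a reflected random walk with negative drift, $\rho(x)$ is stochastically dominated, uniformly in $|x|$, by the geometric stationary law $\StationaryRho$ (ratio $\tfrac{1-\epsilon}{1+\epsilon}$); by the coordinatewise monotonicity above, the terminal $\mu_x(y)$ is then dominated by the random variable $Z$ obtained by running the same $k$ i.i.d.\ steps of the recursion from a random start $(R,R)$ with $R\sim\StationaryRho$. This is the dominating variable promised in the survey, and it no longer mentions $x$. To bound $\Pr[Z\ge 0]$ I would exploit the renewal structure supplied by the returns of the $\rho$-coordinate to $0$: while $\rho=0$ the reflection is disabled and $\mu_x$ strictly decreases on each $0$-bit, whereas during an excursion with $\rho>0$ the reflection can pin $\mu_x$ at $0$ but never raises it. Encoding these excursions with formal variables — one marking length, one marking the net decrease of $\mu_x$ — produces an explicit generating function whose coefficients weight precisely the length-$k$ strings on which $Z\ge0$; bounding its dominant singularity strictly inside the unit disc, which is where the bias $\epsilon$ enters, yields
\[
\Pr_{w=xy\sim\mathcal B_\epsilon}[\mu_x(y)\ge 0]\le\Pr[Z\ge 0]\le\exp(-\Omega(k)),
\]
with the implied constant depending only on $\epsilon$.

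For a general $w$ satisfying the $\epsilon$-martingale condition, monotonicity of the event together with $\Pr[W_t=1\mid W_1,\dots,W_{t-1}]\le\tfrac{1-\epsilon}{2}=\Pr[B_t=1]$ give, through the standard bit-by-bit coupling, $\Pr_{\mathcal W}[\mu_x(y)\ge0]\le\Pr_{\mathcal B_\epsilon}[\mu_x(y)\ge0]$, and the previous paragraph closes the argument (this is the content I would attribute to Lemma~\ref{lemma:rho-stationary}). The hard part is the $\mathcal B_\epsilon$ estimate, and within it the coupling inside the recursion: the clause $\rho>\mu_x=0$ is exactly the mechanism by which an adversary keeps relative margin non-negative, so $\mu_x$ cannot be replaced by a one-dimensional biased walk and no naive Chernoff bound on the number of $1$-bits is available. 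The two genuinely load-bearing steps are therefore (i) identifying $Z$ and justifying the dominance and monotonicity facts (including the uniform-in-$|x|$ tail of $\StationaryRho$), and (ii) the exact generating-function computation for the coupled pair, together with the analytic estimate placing its growth rate below $1$; the remaining steps are bookkeeping on results already established.
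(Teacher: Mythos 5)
Your proposal follows the paper's own route for this theorem: the equality is Fact~\ref{fact:margin-balance}; the martingale case is reduced to $\mathcal{B}_\epsilon$ by exactly the monotone bit-by-bit coupling plus the monotonicity of $\mu_x(y)$ in both arguments that the paper uses to prove Bound~\ref{bound:geometric}; the dependence on $|x|$ is removed by dominating $\rho(x)$ with the stationary law $\StationaryRho$ (Lemma~\ref{lemma:rho-stationary}), just as in Case~2 of the proof of Bound~\ref{bound:analytic}; and the i.i.d.\ estimate is a renewal/generating-function analysis of the coupled walk $(\rho,\mu_x)$, which is precisely the content of Bound~\ref{bound:analytic}. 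Your monotonicity and dominance claims are all correct and match the paper's.

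The one place where your sketch would not go through as written is the claim that the excursion decomposition yields an \emph{exact} generating function whose coefficients weight the strings with $\mu_x(y)\ge 0$. If you cut the trajectory at the returns of $\rho$ to $0$ (or even at the joint zeros of $(\rho,\mu)$, which are the epoch boundaries the paper actually uses), the net change of $\mu_x$ across a block is not independent of the past: whether and for how long $\mu_x$ is pinned at $0$ during an excursion depends on the current value of $\mu_x$, and once the walk leaves the pinned state the residual time for $\rho$ to return to $0$ depends on how high $\rho$ has climbed. The paper explicitly concedes that a closed form for the true series is out of reach and instead bounds it: it studies the last time $\mu_t=0$, uses $\rho_t\le t$ to dominate the conditional return time of $\rho$ by $\gf{D}(Z)^t$, and replaces the true epoch series $\gf{M}$ by the dominating closed form $\gf{\hat{M}}(Z)=pZ\,\gf{D}(Z)+qZ\,\gf{D}(Z)\,\gf{A}(Z\gf{D}(Z))$, then works with $\gf{\hat{L}}=\epsilon/(1-\gf{\hat{M}})$. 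So your step (ii) needs this (or an equivalent) dominance patch rather than an exact factorization --- a small but genuinely load-bearing idea. Also a slip in the analytic target: exponential decay of the coefficients corresponds to a radius of convergence strictly \emph{greater} than $1$ (the paper obtains $1+\epsilon^3/2+O(\epsilon^4)$), i.e., the dominant singularity lies outside the closed unit disc, not ``inside'' it; your later phrase about placing the growth rate below $1$ is the correct formulation.
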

Notice how the final bound does not depend on $|x|$. 
Indeed, as we show in Lemma~\ref{lemma:rho-stationary}, 
the reach of a Boolean string $x$ 
drawn from the distribution $\mathcal{B}_\epsilon$ 
% mentioned in Theorem~\ref{thm:main} 
converges to a fixed exponential distribution as
$|x| \rightarrow \infty$. 
This limiting distribution ``stochastically dominates'' 
any distribution that satisfies the $\epsilon$-martingale condition; 
see Section~\ref{sec:dominance-rho-stationary}.
The following corollary is immediate.
% An appeal to Fact~\ref{fact:margin-balance} yields the following corollary.
\begin{corollary}\label{cor:main} 
  Let $T, s, k \in \NN$.
  Let $w \in \{0,1\}^T$ be a 
  random variable satisfying the $\epsilon$-martingale condition. 
  Then
  \begin{align}\label{eq:cor-main}
    \Pr_w\left[\parbox{65mm}{
      there is a decomposition $w = xyz$, 
      where $|x| = s - 1$ and $|y| \geq k$, 
      so that $\mu_x(y) \geq 0$ 
    }\right] 
      \leq O(1) \cdot \exp(-\Omega(k))
    \,.
  \end{align}
\end{corollary}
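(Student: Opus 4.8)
The plan is to recognize the event in~\eqref{eq:cor-main} as a union, over all admissible suffix lengths $m \ge k$, of the single-length events controlled by Theorem~\ref{thm:plain-main}, and then to sum the resulting geometric tail. Concretely, fix $x = w_1 \cdots w_{s-1}$, the prefix of $w$ of length exactly $s-1$, and for $m \ge k$ let $E_m$ be the event $\{\mu_x(w_s \cdots w_{s+m-1}) \ge 0\}$. Since $|x|$ is pinned to $s-1$ and only $m = |y|$ ranges freely, the event whose probability we must bound is exactly $\bigcup_{m=k}^{T-s+1} E_m$.

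First I would note that, for each fixed $m$, the truncation $w_1 \cdots w_{s-1+m}$ of $w$ still satisfies the $\epsilon$-martingale condition (the conditioning in Definition~\ref{def:eps-martingale} is arbitrary, so dropping the tail coordinates preserves the hypothesis). Applying Theorem~\ref{thm:plain-main} to this truncation, with the length-$(s-1)$/length-$m$ decomposition, gives $\Pr[E_m] \le \exp(-c m)$ for a constant $c = c(\epsilon) > 0$ that depends only on $\epsilon$ --- in particular not on $m$ or on $s$, since Theorem~\ref{thm:plain-main} hides only $\epsilon$-dependent constants.

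A union bound over $m$ then yields
\[
  \Pr_w\Bigl[\bigcup_{m \ge k} E_m\Bigr] \;\le\; \sum_{m \ge k} \exp(-c m) \;=\; \frac{\exp(-ck)}{1 - \exp(-c)} \;=\; O(1)\cdot \exp(-\Omega(k))\,,
\]
where both the $O(1)$ factor $1/(1-\exp(-c))$ and the exponential rate depend only on $\epsilon$; this is the asserted bound.

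I do not expect a genuine obstacle here, given Theorem~\ref{thm:plain-main}; the one point that deserves a line of justification is why the union cannot be collapsed to a single value of $m$. By the recursion of Lemma~\ref{lem:relative-margin}, $\mu_x(y)$ is \emph{not} monotone in $|y|$ --- appending a $0$ can strictly decrease it --- so the union over all $m \ge k$ is really needed, and its harmlessness is precisely the statement that the decay in Theorem~\ref{thm:plain-main} is uniform in the suffix length, which is what lets the geometric series converge to $O(1)\cdot\exp(-\Omega(k))$.
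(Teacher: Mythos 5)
Your proposal is correct and follows essentially the same route as the paper: fix the length-$(s-1)$ prefix, apply Theorem~\ref{thm:plain-main} to each decomposition with a suffix of length $m \ge k$ (the uniformity in $m$ and $s$ being exactly what the theorem provides), and sum the geometric tail to get $O(1)\cdot\exp(-\Omega(k))$. Your added observations --- that truncating $w$ preserves the $\epsilon$-martingale condition and that the union over $m$ cannot be collapsed because $\mu_x(y)$ is not monotone in $|y|$ --- are sound and merely make explicit what the paper leaves implicit.
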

\begin{proof}
  Notice that Theorem~\ref{thm:plain-main} works for \emph{any} prefix $x$ 
  of the characteristic string $w = xy$.
  Thus we can fix the prefix $x$ with length $s - 1$ and 
  sum the bound in Theorem~\ref{thm:plain-main} 
  over all suffixes $y$ with length at least $k$. 
  This would give an upper bound to the left-hand side of our claim, 
  the bound being 
  $\sum_{t \geq k} \exp(-\Omega(t)) = O(1)\cdot \exp(-\Omega(k))$. 
\end{proof}

We obtain another imporant corollary by setting $|x| = 0$ and $|y| = n$ in Theorem~\ref{thm:plain-main}. 
\begin{corollary}\label{coro:forkable-rare}%[cf. \cite{KRDO17}]
  Let $w \in \{0,1\}^n$ be a random variable satisfying the $\epsilon$-martingale condition. Then
  \[
    \Pr[\text{$w$ is forkable}] = \Pr[\mu(w) \geq 0] \leq \exp(-\Omega(n))
    \,.
  \]
\end{corollary}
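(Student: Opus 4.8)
The plan is to obtain this as a one-line specialization of Theorem~\ref{thm:plain-main}, instantiating the (arbitrary) decomposition $w = xy$ with the empty prefix. First I would recall two facts already recorded in Section~\ref{sec:definitions}: by Definition~\ref{def:forkable}, a characteristic string $w$ is forkable precisely when $\mu(w) \geq 0$; and, as noted immediately after Definition~\ref{def:margin}, $\mu_\varepsilon(F) = \mu(F)$ for every fork $F$, so that maximizing over closed forks gives $\mu_\varepsilon(w) = \mu(w)$. (Likewise an $\varepsilon$-balanced fork for $w$ is simply a balanced fork for $w$ in the sense of Definition~\ref{def:balanced-fork}, and Fact~\ref{fact:margin-balance} ties these together with $\mu_\varepsilon(w) \geq 0$.)

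Next I would apply Theorem~\ref{thm:plain-main} with $T = n$, taking the decomposition $w = xy$ where $x = \varepsilon$ (hence $|x| = 0$) and $y = w$ (hence $|y| = n$), and setting $k = n$. Since $w$ satisfies the $\epsilon$-martingale condition by hypothesis, the theorem applies verbatim and yields
\[
  \Pr[\text{there is a balanced fork for } w] = \Pr[\mu_\varepsilon(w) \geq 0] \leq \exp(-\Omega(n))\,.
\]
Substituting $\mu_\varepsilon(w) = \mu(w)$ and invoking the equivalence ``$w$ forkable $\iff \mu(w) \geq 0$'' completes the proof.

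There is no genuine obstacle here: the corollary is a direct specialization of Theorem~\ref{thm:plain-main}, and all the substantive work—the recursive formulation of relative margin (Lemma~\ref{lem:relative-margin}), the stochastic-dominance argument (Lemma~\ref{lemma:rho-stationary}), and the generating-function tail bound—lives in the proof of that theorem. The only point to check is that the constant hidden in $\exp(-\Omega(n))$ depends only on $\epsilon$; this is exactly what Theorem~\ref{thm:plain-main} asserts, and since the bound there is independent of $|x|$, taking $|x| = 0$ costs nothing.
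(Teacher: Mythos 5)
Your proposal is correct and matches the paper's own derivation, which obtains the corollary precisely by setting $|x| = 0$ and $|y| = n$ in Theorem~\ref{thm:plain-main}. The extra care you take to note $\mu_\varepsilon(w) = \mu(w)$ and the definition of forkability is exactly the (implicit) bookkeeping in the paper, so there is nothing further to add.
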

Thus \emph{forkable strings are rare} 
where ``forkable'' is defined in Definition~\ref{def:forkable}.
This result 
significantly strengthens the $\exp(-\Omega(\sqrt{n}))$ 
bound obtained in Theorem 4.13 of~\cite{KRDO17}. 
The improvement comes in two respects: 
first, Corollary~\ref{cor:main} improves the exponent from $\sqrt{n}$ to $n$, 
and second, the characteristic string is allowed to be drawn 
from any distribution satisfying the $\epsilon$-martingale condition. 
For comparison, the characteristic string in Theorem 4.13 of~\cite{KRDO17} 
has the distribution $\mathcal{B}_\epsilon$, i.e., 
the bits were i.i.d.\ Bernoulli random variables 
with expectation $(1 - \epsilon)/2$.

\subsection{Two bounds for non-negative relative margin}\label{sec:bounds}
The main ingredients to proving Theorem~\ref{thm:plain-main} 
are two bounds on the event that for a characteristic string $xy$, 
the relative margin $\mu_x(y)$ is non-negative. 

\begin{bound}\label{bound:analytic}
  Let $x \in \{0,1\}^m$ and $y \in \{0,1\}^k$ be independent random
  variables, each chosen according to $\mathcal{B}_\epsilon$. Then
  \[
    \Pr[\mu_x(y) \geq 0] 
      \leq \exp({-\epsilon^3 (1 - O(\epsilon)) k/2})
    \,.
  \]
\end{bound}
% We are also interested in characteristic
% strings drawn from a distribution $\mathcal{W}$ 
% which satisfies $\epsilon$-martingale condition. 
% There are settings, 
% such as Genesis~\cite{DBLP:journals/iacr/BadertscherGKRZ18}, 
% where this flexibility is important.  

\begin{bound}\label{bound:geometric}
  Let $x \in \{0,1\}^m$ and $y \in \{0,1\}^k$ be random variables
  (jointly) satisfying the $\epsilon$-martingale condition with
  respect to the ordering $x_1, \ldots, x_m, y_1, \ldots, y_k$.  Let
  $x^\prime \in \{0,1\}^m$ and $y^\prime \in \{0,1\}^k$ be independent
  random variables, each chosen independently according to
  $\mathcal{B}_\epsilon$.  Then
  \[
    \Pr[\mu_x(y) \geq 0] \leq \Pr[\mu_{x^\prime}(y^\prime) \geq 0]
%    \,.
%  \]
%  As a result, 
%  \[
%    \Pr[\mu_x(y) \geq 0] 
      \leq \exp({-\epsilon^3 (1 - O(\epsilon)) k/2})
    \,.
  \]
\end{bound}

\paragraph{Proof of Theorem~\ref{thm:plain-main}.}
The equality is Fact~\ref{fact:margin-balance} 
and the inequality is Bound~\ref{bound:geometric}. $\qed$

% \hfill $\qed$ 

% \subsection{$\mathcal{B}$ stochastically dominates $\Distribution$; stationary distribution for reach}
\subsection{A stochastically dominant prefix distribution}\label{sec:dominance-rho-stationary}

Stochastic dominance plays an important role in the arguments
below. First of all, we observe that the distribution
$\mathcal{B}_\epsilon$ stochastically dominates any distribution
satisfying the $\epsilon$-martingale condition; this yields the first
inequality in Theorem~\ref{thm:main}. A more delicate application of
stochastic dominance is used in order to achieving bounds, such as
those of Section~\ref{sec:bounds}, that are independent of the length of
$x$. This follows from the fact that $\reach(B_{\epsilon})$ converges to a
particular, dominant distribution as its argument increases in length.

For notational convenience, we denote
the probability distribution associated with a random variable using
uppercase script letters; for example, the distribution of a random
variable $R$ is denoted by $\mathcal{R}$.  This usage should be clear
from the context.

\begin{definition}[Monotonicity and stochastic dominance]\label{def:dominance}
  Let $\Omega$ be a set endowed with a partial order $\leq$. A subset
  $A \subset \Omega$ is monotone if for all $x \leq y$, $x \in A$
  implies $y \in A$.  Let $X$ and $Y$ be random variables taking
  values in $\Omega$.
  % Let $\mathcal{X}$ and $\mathcal{Y}$ be distributions associated with 
  % $X$ and $Y$, respectively. 
  We say that $X$ \emph{stochastically dominates} $Y$, 
  written $Y \dominatedby X$, if 
  $
    \mathcal{X}(A) \geq \mathcal{Y}(A)
    % \,.
    $ for all monotone $A \subseteq \Omega$.  As a special case, when
    $\Omega = \R$, $Y \dominatedby X$ if
    $\Pr[X \geq \Lambda] \geq \Pr[Y \geq \Lambda]$ for every
    $\Lambda \in \R$.  We extend this notion to probability
    distributions in the natural way.
\end{definition}
% \begin{definition}[Stochastic dominance]\label{def:dominance} 
% Let $X$ and $Y$ be random
%   variables taking values in $\R$. We say that $X$ \emph{stochastically
%   dominates} $Y$, written $Y \dominatedby X$ if
%   \[
%     \Pr[X \geq \Lambda] \geq \Pr[Y \geq \Lambda]
%   \]
%   for every $\Lambda \in \R$.  We extend this notion to probability
%   distributions in the natural way.
% %  In addition, for two distributions $\mathcal{X}, \mathcal{Y}$,
% %  we say that $\mathcal{X}$ stochastically dominates $\mathcal{Y}$ 
% %  if and only if there are random variables $X \sim \mathcal{X}, Y \sim %\mathcal{Y}$ such that $Y \dominatedby X$;
% %  this is written as $\mathcal{Y} \dominatedby \mathcal{X}$.
% \end{definition}
Observe that for any non-decreasing function $u$ defined on $\Omega$,
$Y \dominatedby X$ implies $u(Y) \leq u(X)$. Finally, we note that for
real-valued random variables $X$, $Y$, and $Z$, if $Y \dominatedby X$
and $Z$ is independent of both $X$ and $Y$, then
$Z + Y \dominatedby Z + X$.

% Let $m \in \NN$ and suppose 
% $W = (W_1, \ldots, W_m) \in \{0,1\}^m$ satisfies the 
% $\epsilon$-martingale condition. 
% It turns out that $\rho(W)$ 
% is stochastically dominated by 
% the distribution of $\rho(B_1, \ldots, B_m)$, 
% where each $B_i \in \{0, 1\}$ is 
% an independent Bernoulli random variable with parameter $(1 - \epsilon)/2$.
% In addition, $\rho(B_1, \ldots, B_m)$ is stochastically dominated by 
% its limiting (stationary) distribution where we take $m \rightarrow \infty$.

%=======================================================
\begin{lemma}\label{lemma:rho-stationary}
  % Let $n \in \NN$ and consider a sequence of random variables
  % $W = (W_1, \ldots, W_n) \in \{0,1\}^n$ satisfying the
  % $\epsilon$-martingale condition. 
  Suppose $W = (W_1, \ldots, W_n) \in \{0,1\}^n$ satisfies the 
  $\epsilon$-martingale condition. 
  Let $\epsilon \in (0, 1)$ and $B = (B_1, \ldots, B_n) \in \{0,1\}^n$ 
  where each $B_i$ is independent with expectation $(1- \epsilon)/2$.
  Let $R_\infty \in \{0, 1, \ldots\}$ be a random variable 
  whose distribution $\StationaryRho$ is defined as 
    \begin{equation}
      \label{eq:stationary}
      \StationaryRho(k) 
        = \Pr[R_\infty = k] 
        \defeq \left(\frac{2\epsilon}{1+\epsilon}\right)\cdot \left(\frac{1-\epsilon}{1 + \epsilon}\right)^k
        \qquad \text{for $k = 0, 1, 2, \ldots$}\ 
      \,.
    \end{equation}
  Then $\rho(W) \dominatedby \rho(B) \dominatedby R_\infty$.
\end{lemma}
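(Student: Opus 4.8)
The plan is to reduce both dominance claims to the one-dimensional reflecting random walk governed by the recursion for $\rho$ from Lemma~\ref{lem:margin}: $\rho(\varepsilon)=0$, $\rho(w1)=\rho(w)+1$, and $\rho(w0)=\max(\rho(w)-1,0)$. In particular, for $B\sim\mathcal{B}_\epsilon$ the sequence $\rho(B_1),\rho(B_1B_2),\ldots$ is exactly the Markov chain on $\{0,1,2,\ldots\}$ that, from a state $j$, moves to $j+1$ with probability $p=(1-\epsilon)/2$ and to $\max(j-1,0)$ with probability $q=(1+\epsilon)/2$, started at $0$; and $\rho(W_1\cdots W_t)$ evolves the same way but with the (history-dependent) ``up'' probability at most $p$.

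For the first inequality $\rho(W)\dominatedby\rho(B)$, I would first observe that $w\mapsto\rho(w)$ is monotone non-decreasing for the coordinatewise order on $\{0,1\}^n$ (with $0\le 1$). This follows by a short induction on the length, comparing $w=ub$ and $w'=u'b'$ with $u\le u'$ and $b\le b'$: when $b=b'$, monotonicity is inherited because $x\mapsto x+1$ and $x\mapsto\max(x-1,0)$ are monotone; and when $(b,b')=(0,1)$ one uses $\max(\rho(u)-1,0)\le\rho(u)\le\rho(u')\le\rho(u'1)$. Next I would build a coupling of $W$ and $B$ with $W_i\le B_i$ almost surely: draw i.i.d.\ uniforms $U_1,\ldots,U_n$ on $[0,1]$, set $B_i=\mathbf 1[U_i\le p]$, and sequentially set $W_i=\mathbf 1[U_i\le p_i]$ where $p_i$ is the conditional probability $\Pr[W_i=1\mid W_1,\ldots,W_{i-1}]$ evaluated at the already-generated values $W_1,\ldots,W_{i-1}$. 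The chain rule shows that the $W$ so constructed has law $\mathcal W$, and the $\epsilon$-martingale condition gives $p_i\le p$, whence $W_i\le B_i$. Applying the monotone map $\rho$ on this common space yields $\rho(W)\le\rho(B)$ pointwise, which is precisely $\rho(W)\dominatedby\rho(B)$ (pull back the upward-closed events $\{\,\cdot\,\ge\Lambda\}$).

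For the second inequality $\rho(B)\dominatedby R_\infty$, I would first verify that $\StationaryRho$ is the stationary distribution of the reflecting walk: detailed balance reads $\pi_j p=\pi_{j+1}q$, giving $\pi_j=\pi_0\bigl(\tfrac{1-\epsilon}{1+\epsilon}\bigr)^{j}$, and normalizing yields $\pi_0=\tfrac{2\epsilon}{1+\epsilon}$, matching \eqref{eq:stationary}. Then I would argue by induction on $t$ that the law of $\rho(B_1\cdots B_t)$ is $\dominatedby R_\infty$. The base case $t=0$ is the point mass at $0$, dominated by everything. For the inductive step, appending the independent bit $B_{t+1}$ (with $\Pr[B_{t+1}=1]=p$) applies the one-step transition operator $P$ of the walk; since $P$ is the $p{:}q$ mixture of the push-forwards by the monotone maps $x\mapsto x+1$ and $x\mapsto\max(x-1,0)$, it preserves stochastic dominance, so $\operatorname{law}(\rho(B_1\cdots B_{t+1}))=P\operatorname{law}(\rho(B_1\cdots B_t))\dominatedby P\StationaryRho=\StationaryRho$. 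Taking $t=n$ finishes the proof. (Equivalently, one can couple the walk started at $0$ with a stationary copy of the walk using common bits; monotonicity of a single step keeps them ordered forever.)

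I expect the only genuine care to be needed in the coupling for the martingale case — confirming that the sequential/chain-rule construction really realizes $\mathcal W$ and that the history conditioning is handled correctly — and in the (otherwise routine) case analysis establishing monotonicity of the $\rho$-recursion; everything else is bookkeeping around the standard facts about monotone maps and stochastic dominance recorded near Definition~\ref{def:dominance}.
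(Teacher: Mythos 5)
Your proposal is correct and follows essentially the same route as the paper: a monotone (uniform-threshold) coupling giving $W_i \leq B_i$ pointwise combined with monotonicity of $\rho$ under the coordinatewise order, and then an induction showing the law of $\rho(B_1\cdots B_t)$ stays dominated by $\StationaryRho$ because the one-step transition of the reflecting walk preserves stochastic dominance and fixes $\StationaryRho$. The only differences are presentational (you verify stationarity by detailed balance and sketch the monotonicity induction explicitly, while the paper phrases the transition via shift and truncation operators), so no substantive gap.
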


\begin{proof}
  We begin by observing that $B$ stochastically dominates $W$. As a
  matter of notation, for any fixed values
  $w_1, \ldots, w_k \in \{0,1\}^k$, let
  \[
    \theta[w_1, \ldots, w_k] = \Pr[ W_{k+1} = 1 \mid
    \text{$W_i = w_i$, for $i \leq k$}] \leq (1 - \epsilon)/2
  \]
  and $\theta[\varepsilon] = \Pr[W_1 = 1]$ 
  where $\varepsilon$ is the empty string. Then consider $n$ uniform and
  independent real numbers $(A_1, \ldots, A_n)$, each taking a value
  in the unit interval $[0,1]$; we use these random variables to construct a monotone
  coupling between $W$ and $B$. 
  Specifically, define $\beta: [0,1]^n \rightarrow \{0,1\}^n$
  by the rule $\beta(\alpha_1, \ldots, \alpha_n) = (b_1, \ldots, b_n)$
  where
  \[
    b_t = \begin{cases} 1 & \text{if $\alpha_t \leq (1-\epsilon)/2$},\\
      0 & \text{if $\alpha_t > (1 - \epsilon)/2$},
    \end{cases}
  \]
  and define
  $B = (B_1, \ldots, B_n) = \beta(A_1, \ldots, A_n)$; these
  $B_i$s are independent zero-one Bernoulli random variables with expectation
  $(1-\epsilon)/2$. Likewise define the function
  $\omega:[0,1]^n \rightarrow \{0,1\}^n$ so that
  $\omega(\alpha_1, \ldots, \alpha_n) = (w_1, \ldots, w_n)$
  where each $w_t$ is assigned by the iterative rule
  \[
    w_{t+1} = \begin{cases} 1 & \text{if $\alpha \leq \theta[w_1, \ldots, w_t]$},\\
      0 & \text{if $\alpha > \theta[w_1, \ldots, w_t]$},
    \end{cases}
  \]
  and observe that the probability law of
  $\omega(A_1, \ldots, A_n)$ is precisely that of
  $W = (W_1, \ldots, W_n)$. For convenience, we simply identify the
  random variable $W$ with $\omega(A_1, \ldots, A_n)$. Note
  that for any $\alpha = (\alpha_1, \ldots, \alpha_n)$ and for each
  $i$, the $i$th coordinates of $\beta(\alpha)$ and $\omega(\alpha)$ satisfy
  $\omega(\alpha)_i \leq \beta(\alpha)_i$ 
  % (which is to say that $W_i \leq B_i$). 
  % It follows immediately that
  % $\rho(\omega(\alpha)) \leq \rho(\beta(\alpha))$ with probability 1 and
  % hence $\rho(W) \dominatedby \rho(B)$. 
  % See~\cite[Lemma 22.5]{LevinPeres}. 
  (which is to say that $W_i \leq B_i$ with probability 1). 
  But this is equivalent to saying $W \dominatedby B$. 
  (See~\cite[Lemma 22.5]{LevinPeres}.) 
  Now consider the following partial order $\leq$ on the $n$-bit Boolean strings: 
  for $x,y \in \{0,1\}^n$, 
  we write $x \leq y$ if and only if $x_i = 1$ implies $y_i = 1, i \in [n]$.
  Since $\rho$ is non-decreasing with respect to this partial order, 
  we have  
  $\rho(\omega(\alpha)) \leq \rho(\beta(\alpha))$ with probability 1 and
  hence $\rho(W) \dominatedby \rho(B)$ as well.

  %  \paragraph{$R_\infty$ stochastically dominates $\rho(B)$.}
  To complete the proof, we now establish that
  $\rho(B) \dominatedby R_\infty$.  We remark that the random variables
  $\rho(B)$ (and $R_\infty$) have an immediate interpretation in terms
  of the Markov chain corresponding to a biased random walk on $\Z$
  with a ``reflecting boundary'' at -1. Specifically, consider the
  Markov chain on $\{0, 1, \ldots\}$ given by the transition diagram
  \begin{center}
    \begin{tikzpicture}[scale=1,>=stealth', auto, semithick,
      flat/.style={circle,draw=black,thick,text=black,font=\small}]
      \node[flat] (n0) at (0,0)  {$0$};
      \node[flat] (n1) at (1,0)  {$1$};
      \node[flat] (n2) at (2,0)  {$2$};
      \node[flat,white] (n3) at (3,0) {$ \ \ \ $};
      \node[] at (3,0) {$\ldots$};
      \draw[thick,->,bend left] (n0) to (n1);
      \draw[thick,->,bend left] (n1) to (n2);
      \draw[thick,->,bend left] (n2) to (n3);
      \draw[thick,->,loop left] (n0) to (n0);
      \draw[thick,->,bend left] (n1) to (n0);
      \draw[thick,->,bend left] (n2) to (n1);
      \draw[thick,->,bend left] (n3) to (n2);
    \end{tikzpicture}
  \end{center}
  where edges pointing right have probability $(1-\epsilon)/2$ and edges pointing left---including the loop at 0---have probability $(1+\epsilon)/2$. Examining the recursive description of $\rho(w)$, it is easy to confirm that the random variable $\rho(B_1, \ldots, B_n)$ is precisely given by the result of evolving the Markov chain above for $n$ steps with all probability initially placed at 0. It is further easy to confirm that the distribution given by~\eqref{eq:stationary} above is stationary for this chain.

  % The above Markov chain is \emph{irreducible} since every state is
  % reachable from every state.  In addition, it is easy to see that this
  % chain is \emph{aperiodic}---at time $n$, the walk visits every state
  % $s \in \{0, 1, \cdots, n\}$ with a strictly positive probability.
  % According to~\citep[Theorem 21.12]{LevinPeres}, an irreducible Markov
  % chain is positive recurrent if and only if there exists a distribution
  % $\pi$ on the state space satisfying $\pi P = \pi$, where $P$ is the
  % transition matrix of the chain.  We can take $\pi$ to be the
  % distribution given by~\eqref{eq:stationary}, implying that the chain
  % is positive recurrent.  Consequently, by ~\citep[Theorem
  % 21.14]{LevinPeres}, this irreducible, aperiodic, positive recurrent
  % Markov chain converges to the unique stationary distribution $\pi$.
  % It follows that the distributions $R_n$ limit to $R_\infty$.

  To establish stochastic dominance, it is convenient to work with the
  underlying distributions and consider walks of varying lengths: let
  $\DistRho_n: \Z \rightarrow \R$ denote the probability distribution given by
  $\rho(B_1, \ldots, B_n)$; likewise
  define $\DistRho_\infty$. For a distribution $\DistRho$ on $\Z$, we define $[\DistRho]_0$
  to denote the probability distribution obtained by shifting all
  probability mass on negative numbers to zero; that is, for $x \in \Z$,
  \[
    [\DistRho]_0(x) = \begin{cases} \DistRho(x) & \text{if $x > 0$},\\
      \sum_{t \leq 0} \DistRho(t) & \text{if $x = 0$},\\
      0 & \text{if $x < 0$.}
    \end{cases}
  \]
  We observe that if $A \dominatedby C$ then $[A]_0 \dominatedby [C]_0$ for any
  distributions $A$ and $C$ on $\Z$. It will also be convenient to
  introduce the shift operators: for a distribution
  $\DistRho: \Z \rightarrow \R$ and an integer $k$, we define $S^k\DistRho$ to be the
  distribution given by the rule $S^k\DistRho(x) = \DistRho(x-k)$. With these
  operators in place, we may write
  \[
    \DistRho_t = \left(\frac{1 - \epsilon}{2}\right) S^1 \DistRho_{t-1} +
      \left(\frac{1 + \epsilon}{2}\right) \left[S^{-1}\DistRho_{t-1} \right]_0\,,
  \]
  with the understanding that $\DistRho_0$ is the distribution placing unit probability at $0$. The proof now proceeds by induction. It is clear that $\DistRho_0 \dominatedby \DistRho_\infty$. Assuming that $\DistRho_n \dominatedby \DistRho_\infty$, we note that for any $k$
  \[
    S^k \DistRho_n \dominatedby S^k \DistRho_\infty \qquad \text{and, additionally, that
    }\qquad [S^{-1}\DistRho_n]_0 \dominatedby [S^{-1}\DistRho_{\infty}]_0\,.
  \]
  Finally, it is clear that stochastic dominance respects convex combinations, 
  in the sense that if $A_1 \dominatedby C_1$ and $A_2 \dominatedby C_2$ then 
  $\lambda A_1 + (1-\lambda) A_2 \dominatedby \lambda C_1 + (1-\lambda) C_2$ (for $0 \leq \lambda \leq 1$). We conclude that
  \[
    \DistRho_{t+1} = \left(\frac{1 - \epsilon}{2}\right) S^1 \DistRho_{t} +
      \left(\frac{1 + \epsilon}{2}\right) \left[S^{-1}\DistRho_{t} \right]_0 \dominatedby \left(\frac{1 - \epsilon}{2}\right) S^1 \DistRho_{\infty} +
      \left(\frac{1 + \epsilon}{2}\right) \left[S^{-1}\DistRho_{\infty} \right]_0 
      \,.
  \]
  By inspection, the right-hand side equals $\DistRho_{\infty}$, as desired. 
  Hence $\rho(B) \dominatedby R_\infty$.
  % $\qedhere$
\end{proof}

  \paragraph{Remark.} 
  In fact, the random variable $\rho(B)$
  actually converges to $R_\infty$ as $n \rightarrow \infty$. 
  This can be seen, for example, 
  by solving for the stationary distribution of the Markov chain in the proof above. 
  However, we will only require the dominance for our exposition. 
  Importantly, since $\mu_x(\varepsilon) = \rho(x)$, and 
  $\Pr[\mu_x(y) \geq 0]$ increases monotonically 
  with an increase in $\Pr[\mu_x(\varepsilon) \geq r]$ for any $r \geq 0$, 
  it suffices to take $|x| \rightarrow \infty$ 
  when reasoning about an upper bound on $\Pr[\mu_x(y) \geq 0]$. 

%=======================================================
%=======================================================
\subsection{Proof of Bound~\ref{bound:analytic}}\label{sec:gf-proof}

%\begin{proof}[of Bound~\ref{bound:analytic}]
% \begin{proof}
  Anticipating the proof, we make a few remarks about generating
  functions and stochastic dominance.  We reserve the term
  \emph{generating function} to refer to an ``ordinary'' generating
  function which represents a sequence $a_0, a_1, \ldots$ of
  non-negative real numbers by the formal power series
  $\gf{A}(Z) = \sum_{t = 0}^\infty a_t Z^t$. When
  $\gf{A}(1) = \sum_t a_t = 1$ we say that the generating function is
  a \emph{probability generating function}; in this case, the
  generating function $\gf{A}$ can naturally be associated with the
  integer-valued random variable $A$ for which $\Pr[A = k] = a_k$. If
  the probability generating functions $\gf{A}$ and $\gf{B}$ are
  associated with the random variables $A$ and $B$, it is easy to
  check that $\gf{A} \cdot \gf{B}$ is the generating function
  associated with the convolution $A + B$ (where $A$ and $B$ are
  assumed to be independent).  Translating the notion of stochastic
  dominance to the setting with generating functions, we say that the
  generating function $\gf{A}$ \emph{stochastically dominates}
  $\gf{B}$ if $\sum_{t \leq T} a_t \leq \sum_{t \leq T} b_t$ for all
  $T \geq 0$; we write $\gf{B} \dominatedby \gf{A}$ to denote this state of
  affairs. If $\gf{B}_1 \dominatedby \gf{A}_1$ and
  $\gf{B}_2 \dominatedby \gf{A}_2$ then
  $\gf{B}_1 \cdot \gf{B}_2 \dominatedby \gf{A}_1 \cdot \gf{A}_2$ and
  $\alpha \gf{B}_1 + \beta \gf{B}_2 \dominatedby \alpha \gf{A}_1 + \beta
  \gf{A}_2$ (for any $\alpha, \beta \geq 0$).  Moreover, if
  $\gf{B} \dominatedby \gf{A}$ then it can be checked that
  $\gf{B}(\gf{C}) \dominatedby \gf{A}(\gf{C})$ for any probability
  generating function $\gf{C}(Z)$, where we write $\gf{A}(\gf{C})$ to
  denote the composition $\gf{A}(\gf{C}(Z))$.

  Finally, we remark that
  if $\gf{A}(Z)$ is a generating function which converges as a
  function of a complex $Z$ for $|Z| < R$ for some non-negative $R$, 
  $R$ is called the \emph{radius of convergence} of $\gf{A}$.  
  It follows from \citep[Theorem 2.19]{WilfGF} that 
  $\lim_{k \rightarrow \infty} {a_k}R^k = 0$ and $|a_k| = O(R^{-k})$. 
	In addition, if $\gf{A}$ is a probability generating function associated with the
  random variable $A$ then it follows that
  $\Pr[A \geq T] = O(R^{-T})$.
  
  We define $p = (1 - \epsilon)/2$ and $q = 1 - p$ and 
  as in the proof of Bound~\ref{bound:geometric},
  consider the independent $\{0,1\}$-valued random variables
  $w_1, w_2, \ldots$ where $\Pr[w_t = 1] = p$. We also define the
  associated $\{\pm1\}$-valued random variables $W_t =
  (-1)^{1+w_t}$.

	Although our actual interest is in the random variable $\mu_x(y)$ 
	from~\eqref{eq:mu-relative-recursive} on a characteristic string $w=xy$, 
  we begin by analyzing the case when $|x|=0$. 

  %\vspace{-2ex}
	\paragraph{Case 1: $x$ is the empty string.}
  In this case, the random variable $\mu_x(y)$ is identical to $\mu(w)$ 
  from~\eqref{eq:mu-recursive} with $w = y$. 
	Our strategy is to study the probability generating
  function
  \[
    \gf{L}(Z) = \sum_{t = 0}^\infty \ell_t Z^t
  \]
  where $\ell_t = \Pr[\text{$t$ is the last time $\mu_t =
    0$}]$. Controlling the decay of the coefficients $\ell_t$ suffices
  to give a bound on the probability that $w_1\ldots w_k$ is forkable
  because
  \[
    \Pr[\text{$w_1 \ldots w_k$ is forkable}] \leq 1 - \sum_{t =
      0}^{k-1} \ell_t = \sum_{t = k}^\infty \ell_t\,.
  \]
  It seems challenging to give a closed-form algebraic expression for
  the generating function $\gf{L}$; our approach is to develop a
  closed-form expression for a probability generating function
  $\gf{\hat{L}} = \sum_t \hat{\ell}_t Z^t$ which stochastically
  dominates $\gf{L}$ and apply the analytic properties of this closed
  form to bound the partial sums $\sum_{t \geq k} \hat{\ell}_k$.
  Observe that if $\gf{L} \dominatedby \gf{\hat{L}}$ then the series
  $\gf{\hat{L}}$ gives rise to an upper bound on the probability that
  $w_1\ldots w_k$ is forkable as
  $\sum_{t=k}^\infty \ell_t \leq \sum_{t=k}^\infty \hat{\ell}_t$.

  The coupled random variables $\rho_t$ and $\mu_t$ are Markovian
  in the sense that values $(\rho_s, \mu_s)$ for $s \geq t$ are
  entirely determined by $(\rho_t, \mu_t)$ and the subsequent
  values $W_{t+1}, \ldots$ of the underlying variables $W_i$. We
  organize the sequence
  $(\rho_0, \mu_0), (\rho_1, \mu_1), \ldots$ into ``epochs''
  punctuated by those times $t$ for which $\rho_t = \mu_t =
  0$. With this in mind, we define $\gf{M}(Z) = \sum m_t Z^t$ to be
  the generating function for the first completion of such an epoch,
  corresponding to the least $t > 0$ for which
  $\rho_t = \mu_t = 0$. As we discuss below, $\gf{M}(Z)$ is not a
  probability generating function, but rather
  $\gf{M}(1) = 1 - \epsilon$. It follows that 
  \begin{align}\label{eq:L-def}
    \gf{L}(Z) 
      &= \left(
        1 + 
        (1 - \epsilon) \cdot \frac{\gf{M}(Z)}{\gf{M}(1)} + 
        \left( (1 - \epsilon) \cdot \frac{\gf{M}(Z)}{\gf{M}(1)} \right)^2 + 
        \cdots \right)\cdot \epsilon  \nonumber \\
      &= ( 1 + \gf{M}(Z) + \gf{M}(Z)^2 + \cdots ) \cdot \epsilon \nonumber \\
      &= \frac{\epsilon}{1 - \gf{M}(Z)}
      \,.
  \end{align}
  The expression above represents the following geometric process: 
  before the beginning of an epoch, 
  we ``ask'' whether the walk is ever going to come back to zero. 
  With probability $\epsilon$, the answer is ``no'' and we stop the process. 
  Otherwise, i.e., with probability $1 - \epsilon$, 
  we commence an epoch which is guaranteed to finish; 
  then we ask again.

  Below we develop an analytic expression for a generating function
  $\gf{\hat{M}}$ for which $\gf{M} \dominatedby \gf{\hat{M}}$ and define
  $\gf{\hat{L}} = \epsilon/(1 - \gf{\hat{M}}(Z))$. We then proceed as
  outlined above, noting that $\gf{L} \dominatedby \gf{\hat{L}}$ and using
  the asymptotics of $\gf{\hat{L}}$ to upper bound the probability
  that a string is forkable.

  In preparation for defining $\gf{\hat{M}}$, we set down two
  elementary generating functions for the ``descent'' and ``ascent''
  stopping times. Treating the random variables $W_1, \ldots$ as
  defining a (negatively) biased random walk, define $\gf{D}$ to be
  the generating function for the \emph{descent stopping time} of the
  walk; this is the first time the random walk, starting at 0, visits
  $-1$. The natural recursive formulation of the descent time yields a
  simple algebraic equation for the descent generating function,
  $\gf{D}(Z) = qZ + pZ \gf{D}(Z)^2$, and from this we may conclude
  \[
    \gf{D}(Z) = \frac{1 - \sqrt{1 - 4pqZ^2}}{2pZ}\,.
  \]
  We likewise consider the generating function $\gf{A}(Z)$ for the
  \emph{ascent stopping time}, associated with the first time the
  walk, starting at 0, visits 1: we have
  $\gf{A}(Z) = pZ + qZ \gf{A}(Z)^2$ and
  \[
    \gf{A}(Z) = \frac{1 - \sqrt{1 - 4pqZ^2}}{2qZ}\,.
  \]
  Note that while $\gf{D}$ is a probability generating function, the
  generating function $\gf{A}$ is not: according to the classical
  ``gambler's ruin'' analysis~\cite{Grinstead:1997ng}, the probability
  that a negatively-biased random walk starting at 0 ever rises to 1
  is exactly $p/q$; thus $\gf{A}(1) = p/q$.

  Returning to the generating function $\gf{M}$ above, we note that an
  epoch can have one of two ``shapes'': in the first case, the epoch
  is given by a walk for which $W_1 = 1$ followed by a descent (so
  that $\rho$ returns to zero); in the second case, the epoch is
  given by a walk for which $W_1 = -1$, followed by an ascent (so that
  $\mu$ returns to zero), followed by the eventual return of $\rho$
  to 0. Considering that when $\rho_t > 0$ it will return to zero
  in the future almost surely, it follows that the probability that
  such a biased random walk will complete an epoch is
  $p + q(p/q) = 2p = 1 - \epsilon$, as mentioned in the discussion
  of~\eqref{eq:L-def} above. One technical difficulty arising in a
  complete analysis of $\gf{M}$ concerns the second case discussed
  above: while the distribution of the smallest $t > 0$ for which
  $\mu_t = 0$ is proportional to $\gf{A}$ above, the distribution of
  the smallest subsequent time $t'$ for which $\rho_{t'} = 0$
  depends on the value $t$. More specifically, the distribution of the
  return time depends on the value of $\rho_t$. Considering that
  $\rho_t \leq t$, however, this conditional distribution (of the
  return time of $\rho$ to zero conditioned on $t$) is
  stochastically dominated by $\gf{D}^t$, the time to descend $t$
  steps. This yields the following generating function $\gf{\hat{M}}$
  which, as described, stochastically dominates $\gf{M}$:
  \[
    \gf{\hat{M}}(Z) = pZ\cdot \gf{D}(Z) + qZ \cdot \gf{D}(Z) \cdot
    \gf{A}(Z\cdot \gf{D}(Z))\,.
  \]
  
  It remains to establish a bound on the radius of convergence of
  $\gf{\hat{L}}$. Recall that if the radius of convergence of
  $\gf{\hat{L}}$ is $\exp(\delta)$ it follows that
  $\Pr[\text{$w_1 \ldots w_k$ is forkable}] = O(\exp(-\delta k))$. A
  sufficient condition for convergence of
  $\gf{\hat{L}}(z) = \epsilon/(1 - \gf{\hat{M}}(z))$ at $z$ is that
  that all generating functions appearing in the definition of
  $\gf{\hat{M}}$ converge at $z$ and that the resulting value
  $\gf{\hat{M}}(z) < 1$.
  
  The generating function $\gf{D}(z)$ (and $\gf{A}(z)$) converges when
  the discriminant $1 - 4pqz^2$ is positive; equivalently
  $|z| < 1/\sqrt{1 - \epsilon^2}$ or
  $|z| < 1 + \epsilon^2/2 + O(\epsilon^4)$. Considering
  $\gf{\hat{M}}$, it remains to determine when the second term,
  $qz D(z) \gf{A}(z \gf{D}(z))$, converges; this is likewise determined by
  positivity of the discriminant, which is to say that
  \[
    1 - (1 - \epsilon^2)\left(\frac{1 - \sqrt{1 - (1 - \epsilon^2)z^2}}{1 - \epsilon}\right)^2 > 0\,.
  \]
  Equivalently,
  \[
    |z| <  \sqrt{\frac{1}{1 + \epsilon}\left(\frac{2}{\sqrt{1 - \epsilon^2}} - \frac{1}{1+\epsilon}\right)} = 1 + \epsilon^3/2 + O(\epsilon^4) 
		\, .	
  \]
  Note that when the series $pz \cdot \gf{D}(z)$ converges, it
  converges to a value less than $1/2$; the same is true of
  $qz \cdot \gf{A}(z)$. It follows that for
  $|z| = 1 + \epsilon^3/2 + O(\epsilon^4)$, $|\gf{\hat{M}}(z)| < 1$
  and $\gf{\hat{L}}(z)$ converges, as desired. We conclude that
	\begin{align}
	  \Pr[\text{$w_1 \ldots w_k$ is forkable}] &= \exp(-\epsilon^3(1 + O(\epsilon))k/2)\,.
	\label{eq:prob_forkable_gf}
	\end{align}

          %\vspace{-2ex}
	\paragraph{Case 2: $x$ is non-empty.}
        The relative margin before $y$ begins is $\mu_x(\varepsilon)$.
        Recalling that $\mu_x(\varepsilon) = \rho(x)$ and conditioning on the event that $\rho(x) = r$, 
    let us define the random variables $\left\{ \tilde{\mu}_t \right\}$ for $t = 0, 1, 2, \cdots$ as follows: $\tilde{\mu}_0 = \rho(x)$ and
    \[
      %\, , \qquad \text{and} \qquad
      \Pr[\tilde{\mu}_t = s]\, =\, \Pr[\mu_x(y) = s \mid \rho(x) = r \text{ and } |y| = t ]
      \, .
    \]
    If the $\tilde{\mu}$ random walk makes the $r$th descent at some time $t < n$, then $\tilde{\mu}_t = 0$ and the remainder of the walk is 
	identical to an $(k-t)$-step $\mu$ random walk which we have already analyzed. 
	Hence we investigate the probability generating function
	\[
			\gf{B}_r(Z) = \gf{D}(Z)^r \gf{L}(Z) \quad \text{with coefficients} \quad
      b^{(r)}_t := \Pr[t \text{ is the last time } \tilde{\mu}_t = 0 \mid \tilde{\mu}_0 = r]	
  \]
  where $t = 0, 1, 2, \cdots$. Our interest lies in the quantity 
    \[
      b_t 
      := \Pr[t \text{ is the last time } \tilde{\mu}_t = 0] 
      = \sum_{r\geq 0}{  b^{(r)}_t \DistRho_m(r) } 
      \,,
     \]
  where the \emph{reach distribution} 
  $\DistRho_m : \Z \rightarrow [0,1]$ 
  associated with the random variable $\rho(x), |x| = m$ is defined as 
  \begin{align}\label{eq:dist-rho}
    % \DistRho_m(r) &= \Pr_{W \sim \mathcal{D}}[\rho(W) = r \Given W \text{ has length } m]
    \DistRho_m(r) &= \Pr_{x \SuchThat |x| = m}[\rho(x) = r]
    \, .
  \end{align}
     % from~\eqref{eq:dist-rho}.
     Let $\gf{R}_m(Z)$ be the probability generating function
     for the distribution $\DistRho_m$. 
     Using Lemma~\ref{lemma:rho-stationary} and Definition~\ref{def:dominance}, we deduce that
     $\gf{R}_m \dominatedby \gf{R}_\infty$ for every $m \geq 0$ since 
    $\DistRho_m \dominatedby \StationaryRho$.
     In addition, it is easy to check from~\eqref{eq:stationary} that
     the probability generating function for $\StationaryRho$ is in fact
     $\gf{R}_\infty(Z) = (1-\beta)/(1-\beta Z)$ where $\beta := (1-\epsilon)/(1+\epsilon)$. 
    Thus the generating function corresponding to the
     probabilities $\{b_t\}_{t=0}^\infty$ is
	\begin{align}
		\gf{B}(Z) 
		&= \sum_{t=0}^\infty{b_t Z^t} = \sum_{r=0}^\infty{\DistRho_m(r) \sum_{t=0}^\infty{b_t^{(r)} Z^t} } = \sum_{r=0}^\infty{\DistRho_m(r) \gf{B}_r(Z) } \nonumber \\
    &= \gf{L}(Z) \sum_{r=0}^\infty{\DistRho_m(r) \gf{D}(Z)^r}    \nonumber 
		= \gf{L}(Z)\  \gf{R}_m (\gf{D}(Z)) \nonumber 
		\dominatedby \gf{\hat{L}}(Z)\  \gf{R}_\infty (\gf{D}(Z))  \nonumber \\
    &= \frac{(1-\beta)\,\gf{\hat{L}}(Z) }{1 - \beta \gf{D}(Z)}
		\, .
	\label{eq:gf-mu-relative}
	\end{align}
  The dominance notation above follows because
  $\gf{L} \dominatedby \gf{\hat{L}}$ and $\gf{R}_m \dominatedby \gf{R}_\infty$.

  For $\gf{B}(Z)$ to converge, we need to check that $\gf{D}(Z)$
  should never converge to $1/\beta$.  One can easily check that
  the radius of convergence of $\gf{D}(Z)$---which is
  $\displaystyle 1/\sqrt{1-\epsilon^2}$---is strictly less than $1/\beta$ when
  $\epsilon > 0$.  We conclude that $\gf{B}(Z)$ converges if
  both $\gf{D}(Z)$ and $\gf{L}(Z)$ converge.  The radius of
  convergence of $\gf{B}(Z)$ would be the smaller of the radii
  of convergence of $\gf{D}(Z)$ and $\gf{L}(Z)$.  We already
  know from the previous analysis that $\gf{\hat{L}}(Z)$ has the
  smaller radius of the two; therefore, the bound
  in~\eqref{eq:prob_forkable_gf} applies to the relative margin $\mu_x(y)$
  for $|x|\geq 0$. 
  \hfill $\qed$  
  % $\qedhere$
  % $\qed$
% \end{proof}

%=======================================================
\subsection{Proof of Bound~\ref{bound:geometric}}\label{sec:martingale-proof-new}

Let $\epsilon \in (0, 1)$,  
$W \in \{0,1\}^m, W^\prime \in \{0,1\}^k$ 
where both $(W_1, \ldots, W_n)$ and $(W^\prime_1, \ldots, W^\prime_n)$ 
satisfy the $\epsilon$-martingale condition. 
Let $B \in \{0,1\}^m, B^\prime \in \{0,1\}^k$ where the components of $B, B^\prime$ 
are independent with expectation $(1 - \epsilon)/2$.
By Lemma~\ref{lemma:rho-stationary}, 

\begin{equation*}\label{eq:WB-dominance}\tag{$*$}
  W \dominatedby B\quad \text{and} \quad W^\prime \dominatedby B^\prime
  \,. 
\end{equation*}

Let us define the partial order $\leq$ on Boolean strings $\{0,1\}^k, k \in \NN$ 
as follows: 
$a \leq b$ if and only if
for all $i \in [k]$, $a_i = 1$ implies $b_i = 1$. 
Let $\mu : \{0,1\}^k \rightarrow \Z$ be the margin function 
from Lemma~\ref{lem:relative-margin}. 
Observe that for Boolean strings $a, a^\prime, b, b^\prime$ 
with $|a| = |a^\prime|$ and $|b| = |b^\prime|$, 
(i.) $b \leq b^\prime$ implies $\mu_a(b) \leq \mu_a(b^\prime)$ and 
(ii.) $a \leq a^\prime$ implies $\mu_a(b) \leq \mu_{a^\prime}(b)$. 
That is, 
\begin{equation*}\label{eq:mu-WB-dominance}\tag{$\dagger$}
  \text{$\mu_a(b)$ is non-decreasing in both $a$ and $b$}
  \,.   
\end{equation*}

Using~\eqref{eq:WB-dominance} and~\eqref{eq:mu-WB-dominance}, 
it follows that $\mu_W(W^\prime) \dominatedby \mu_{B}(B^\prime)$. 
% Recall that $P \dominatedby Q$ is true if, 
% for all monotone sets $E$, 
% by $\Pr[ P \in E ] \leq \Pr[Q \in E]$. 
% In particular, concerning the dominance $\mu_W(W^\prime) \dominatedby \mu_{B}(B^\prime)$, 
Writing $x = W$ and $y = W^\prime$, we have 
\begin{align*}
  \Pr[\mu_x(y) \geq 0]\, 
    = \Pr[\mu_W(W^\prime) \geq 0]\, 
    \leq \Pr[\mu_{B}(B^\prime) \geq 0]
\end{align*}
where the inequality comes from the definition of stochastic dominance. 
A bound on the right-hand side 
is obtained in Bound~\ref{bound:analytic}. 
\hfill $\qed$

In Appendix~\ref{sec:martingale-proof}, 
we present a weaker bound 
on $\Pr[\mu_x(y) \geq 0]$ where the sequence 
$x_1, \ldots, x_m, y_1, \ldots, y_k$ satisfies $\epsilon$-martingale conditions. 
The proof directly uses the properties of the martingale 
and Azuma's inequality  but 
it does not use a stochastic dominance argument. 
Although it gives a bound of $3 \exp\left( -\epsilon^4 (1 - O(\epsilon) ) k/64 \right)$, 
the reader might find the proof of independent interest.

\subsection{Proof of main theorems}\label{sec:thm-proofs}

\paragraph{Proof of Theorem~\ref{thm:main}.}

Let us start with the following observation.
It allows us to formulate the
$(s, k)$-settlement insecurity of a distribution $\Distribution$
directly in terms of the relative margin.

\begin{lemma}\label{lemma:settlement-margin}
  Let $s, k, T \in \NN$. 
  Let $\Distribution$ be any distribution on $\{0,1\}^T$. 
  Then
  \[
    \mathbf{S}^{s,k}[\Distribution] \leq
      \Pr_{w \sim \Distribution} \left[\parbox{60mm}{
          there is a decomposition $w = x y z$, 
          where $|x| = s - 1$ and $|y| \geq k + 1$, 
          so that $\mu_x(y) \geq 0$
      }\right]
    \,.
  \]
\end{lemma}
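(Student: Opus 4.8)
The plan is to assemble the statement from three ingredients already available. First, Lemma~\ref{lem:main-forks} gives $\mathbf{S}^{s,k}[\Distribution] \leq \Pr_{w \sim \Distribution}[\text{slot $s$ is not $k$-settled for $w$}]$, so it suffices to show that the event ``slot $s$ is not $k$-settled for $w$'' is contained in the event described in the lemma, and then pass to probabilities under $\Distribution$.

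Second, I would simply unwind Definition~\ref{def:settlement}. If slot $s$ is not $k$-settled for $w$, there is a $t$ with $s + k \leq t \leq T$ and a fork $F \Fork w_1 \ldots w_t$ containing two tines $t_1, t_2$ of maximum length $\height(F)$ that diverge prior to $s$. Set $x = w_1 \ldots w_{s-1}$, $y = w_s \ldots w_t$, and $z = w_{t+1} \ldots w_T$; then $w = xyz$, $|x| = s-1$, $|y| = t - s + 1 \geq k + 1$, and $F \Fork xy$ since $xy = w_1 \ldots w_t$. The one point requiring care is that ``$t_1$ and $t_2$ diverge prior to $s$'' implies $t_1 \not\sim_x t_2$, i.e.\ that the last vertex $u$ of $t_1 \Intersect t_2$ satisfies $\ell(u) \leq s - 1$ (so no shared edge terminates at a vertex labeled by an index of $y$). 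This is a brief case analysis on the definition of diverging prior to $s$. If $t_1$ and $t_2$ contain distinct vertices both labeled $s$, then since $F$ is a tree the paths share the whole root-to-$u$ segment and labels strictly increase along any tine (property (F2)); were $\ell(u) \geq s$, the vertex of $t_1$ labeled $s$ would lie weakly before $u$ on $t_1$, hence on $t_2$, forcing $t_2$ to carry two vertices labeled $s$ --- impossible. If instead $t_1$ has a vertex $v$ labeled $s$ while $t_2$ has none, then $\ell(u) \geq s$ would again put $v$ weakly before $u$ and therefore on $t_2$, a contradiction. In either case $\ell(u) \leq s-1$, so the two maximum-length tines $t_1, t_2$ are disjoint over $y$; that is, $F$ is an $x$-balanced fork for $xy$.

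Third, Fact~\ref{fact:margin-balance} converts the existence of an $x$-balanced fork for $xy$ into the inequality $\mu_x(y) \geq 0$. Chaining the three steps yields the event inclusion
\[
\{\,\text{slot $s$ is not $k$-settled for $w$}\,\}\ \subseteq\ \{\, \exists\ w = xyz,\ |x| = s-1,\ |y| \geq k+1,\ \mu_x(y) \geq 0 \,\},
\]
and combining with Lemma~\ref{lem:main-forks} gives the bound in the statement. The only nonroutine step is the case analysis in the second paragraph identifying ``divergence prior to $s$'' with edge-disjointness over $y$; the remainder is a direct appeal to Lemma~\ref{lem:main-forks} and Fact~\ref{fact:margin-balance}. (One could alternatively cite the direction recorded in Observation~\ref{obs:settlement-balanced-fork}, but the implication needed here is the reverse one, and it is cleanest to argue it directly from the definitions as above.)
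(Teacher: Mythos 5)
Your proof is correct and follows the same chain as the paper: Lemma~\ref{lem:main-forks}, then ``settlement violation implies an $x$-balanced fork,'' then Fact~\ref{fact:margin-balance}. The only difference is the middle step: the paper discharges it by citing Observation~\ref{obs:settlement-balanced-fork}, whereas you argue it directly from Definition~\ref{def:settlement}; your case analysis (using strictly increasing labels along tines and the tree structure to show the last common vertex of the two maximum-length tines has label at most $s-1$, hence $t_1 \nsim_x t_2$) is sound. Your parenthetical remark is also well taken --- as literally worded, the Observation records the converse implication (balanced fork $\Rightarrow$ not settled), so supplying the needed direction explicitly, as you do, makes the argument self-contained rather than resting on the Observation's intended but unstated reading.
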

\begin{proof}
  Lemma~\ref{lem:main-forks} implies that 
  $\mathbf{S}^{s,k}[\Distribution]$ is no more than 
  the probability that slot $s$ is not $k$-settled 
  for the characteristic string $w$. 
  By Observation~\ref{obs:settlement-balanced-fork}, 
  this probability, in turn, is no more than 
  the probability that there exists an $x$-balanced fork 
  $F \Fork xy$
  where we write $w = xyz, |x| = s - 1, |y| \geq k + 1, |z| \geq 0$. 
  Finally, Fact~\ref{fact:margin-balance} states that 
  for any characteristic string $xy$, 
  the two events ``exists an $x$-balanced fork $F \Fork xy$'' 
  and ``$\mu_x(y)$ is non-negative'' have the same measure. 
  Hence the claim follows. 
\end{proof}

If the distribution $\mathcal{D}$ in the lemma above 
satisfies the $\epsilon$-martingale condition, 
the probability in this lemma is no more than the probability 
in the left-hand side of Corollary~\ref{cor:main}. 
Finally, by retracing the proof of Corollary~\ref{cor:main} 
using the explicit probability from Bound~\ref{bound:geometric}, 
we see that the bound in Corollary~\ref{cor:main} is 
$O(1) \cdot \exp\bigl(-\Omega(\epsilon^3 (1 - O(\epsilon))k)\bigr)$. 
Since $\mathcal{B}_\epsilon$ satisfies the $\epsilon$-martingale condition, 
we conclude that $\mathbf{S}^{s,k}[\mathcal{B}_\epsilon]$ is no more than 
this quantity as well.
% \[
%   \mathbf{S}^{s,k}[\mathcal{B}_\epsilon] 
%       \leq O(1) \cdot \exp\bigl(-\Omega(\epsilon^3 (1 - O(\epsilon))k)\bigr)
%     \,.
% \]

For any player playing the settlement game, 
the set of strings on which the player wins is monotone 
with respect to the partial order $\leq$ defined in Section~\ref{sec:martingale-proof-new}. 
To see why, note that if the adversary wins with a specific string $w$, 
he can certainly win with any string $w^\prime$ where $w \leq w^\prime$. 
As $\mathcal{B}_\epsilon$ stochastically dominates $\mathcal{W}$, it follows that 
$
  \mathbf{S}^{s,k}[\mathcal{W}] \leq \mathbf{S}^{s,k}[\mathcal{B}_\epsilon]
$.

\hfill$\qed$

\paragraph{Proof of Theorem~\ref{thm:main-CP}}
For the first inequality, observe that if $w$ violates $\kCP$, it must violate $\kSlotCP$ as well. 
It remains to prove the second inequality. 
Let $\Distribution$ be any distribution on $\{0,1\}^T$. 
We can apply Fact~\ref{fact:margin-balance} on the statement of Theorem~\ref{thm:cp-fork} 
to deduce that 
\begin{equation*}\label{eq:main-argument-1}
  \Pr_{w \sim \Distribution}[\text{$w$ violates $\kSlotCP$}] 
    \leq 
% \Pr_{\substack{w = x y z \\ |y| \geq k + 1} }[\text{exists a $k$-settlement violation} ] 
%    = 
    \Pr_{w \sim \Distribution}\left[\parbox{55mm}{
      there is a decomposition $w = xyz$, 
      where $|y| \geq k$, 
      so that $\mu_x(y) \geq 0$ 
    } \right] 
    \,.
\end{equation*}
By using a union bound over $|x|$, the above probability is at most 
\[
    \sum_{s = 1}^{T - k + 1} 
    \quad 
      \Pr_w\left[\parbox{60mm}{
        there is a decomposition $w = xyz$, 
        where $|x| = s - 1$ and $|y| \geq k$, 
        so that $\mu_x(y) \geq 0$ 
      }\right] 
    \,.
\]
Since $w$ satisfies the $\epsilon$-martingale condition, 
we can upper bound the probability inside the sum 
using Corollary~\ref{cor:main}. 
As we have seen in the proof of Theorem~\ref{thm:main}, 
the bound in Corollary~\ref{cor:main} is 
$
  O(1) \cdot \exp\bigl(-\Omega(\epsilon^3 (1 - O(\epsilon))k)\bigr)
  % \,.
$.
It follows that the sum above is at most $T \exp\bigl(-\Omega(\epsilon^3 (1 - O(\epsilon))k)\bigr)$.
\hfill $\qed$

It remains to prove the recursive formulation of the relative margin 
from Section~\ref{sec:recursion}; 
we tackle it in the next section.

%%% Local Variables:
%%% mode: latex
%%% TeX-master: "main"
%%% End:

\section{Proof of the relative margin recurrence}
\label{sec:margin-proof}

We set the stage by formally defining \emph{fork prefixes}.

\begin{definition}[Fork prefixes]
Let $w, x \in\{0,1\}^*$ so that $x\PrefixEq w$. 
Let $F, F'$ be two forks for $x$ and $w$, respectively. 
We say that $F$ is a \emph{prefix} of $F'$ if 
$F$ is a consistently labeled subgraph of $F'$. 
That is, all vertices and edges of $F$ also appear in $F'$ and 
the label of any vertex appearing in both $F$ and $F'$ is identical. 
We denote this relationship by $F\fprefix F'$.
\end{definition}

\noindent
When speaking about a tine that appears in both $F$ and $F'$, 
we place the fork in the subscript of relevant properties, e.g., writing $\reach_F$, etc.

Observe that for any Boolean strings $x$ and $w, x \PrefixEq w$, 
one can \emph{extend} (i.e., augment) a fork prefix $F \Fork x$ into a larger fork $F' \Fork w$ so that $F \ForkPrefix F'$. 
A \emph{conservative extension} is a minimal extension in that 
it consumes the least amount of reserve (cf. Definition~\ref{def:gap-reserve-reach}), 
leaving the remaining reserve to be used in future.
Extensions and, in particular, conservative extensions 
play a critical role in the exposition that follows. 
\begin{definition}[Conservative extension of closed forks]\label{def:extension}  

  Let $w$ be a Boolean string, $F$ a closed fork for $w$, 
  and let $s$ be an honest tine in $F$.
  Let $F'$ be a closed fork for $w0$ so that $F \ForkPrefix F'$ and 
  $F'$ contains an honest tine $\sigma, \ell(\sigma) = |w| + 1$. 
  We say that \emph{$F'$ is an extension of $F$} or, equivalently, 
  that \emph{$\sigma$ is an extension of $s$}, if $s \Prefix \sigma$. 
  If, in addition, 
  $\length(\sigma) = \height(F) + 1$, 
  we call this extension a \emph{conservative extension}.
\end{definition}
Clearly, $\sigma$ is the longest tine in $F^\prime$. 
Since $\sigma$ is honest, 
it follows that 
$\length(\sigma) 
\geq 1 + \height(F) 
= 1 + \length(s) + \gap(s)$.
The root-to-leaf path in $F^\prime$ 
that ends at $\sigma$ 
contains at least $\gap(s)$ adversarial vertices $u \in F'$ 
so that $\ell(u) \in [\ell(s) + 1, |w|]$ and 
$u \not \in F$. 
If $\sigma$ is a conservative extension, 
the number of such vertices is exactly $\gap(s)$ 
and, in particular, 
the height of $F'$ is exactly one more than the height of $F$.

The main ingredients to proving Lemma~\ref{lem:relative-margin} 
are a fork-building strategy for the string $xy$ 
and Propositions~\ref{prop:muxy0-lowerbound-adv} 
and~\ref{prop:muxy0-upperbound}. 
Specifically, 
recall equation~\eqref{eq:mu-relative-recursive}. 
The first proposition shows that the fork $F \Fork xy0$ 
built by the said strategy achieves 
$\mu_x(F) \geq \mu_x(y0)$ while 
the second proposition shows that this value, in fact, is the largest possible, 
i.e., $\mu_x(y0) \leq \mu_x(y0)$.
In addition, any fork-building strategy 
whose forks satisfy the premise of Proposition~\ref{prop:muxy0-lowerbound-adv} 
can be used to prove Lemma~\ref{lem:relative-margin}. 
% \begin{remark*}  
%   Since the statement of Proposition~\ref{prop:muxy0-upperbound} 
%   concerns only a specific value of the relative margin, 
%   its statement applies to any fork for $xy0$. 
%   In addition, observe that any fork-building strategy 
%   whose forks satisfy Proposition~\ref{prop:muxy0-lowerbound-adv} 
%   can be used to prove Lemma~\ref{lem:relative-margin}. 
%  \end{remark*}  

\subsection{A fork-building strategy to maximize \texorpdfstring{$x$-}{}relative margin}\label{sec:strategy-x}
Any fork $F \Fork xy$ contains two tines $t_x, t_\rho$ 
so that $\reach(t_\rho) = \rho(F), \reach_F(t_x) = \mu_x(F)$, 
and the tines $t_x, t_\rho$ are disjoint over the suffix $y$. 
We say that the tine-pair $(t_\rho, t_x)$ 
is a \emph{witness} to $\mu_x(F)$.

% \paragraph{The strategy.}
Let $x,y \in \{0,1\}^*$ and write $w = xy$. 
Recursively build closed forks $F_0, F_1, \ldots, F_{|w|}$ 
where $F_i \Fork w_1 \ldots w_i, i \geq 1$ and 
$F_0 \vdash \varepsilon$ is the trivial fork 
consisting of a single vertex 
corresponding to the genesis block. 
For $i = 0, 1, \ldots, |w| - 1$ in increasing order, do as follows.  
If $w_{i + 1} = 1$, set $F_{i+1} \leftarrow F_{i}$. 
If $w_{i+1}=0$, 
set $F_{i + 1} \Fork w0$ 
as a conservative extension of $F_i \Fork w$ so that 
$\sigma \in F_{i+1}, \ell(\sigma) = i + 1$ 
is a conservative extension of a tine $s \in F_i$ 
identified as follows. 
If $F_i$ contains no zero-reach tine, 
$s$ is the unique longest tine in $F_i$. 
Otherwise, 
first identify a maximal-reach tine $t_\rho \in F_i$ as follows:
if $i \geq |x| + 1$, $t_\rho$ is a maximal-reach tine in $F_i$ 
which belongs to a tine-pair witnessing $\mu_x(F_i)$; 
otherwise, $t_\rho$ can be an arbitrary maximal-reach tine in $F_i$. 
Finally, $s$ is the zero-reach tine in $F_i$ 
that diverges earliest from $t_\rho$. 
If there are multiple candidates for $s$ or $t_\rho$, break tie arbitrarily.

% Note that we have enough reserve to extend any zero-reach $F_i$-tine 
% to make it as long as the longest tine in $F_i$.
% However, if a zero-reach tine is not extended now, 
% it can never be extended later (see Claim~\ref{claim:nex}). 
% Finally, observe that there is always a unique longest (honest) tine in a closed fork.

%-------------------------- Relative margin lower bound via Adversary ------------
\begin{proposition}\label{prop:muxy0-lowerbound-adv}
  Let $x, y$ be arbitrary Boolean strings, $|y| \geq 1$ and $w = xy$. 
  Let $F \Fork w$ and $F^\prime \Fork w0$ be two closed forks 
  built by the strategy 
  above 
  % in Section~\ref{sec:strategy-x} 
  so that $F \ForkPrefix F^\prime$ and suppose, in addition,
  that $\rho(F) = \rho(xy)$ and $\mu_x(F) = \mu_x(y)$. 
  Then $\rho(F^\prime) = \rho(xy0)$ and 
  $\mu_x(F^\prime) \geq \mu_x(y0)$.
\end{proposition}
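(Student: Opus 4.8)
The plan is to track precisely how the conservative extension turns $F$ into $F'$ and then to exhibit two tines of $F'$ that witness the claimed quantities. Write $w=xy$ and let $\sigma$ be the honest tine of $F'$ with $\ell(\sigma)=|w|+1$. Since the extension is \emph{conservative}, $\length(\sigma)=\height(F)+1$, and as $\sigma$ is the honest vertex carrying the largest label it is the unique longest tine of $F'$; hence $\height(F')=\height(F)+1$. The slot $|w|+1$ is honest, so it creates no reserve: for every tine $t$ inherited from $F$ — equivalently, every tine of $F'$ other than $\sigma$ — we have $\reserve_{F'}(t)=\reserve_F(t)$ and $\gap_{F'}(t)=\gap_F(t)+1$, whence $\reach_{F'}(t)=\reach_F(t)-1$; and directly $\reach_{F'}(\sigma)=\reserve_{F'}(\sigma)-\gap_{F'}(\sigma)=0$. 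I also record the elementary fact that two tines of $F$ whose longest common prefix ends at a vertex labeled at most $|x|$ remain disjoint over $y0$ when read inside $F'$, since that common prefix is unchanged.

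For the reach statement, note that $\rho(F')\le\rho(xy0)=\max(0,\rho(xy)-1)$ holds automatically because $F'\Fork xy0$ is a closed fork. If $\rho(xy)=0$ then $\rho(F')\ge\reach_{F'}(\sigma)=0=\rho(xy0)$. If $\rho(xy)\ge1$, take any maximal-reach tine $t^\ast$ of $F$; it survives in $F'$ with $\reach_{F'}(t^\ast)=\rho(F)-1=\rho(xy)-1$, so $\rho(F')\ge\rho(xy)-1=\rho(xy0)$. In either case $\rho(F')=\rho(xy0)$.

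For the margin statement, invoke a witness $(t_\rho,t_x)$ to $\mu_x(F)$, that is, tines of $F$ with $\reach_F(t_\rho)=\rho(F)=\rho(xy)$, $\reach_F(t_x)=\mu_x(F)=\mu_x(y)$, and $t_\rho\nsim_x t_x$ over $y$; whenever $F$ contains a zero-reach tine we take this to be the witness the strategy uses at its last step, which is legitimate because $|w|=|x|+|y|\ge|x|+1$. Since $t_\rho\nsim_x t_x$, their longest common prefix ends at a vertex labeled at most $|x|$. Both tines are inherited by $F'$ (neither can be $\sigma$, which contains a brand-new vertex) and, by the observation above, remain disjoint over $y0$; hence $\mu_x(F')\ge\min\bigl(\reach_{F'}(t_\rho),\reach_{F'}(t_x)\bigr)=\mu_x(y)-1$, where we used $\rho(xy)\ge\mu_x(y)$. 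By the recurrence~\eqref{eq:mu-relative-recursive}, $\mu_x(y0)=\mu_x(y)-1$ except when $\rho(xy)>\mu_x(y)=0$, so outside that one exceptional case we already have $\mu_x(F')\ge\mu_x(y0)$.

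It remains to handle $\mu_x(y)=0<\rho(xy)$, where $\mu_x(y0)=0$ while the estimate above yields only $\mu_x(F')\ge-1$; this is the step I expect to be the main obstacle, as it is the only place where the specific fork-building rule is used. Here $\reach_F(t_x)=\mu_x(F)=0$, so $t_x$ is a zero-reach tine, and (as above) the longest common prefix of $t_x$ and $t_\rho$ ends at a vertex labeled at most $|x|$. The strategy extends the zero-reach tine $s$ that diverges earliest from $t_\rho$, so $\ell(s\cap t_\rho)\le\ell(t_x\cap t_\rho)\le|x|$; and by Definition~\ref{def:extension} every vertex of $\sigma$ not lying on $s$ has label exceeding $\ell(s)$, so the longest common prefix of $\sigma$ and $t_\rho$ coincides with that of $s$ and $t_\rho$ and therefore ends at a vertex labeled at most $|x|$. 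Consequently $\sigma\nsim_x t_\rho$ over $y0$, and since $\reach_{F'}(\sigma)=0$ and $\reach_{F'}(t_\rho)=\rho(xy)-1\ge0$ we conclude $\mu_x(F')\ge0=\mu_x(y0)$, which finishes the proof.
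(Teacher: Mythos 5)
Your proof is correct and follows essentially the same route as the paper's: the reach bookkeeping under a conservative extension (the paper's Claims~\ref{claim:ex} and~\ref{claim:nex}), the inherited witness pair $(t_\rho,t_x)$ in the generic cases, and, in the critical case $\rho(xy)>\mu_x(y)=0$, exploiting the strategy's choice of the zero-reach tine $s$ that diverges earliest from $t_\rho$ to get $\ell(\sigma\Intersect t_\rho)\le|x|$ and hence $\mu_x(F')\ge\reach_{F'}(\sigma)=0$. Your single uniform argument for the non-exceptional cases (both witness tines survive in $F'$, each loses exactly one unit of reach, and they stay disjoint over $y0$) compresses the paper's three cases and sub-cases but yields the same bounds. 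One caveat: your final step cites the recurrence~\eqref{eq:mu-relative-recursive}, i.e.\ Lemma~\ref{lem:relative-margin}, which in the paper is proved \emph{from} this proposition, so the citation as written is circular; you only need the upper-bound direction, namely $\mu_x(y0)\le 0$ when $\rho(xy)>\mu_x(y)=0$ and $\mu_x(y0)\le\mu_x(y)-1$ otherwise, which is exactly Proposition~\ref{prop:muxy0-upperbound} (proved independently of this one) --- or you can simply state your conclusion as the per-case lower bounds on $\mu_x(F')$, which is what the paper's own proof does before the two propositions are combined.
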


\subsection{Proof of Proposition~\ref{prop:muxy0-lowerbound-adv}}
  Before we proceed further, let us record two useful results related to conservative extensions 
  and closed fork prefixes. 

  \begin{claim}[A conservative extension has reach zero]\label{claim:ex}
    Consider closed forks $F\vdash w, F'\vdash w0$ 
    such that $F\fprefix F'$. 
    If a tine $t$ of $F'$ is a conservative extension 
    then $\reach_{F'}(t)=0$.
  \end{claim}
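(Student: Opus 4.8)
The plan is to unwind the three quantities of Definition~\ref{def:gap-reserve-reach} directly on the tine $t$ in question and show that both $\reserve_{F'}(t)$ and $\gap_{F'}(t)$ are zero, so that $\reach_{F'}(t) = \reserve_{F'}(t) - \gap_{F'}(t) = 0$.

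First I would record that, by Definition~\ref{def:extension}, a conservative extension $t$ is the root-to-leaf path ending at an honest vertex $\sigma$ with $\ell(\sigma) = |w| + 1$. Since the characteristic string $w0$ has length exactly $|w|+1$, there is no index of $w0$ strictly exceeding $\ell(\sigma)$, and in particular no adversarial one; hence $\reserve_{F'}(t) = 0$ straight from the definition of reserve.

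Next I would show $\gap_{F'}(t) = 0$, i.e.\ that $t$ is a longest tine of $F'$. Because $|w|+1$ is the maximal slot and labels strictly increase along any tine, no vertex of $F'$ can follow $\sigma$, so $\sigma$ is a leaf. As $F'$ is a \emph{closed} fork, every leaf of $F'$ is honest, so every tine of $F'$ terminates at an honest vertex; and since the honest-depth function is strictly increasing, the honest vertex carrying the largest label---namely $\sigma$---attains the largest depth among honest vertices, hence among all leaves. Therefore $\length(t) = \depth(\sigma) = \height(F')$, which gives $\gap_{F'}(t) = 0$ and thus $\reach_{F'}(t) = 0$.

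There is essentially no obstacle here beyond bookkeeping: the only point needing care is that ``$t$ is a conservative extension'' must be read as ``$t$ is the honest tine reaching the fresh slot $|w|+1$'' rather than the shorter honest tine $s \in F$ that it extends. I would also note---and phrase the argument so this is transparent---that the length condition $\length(\sigma) = \height(F)+1$ in Definition~\ref{def:extension} is not actually used for this claim: honesty of the terminal vertex together with its maximal label already force reach zero in any closed fork for $w0$.
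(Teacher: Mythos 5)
Your proof is correct and takes essentially the same route as the paper's: $\reserve_{F'}(t)=0$ because slot $|w|+1$ is the final index of $w0$, and $\gap_{F'}(t)=0$ because closedness of $F'$ (all leaves honest) together with the strictly increasing honest-depth property forces the tine ending at the new honest vertex to be a longest tine. Your side remark that the length condition in the definition of conservative extension is not actually needed is consistent with the paper's argument, which likewise relies only on honesty and maximality of the terminal label.
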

  \begin{proof}
    We have assumed that $t$ is a conservative extension, 
    so its terminal vertex must be the new honest node. 
    By definition, 
    $\reach_{F'}(t)=\text{reserve}_{F'}(t)-\text{gap}_{F'}(t)$. 
    Honest players will only place nodes 
    at a depth strictly greater than all other honest nodes, 
    so we infer that $t$ is the longest tine of $F'$, 
    and so $\text{gap}_{F'}(t)=0$. 
    Moreover, we observe that 
    there are no 1s occurring after this point 
    in the characteristic string, 
    and so $\text{reserve}_{F'}(t)=0$. 
    Plugging these values into 
    our definition of $\reach$ 
    we see that $\reach_{F'}(t)=0-0=0$. 
  \end{proof}

  %Intuitively, a tine that arises by extension in an honest slot must be the longest tine of the fork, because honest players will only extend a chain with maximum length. Moreover, there are no dishonest slots after the final honest slot, so the remaining reserve is 0. Therefore, reach is exactly 0.

  \begin{claim}[Reach of non-extended tines]\label{claim:nex}
    Consider a closed fork $F\vdash w$ and some closed fork $F'\vdash w0$ such that $F\fprefix F'$. If $t \in F$ then 
    $\reach_{F'}(t)\leq \reach_{F}(t) - 1$. 
    The inequality becomes and equality 
    if $F'$ is obtained via a conservative extension from $F$.
  \end{claim}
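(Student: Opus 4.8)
The plan is to compare the two constituents of reach---reserve and gap---separately across the two forks. First I would note that reserve is unchanged. Since $t$, and hence its terminal vertex $v$, appears in both $F$ and $F'$ with the same label $\ell(v)$, and the only difference between the characteristic strings $w$ and $w0$ is the $0$ appended at position $|w|+1$, the set of adversarial indices exceeding $\ell(v)$ is identical for $w$ and $w0$. Hence $\reserve_{F'}(t) = \reserve_F(t)$.

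Next I would show that $\gap_{F'}(t) \geq \gap_F(t) + 1$, with equality when $F'$ is a conservative extension. The crux is the inequality $\height(F') \geq \height(F) + 1$. Because $F$ is closed, its longest tine ends at an honest vertex $u$ with $\depth_F(u) = \height(F)$; this vertex persists in $F'$ (as $F \ForkPrefix F'$) with the same depth; and the honest vertex labeled $|w|+1$ satisfies $\ell(\cdot) > \ell(u)$, so property~\ref{fork:honest-depth} forces its depth in $F'$ to exceed $\depth_{F'}(u) = \height(F)$, giving $\height(F') \geq \height(F) + 1$. Since $\gap(t) = \height(\cdot) - \length(t)$ and $\length(t)$ is the same in both forks, this yields $\gap_{F'}(t) \geq \gap_F(t) + 1$. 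For the equality case, observe that in a closed fork the longest tine must end at an honest leaf, and the honest vertex of largest label has the largest depth among honest vertices (again by~\ref{fork:honest-depth}); hence $\height(F')$ equals the length of the tine terminating at the honest vertex labeled $|w|+1$. When $F'$ is a conservative extension, that length is exactly $\height(F)+1$ by definition, so $\height(F') = \height(F)+1$ and therefore $\gap_{F'}(t) = \gap_F(t) + 1$ exactly.

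Combining the two observations gives
\[
  \reach_{F'}(t) = \reserve_{F'}(t) - \gap_{F'}(t) = \reserve_F(t) - \gap_{F'}(t) \leq \reserve_F(t) - \gap_F(t) - 1 = \reach_F(t) - 1,
\]
with equality in the conservative case. I do not expect a serious obstacle; the only point requiring care is the justification of $\height(F') \geq \height(F)+1$ (and, for equality, the identification of the longest tine of the closed fork $F'$), both of which rest on closedness together with the strict monotonicity of honest depths.
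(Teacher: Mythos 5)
Your proof is correct and follows essentially the same route as the paper's: reserve is unchanged because no new adversarial index is introduced, while the gap grows by at least one since the height increases by at least one, with equality exactly in the conservative case. The only difference is that you spell out the height increase via closedness and the honest-depth axiom, which the paper simply asserts.
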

  \begin{proof}
    Definitionally, we know that $\reach_{F'}(t)=\text{reserve}_{F'}(t)-\text{gap}_{F'}(t).$ From $F$ to $F'$, the length of the longest tine increases by at least one, and the length of $t$ does not change, so we observe that $\gap_{F'}(t) \geq \gap_{F}(t) + 1$ 
    with equality only for conservative extensions. 
    The reserve of $t$ does not change, because there are no new 1s in the characteristic string. Therefore, 
    $
      \reach_{F'}(t)
      =\text{reserve}_{F'}(t)-\text{gap}_{F'}(t)
      \leq \text{reserve}_{F}(t)-\text{gap}_{F}(t) - 1
      =\reach_{F}(t) - 1
      % \,.
      % \qedhere
    $. 
    % $\qed$
  \end{proof}

  % Now we are ready do prove Proposition~\ref{prop:muxy0-lowerbound-adv}.
  % \begin{proof}[Proof of Proposition~\ref{prop:muxy0-lowerbound-adv}]
    Assume the premise of Proposition~\ref{prop:muxy0-lowerbound-adv}.
    That is, $F$ is a fork for $xy$ so that 
    $\rho(F) = \rho(xy), \mu_x(F) = \mu_x(y)$, 
    and the tine $t_\rho$ identified by the fork-building strategy in Section~\ref{sec:strategy-x} 
    belongs to an $F$-tine-pair $(t_\rho, t_x)$ that witnesses $\mu_x(F)$. 
    To recap, this means 
    $\reach_F(t_\rho) = \rho(F) = \rho(x)$, 
    $\reach_F(t_x) = \mu_x(F) = \mu_x(y)$,  
    and the tines $t_\rho, t_x$ are disjoint over $y$ 
    (i.e., $\ell(t_\rho \Intersect t_x) \leq |x|$). 
    In addition, since $\sigma \in F'$ is a conservative extension of $s$, 
    we have $\reach_{F'}(\sigma) = 0$.
    Finally, let $S$ be the set of all zero-reach tines in $F$.
    
    We will break this part of the proof into several cases 
    based on the relative reach and margin of the fork. 

    \paragraph{Case 1: $\rho(xy) > 0$ and $\mu_x(y)=0$.} 
    We wish to show that 
    $\rho(F') = \rho(xy0)$ and 
    $\mu_x(F') \geq 0$. 
    Since $\rho(F) > 0$, $s \neq t_\rho$ and therefore, 
    By~\eqref{eq:rho-recursive} and Claim~\ref{claim:nex}, 
    Thus 
    $\rho(F^\prime) 
    \geq \reach_{F'}(t_\rho) 
    = \reach_F(t_\rho) - 1 
    = \rho(xy) - 1 
    = \rho(xy0)
    $. Therefore, 
    $\rho(F^\prime) = \rho(xy0)$.

    Since $\mu_x(y)=0$, 
    $t_x$ is a candidate for being selected as $s$ and hence 
    $\ell(s \Intersect t_\rho) \leq \ell(t_x \Intersect t_\rho) \leq |x|$. 
    Thus $\sigma, t_\rho \in F'$ are disjoint over $y0$ 
    and, therefore, $\mu_x(F') \geq \reach_{F'}(\sigma) = 0$.

    \paragraph{Case 2: $\rho(xy)=0$.}
    We wish to show that 
    $\rho(F') = \rho(xy0)$ and 
    $\mu_x(F') \geq \mu_x(y) - 1$. 
    Since there is at least one zero-reach tine, $\reach_F(s) = 0$ 
    and, in addition, $t_\rho \in S, |S| \geq 1$.
    Since $\reach_{F'}(\sigma) = 0 = \rho(xy0)$ by~\eqref{eq:rho-recursive}, 
    $\sigma$ has the maximal reach in $F'$ and, 
    in particular, $\rho(F') = \rho(xy0)$.
    Depending on $S$ and $s$, there are three possibilities. 
      If $s = t_\rho$, 
      this means $S=\{t_\rho\}$, 
      $t_x$'s $F'$-reach is one less than 
      its $F$-reach, 
      and $\sigma, t_x$ are still disjoint over $y0$.   
      Hence $\mu_x(F') \geq \reach_F(t_x) - 1 = \mu_x(y)-1$.       
      If $s = t_x$, 
      then 
      $t_\rho$'s $F'$-reach is one less than 
      its $F$-reach 
      and $\sigma, t_\rho$ are disjoint over $y0$. 
      Hence $\mu_x(F') \geq \reach_F(t_\rho) - 1 
      = \rho(xy) -1 \geq \mu_x(y)-1$.
      Finally, suppose $s \neq t_\rho$ and $s \neq t_x$. 
      Then $\mu_x(y) = \reach_F(t_x) < 0$ and, in addition, 
      $s$ (and $\sigma$) must share an edge with $t_\rho$ somewhere over $y$ since otherwise, 
      we would have achieved $\mu_x(y)=0$. 
      As a result, $t_x$ and $\sigma$ must be disjoint over $y0$. 
      Hence $\mu_x(F') \geq \reach_{F'}(t_x) = 
      \reach_F(t_x) - 1 = \mu_x(y) - 1$.

    \paragraph{Case 3: $\rho(xy) > 0,\mu_x(y)\neq 0$.}
    We wish to show that 
    $\rho(F') = \rho(xy0)$ and 
    $\mu_x(F') \geq \mu_x(y) - 1$.
    In this case, $s \neq t_\rho$ and $s \neq t_x$ and therefore, 
    $\reach_{F'}(t_i)=\reach_{F}(t_i)-1$ for $i = {1,2}$. 
    The tines $t_\rho, t_x$ are still disjoint over $y0$. 
    In addition, $t_\rho$ will still have the maximal reach in $F'$ 
    since $\reach_{F'}(t_\rho) = \rho(xy) - 1 = \rho(xy0)$ by~\ref{eq:rho-recursive}. 
    Therefore, 
    $\rho(F') = \rho(xy0)$ and, in addition, 
    $\mu_x(F') \geq \reach_{F'}(t_x) = \reach_{F}(t_x) - 1 = \mu_x(y)-1$.

  % \end{proof}
  This complete the proof of Proposition~\ref{prop:muxy0-lowerbound-adv}.
  \hfill $\qed$

\subsection{Proof of Lemma~\ref{lem:relative-margin}}

% \begin{proof}[Proof of Lemma~\ref{lem:relative-margin}]
Let $F$ be a closed fork for the characteristic string $xy$. 
Let $t_\rho, t_x \in F$ be the two tines that witness $\mu_x(F)$, 
i.e., $\reach(t_\rho) = \rho(F), \reach_F(t_x) = \mu_x(F)$, 
and $t_\rho, t_x$ are disjoint over $y$. 
Let $\hat{t}$ be the longest tine in $F$.

In the base case, where $y=\varepsilon$, we observe that any two tines of $F$ are disjoint over $y$. Moreover, even a single tine $t_\rho$ is disjoint with itself over $\varepsilon$. Therefore, the relative margin $\mu_x(\varepsilon)$ must be greater than or equal to the reach of the tine $t$ that achieves $\text{reach}(t)=\rho(x)$. The relative margin must also be less than or equal to $\rho(x)$, because that is, by definition, the maximum reach over all tines in all forks $F\vdash w$. Putting these facts together, we have $\mu_x(\varepsilon)=\rho(x)$.

Moving beyond the base case, we will consider a pair of closed forks $F\vdash xy$ and $F'\vdash xyb$ 
such that $F \fprefix F'$, $x,y\in\{0,1\}^*$, $|y| \geq 1$, and $b \in\{0,1\}$. 
If $b=1$, we have set $F' = F$. The reach of each tine increases by 1 from $F$ to $F'$ 
since the gap has not changed but the reserve has increased by one. Therefore, $\mu_x(y1) = \mu_x(y)+1$, as desired.

If $b=0$, however, things are more nuanced. 
% Considering the equality in~\ref{eq:mu-relative-recursive}, 
% Proposition~\ref{prop:muxy0-lowerbound-adv} proves that the left-hand side is at least the right-hand side, 
% and Proposition~\ref{prop:muxy0-upperbound} proves that the left-hand side is at most the right-hand side. 
Consider the following proposition:

\begin{proposition}\label{prop:muxy0-upperbound}
  Let $x, y$ be arbitrary Boolean strings, $|y| \geq 1$, and $w = xy0$. 
  Then $\mu_x(y0) \leq 0$ if $\rho(xy) > \mu_x(y) = 0$, and 
  $\mu_x(y0) \leq \mu_x(y) - 1$ otherwise.
\end{proposition}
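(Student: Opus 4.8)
The plan is to take a closed fork $F' \Fork xy0$ with $\mu_x(F') = \mu_x(y0)$, delete from it the unique vertex $v$ labeled $n+1 := |xy|+1$, prune adversarial leaves to recover a closed fork $F \Fork xy$, and argue that $\mu_x(F)$ exceeds $\mu_x(F')$ by enough to give the claimed bound through $\mu_x(y) \ge \mu_x(F)$. Since $w_{n+1}=0$ is an honest slot, axiom~\ref{fork:unique-honest} gives a unique vertex $v$ with $\ell(v)=n+1$, and axiom~\ref{fork:honest-depth} together with closedness of $F'$ forces $v$ to be the endpoint of the unique longest tine $\hat t$ of $F'$ (any vertex as deep as $v$ would carry an honest leaf at least that deep, contradicting monotone honest depths); in particular $\gap_{F'}(\hat t)=0$, and since no adversarial index exceeds $n+1$, $\reach_{F'}(\hat t)=0$. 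Fix tines $t_1,t_2$ of $F'$ that are disjoint over $y0$ with $\mu_x(F')=\min(\reach_{F'}(t_1),\reach_{F'}(t_2))$.

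First I would record the two bookkeeping facts that drive everything. \emph{(i)} Deleting the leaf $v$ yields a (possibly non-closed) fork $F''\Fork xy$ with $\height(F'')=\height(F')-1$, because the truncation $\hat t^{-}$ of $\hat t$ is a longest tine of $F''$; hence every tine $t\ne\hat t$ of $F'$ satisfies $\reach_{F''}(t)=\reach_{F'}(t)+1$, reserve being a function of the string alone (and unaffected by removing the honest vertex $v$) while the gap drops by one. \emph{(ii)} Pruning an adversarial leaf $u$ never decreases the reach of a surviving tine, and never decreases the reach of the maximal-label prefix $t^{-}$ of a tine that ended at $u$: the loss of one in length is offset by $\reserve(t^{-})\ge\reserve(t)+1$ (since $\ell(u)$ is an adversarial index) while the gap grows by at most one as height is non-increasing; moreover $t^{-}$, being a prefix, stays disjoint over $y$ from any tine it was disjoint from. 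Iterating \emph{(ii)} after \emph{(i)} produces a closed fork $F\Fork xy$ together with prunings $\tilde t_i$ of $t_i$ satisfying $\reach_F(\tilde t_i)\ge\reach_{F''}(t_i)$ and $\tilde t_1\nsim_x\tilde t_2$. I would check along the way that $F$ is a legitimate fork for $xy$: honest vertices are never deleted, labels stay strictly increasing on paths, and honest depths stay strictly increasing.

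Then I would split on whether a witness tine equals $\hat t$. \emph{Case A:} neither $t_1$ nor $t_2$ is $\hat t$. Both lie in $F''$, so $\reach_F(\tilde t_i)\ge\reach_{F'}(t_i)+1$, whence $\mu_x(y)\ge\mu_x(F)\ge\mu_x(F')+1$, giving $\mu_x(y0)\le\mu_x(y)-1$; this also covers the first regime, where $\mu_x(y)=0$ and so $\mu_x(y0)\le-1\le0$. \emph{Case B:} say $t_2=\hat t$, so $\mu_x(y0)=\min(\reach_{F'}(t_1),0)\le0$. Here I take as the two tines in $F$ the pruning $\tilde t_1$ of $t_1$ and the pruning of $\hat t^{-}$ (a longest tine of $F''$ with $\reach_{F''}(\hat t^{-})=\reserve(\hat t^{-})\ge0$); they are disjoint over $y$ since $\hat t^{-}$ is a prefix of $\hat t$ and $t_1$ is disjoint over $y0$ from $\hat t$. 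If $\mu_x(y0)\le-1$, then $\reach_{F'}(t_1)=\mu_x(y0)$, so $\mu_x(F)\ge\min(\reach_{F'}(t_1)+1,\,0)=\mu_x(y0)+1$ and again $\mu_x(y0)\le\mu_x(y)-1$. The remaining, delicate possibility is $\mu_x(y0)=0$ in Case B: then $\reach_{F'}(t_1)\ge0$, so $\reach_F(\tilde t_1)\ge\reach_{F'}(t_1)+1\ge1$, which forces $\rho(xy)\ge\rho(F)\ge1$; also $\mu_x(F)\ge\min(1,0)=0$, so $\mu_x(y)\ge0$. Since $\rho(xy)>0$, we are either in the first regime ($\mu_x(y)=0$, which asks only for $\mu_x(y0)\le0$, and $\mu_x(y0)=0$) or have $\mu_x(y)\ge1$ (so $\mu_x(y0)=0\le\mu_x(y)-1$). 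All cases are covered.

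I expect the main obstacle to be exactly this last possibility: when a tine witnessing $\mu_x(F')$ coincides with the new longest tine and $\mu_x(F')=0$, the generic ``$+1$'' gain is swallowed by the unit drop in height, and the clean bound $\mu_x(y0)\le\mu_x(y)-1$ can genuinely fail — one must instead observe that the \emph{companion} witness tine acquires reach at least $1$ in the reconstructed fork $F$, certifying $\rho(xy)\ge1$ and thereby landing us in the regime where only $\mu_x(y0)\le0$ is asserted. A secondary nuisance is keeping the reach and disjointness inequalities correctly chained through the leaf-pruning iteration and confirming that each pruning step preserves all four fork axioms for the string $xy$.
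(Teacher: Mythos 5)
Your proof is correct; I checked the two bookkeeping facts (the $+1$ reach shift for tines other than $\hat t$ after deleting the slot-$(|xy|{+}1)$ vertex, and reach-monotonicity under pruning adversarial leaves with truncation), the preservation of disjointness over $y$ under truncation, and all case branches, including the delicate sub-case where a witness tine equals $\hat t$ and $\mu_x(y0)=0$, where you correctly derive $\rho(xy)\geq 1$ and $\mu_x(y)\geq 0$ and land in the regime where only $\mu_x(y0)\leq 0$ is claimed. The underlying mechanism is the same as the paper's: compare an optimal closed fork $F'\Fork xy0$ with an induced closed fork for $xy$, using that the tine ending at the unique honest vertex of the last slot has reach $0$ in $F'$ while every other tine's reach changes by one unit when passing between the two forks. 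But you run this in the reverse direction and organize it differently. The paper takes the closed fork $F\fprefix F'$ for $xy$ as given, leans on its extension machinery (Claims~\ref{claim:ex} and~\ref{claim:nex}) and on the reach recursion~\eqref{eq:rho-recursive}, and splits cases by the regimes of $(\rho(xy),\mu_x(y))$, arguing by contradiction when $\rho(xy)>\mu_x(y)=0$; you instead build the closed fork for $xy$ explicitly by deleting the last honest vertex and iteratively pruning adversarial leaves, prove the needed reach inequalities from scratch, split on whether a witness tine coincides with $\hat t$, and \emph{derive} the relevant regime (via $\rho(xy)\geq 1$) in the one sub-case where the $-1$ bound genuinely fails. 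What your route buys is self-containedness---no appeal to the $\rho$-recurrence, the extension formalism, or the uniqueness of the closed fork prefix---at the cost of re-proving, as your facts (i) and (ii), essentially the content of Claims~\ref{claim:ex} and~\ref{claim:nex}. One cosmetic caveat: the paper defines $\gap$ and $\reach$ only for closed forks, so your intermediate non-closed fork $F''$ implicitly extends these definitions (gap $=$ height $-$ length); this is harmless but deserves a sentence.
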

Recall that $\mu_x(F^\prime) \geq \mu_x(y0)$ by Proposition~\ref{prop:muxy0-lowerbound-adv}. 
Combining this with Proposition~\ref{prop:muxy0-upperbound} above, we conclude 
that $\mu_x(F^\prime) = \mu_x(y0)$ and, in addition, that 
the fork $F^\prime$ actually achieves the maximum reach and 
the maximum relative margin 
for the characteristic string $xy0$. 
It remains to prove Proposition~\ref{prop:muxy0-upperbound}.

\begin{proof}[Proof of Proposition~\ref{prop:muxy0-upperbound}]
  Suppose $F'\vdash xy0$ is a closed fork such that 
  $\rho(xy0)=\rho(F')$ and $\mu_x(y0)=\mu_x(F')$. 
  Let $t_\rho, t_x \in F'$ to be a pair of tines disjoint over $y$ in $F'$ such that $\reach_{F'}(t_\rho)=\rho(F')$ and $\reach_{F'}(t_x)=\mu_x(F')=\mu_x(y0)$. 
  Let $F\vdash xy$ be the unique closed fork such that $F\fprefix F'$.  
  Note that while $F'$ is an extension of $F$, 
  it is not necessarily a conservative extension.

  \paragraph{Case 1: $\rho(xy)>0$ and $\mu_x(y)=0$.} 
    We wish to show that $\mu_x(y0) \leq 0$.
    Suppose (toward a contradiction) that $\mu_x(y0) > 0$. 
    Then neither $t_\rho$ or $t_x$ is a conservative extension because, as we proved in Claim ~\ref{claim:ex}, conservative extensions have reach exactly 0. This means that $t_\rho$ and $t_x$ existed in $F$, and had strictly greater reach in $F$ than they do presently in $F'$ (by Claim ~\ref{claim:nex}). 
    Because $t_\rho$ and $t_x$ 
    % have been implicitly 
    are 
    disjoint over $y0$, they must also be disjoint over $y$; therefore the $\mu_x(F)$ must be at least $\min\{\reach_F(t_\rho),\reach_F(t_x)\}$. 
    Following this line of reasoning, we have 
    $0 
    = \mu_x(y) 
    \geq \min_{i \in \{1,2\}}\{\reach_F(t_i)\}
    > \min_{i \in \{1,2\}}\{\reach_{F'}(t_i)\}
    = \mu_x(F') = \mu_x(y0) > 0
    $, a contradiction, as desired.

  \paragraph{Case 2: $\rho(xy)=0$.}
    We wish to show that $\mu_x(y0) \leq \mu_x(y) - 1$
    or, equivalently, that $\mu_x(y0) < \mu_x(y)$. 
    First, we claim that $t_\rho$ must arise from an extension. 
    Suppose, toward a contradiction, that $t_\rho$ is not an extension, 
    i.e., $t_\rho \in F$. 
    The fact that $t_\rho$ achieves the maximum reach in $F'$ 
    implies that 
    $t_\rho$ has non-negative reach 
    since the longest honest tine always achieves reach 0. 
    Furthermore, 
    Claim ~\ref{claim:nex} states that 
    all tines other than the extended tine see their reach decrease. 
    Therefore, $t_\rho \in F$ must have had a strictly positive reach. 
    But this contradicts the central assumption of the case, i.e., 
    that $\rho(xy)=0$. 
    Therefore, we conclude that $t_\rho \in F', t_\rho \not \in F$, and, 
    since $F'$ differs from $F$ by a single extension, 
    $t_x \in F$.

    Let $s \in F$ be the tine-prefix of $t_\rho \in F'$ so that 
    $t_\rho$ is an extension of $s$. 
    Since $\reach_{F'}(t_\rho) = \rho(xy0) = 0$ by~\eqref{eq:rho-recursive}, 
    $\reach_F(s)$ must be at least 0. 
    Additionally, since $\rho(xy)=0$, $\reach_F(s) \leq 0$. 
    Together, these statements tell us that $\reach_F(s)=0$. 
    Restricting our view to $F$, we see that 
    $s$ and $t_x$ are disjoint over $y$ and 
    so it must be true that 
    $\min\{\reach_F(s),\reach_F(t_x)\} \leq \mu_x(y)$. 
    Because $\reach_F(s)=0$ and $\reach_F(t_x) \leq \rho(xy)=0$, we can simplify that statement to $\reach_F(t_x) \leq \mu_x(y)$. 
    Finally, since $t_x \in F$, 
    Claim ~\ref{claim:nex} tells us that 
    $\reach_{F'}(t_x) < \reach_F(t_x)$. 
    Taken together, these two inequalities show that 
    $\mu_x(y0) = \reach_{F'}(t_x) < \reach_F(t_x) \leq \mu_x(y)$.

  \paragraph{Case 3: $\rho(xy)>0,\mu_x(y)\neq0$.}
    We wish to show that $\mu_x(y0) \leq \mu_x(y) - 1$ 
    or, equivalently, that $\mu_x(y0) < \mu_x(y)$. 
    Note that by~\ref{eq:rho-recursive}, 
    $\rho(xy0) = \rho(xy) - 1 \geq 0$.
    We will break this case into two sub-cases. 
    \begin{description}[font=\normalfont\itshape\space]
      \item[If both $t_\rho, t_x \in F$.] 
      Then  $t_\rho, t_x \in F$ and, consequently, 
      $\min\{\reach_F(t_\rho),\reach_F(t_x)\} \leq \mu_x(y)$ since $t_\rho$ and $t_x$ must be disjoint over $y$. 
      Furthermore, by Claim~\ref{claim:nex}, 
      $\reach_{F'}(t_i)<\reach_F(t_i)$ for $i\in\{1,2\}$. 
      Therefore, 
      $\mu_x(y0) 
      = \reach_{F'}(t_x) 
      = \min\{\reach_{F'}(t_\rho),\reach_{F'}(t_x)\} 
      < \min\{\reach_{F}(t_\rho),\reach_{F}(t_x)\} 
      \leq\mu_x(y)$, as desired. 

      \item[If either $t_\rho \not \in F$ or $t_x \not \in F$.]
      It must be true that $\reach_{F'}(t_x)\leq 0$, because either $t_x$ is the extension (and therefore has reach exactly 0) or $t_\rho$ is the extension and we have $\reach_{F'}(t_x)=\mu_x(y0)\leq\rho(xy0)=\reach_{F'}(t_\rho)=0$. Recall that we have assumed $\mu_x(y)\neq0$. If $\mu_x(y)>0,$ we are done: certainly $\mu_x(y0)\leq0<\mu_x(y)$. If, however, $\mu_x(y)<0$, there is more work to do. 
      In this case, we claim that $t_x \in F$, i.e., $t_x$ did not arise from an extension. 
      To see why, consider the following: if $t_x$ arose from extension, then there must be some $s \in F$ 
      so that $s \Prefix t_x$ and $\reach_F(s) \geq 0$. Additionally, by our claim about non-extended tines, we see that 
      $\reach_F(t_\rho)>\reach_{F'}(t_\rho) = \rho(xy0) \geq 0$. 
      Therefore, 
      $\mu_x(y) \geq \min\{\reach_F(t_\rho), \reach_F(s)\} \geq 0$, 
      contradicting our assumption that $\mu_x(y) < 0$. 
      Thus $t_x \in F$. 

      The only remaining scenario is the one in which 
      $\mu_x(y)<0$ and $t_\rho$ arises from an extension 
      of some tine $s \in F, \reach_F(s) \geq 0$. 
      In this scenario, $t_x$ cannot have been the extension 
      (since there is only one). 
      By Claim~\ref{claim:nex}, 
      $\reach_F(t_x) > \reach_{F'}(t_x)$. 
      Using a now-familiar line of reasoning, note that 
      the two tines
      $t_x$ and $s$ are disjoint over $y$ 
      and, therefore, 
      $\mu_x(y) \geq \min\{\reach_F(s), \reach_F(t_x)\}$. 
      Since, 
      $\mu_x(y) < 0$ by assumption and $\reach_F(s) \geq 0$, 
      it follows that 
      $\mu_x(y) \geq \reach_F(t_x) > \reach_{F'}(t_x)=\mu_x(y0)$, as desired. \qedhere
    \end{description}
\end{proof}

This completes the proof of Lemma~\ref{lem:relative-margin}. 
\hfill $\qed$

% In fact, by maximizing relative margin, she can play the $(\Distribution,T;s,k)$-settlement game optimally (i.e., win whenever there exists a winning configuration for the given challenge characteristic string). For the settlement game to be winnable, there must exist some $\hat{y}$, a prefix of $y$, so that $|\hat{y}| \geq k + 1$ and $\mu_x(\hat{y}) \geq 0$. Recall from Fact~\ref{fact:margin-balance} that for a characteristic string $xy$, there is an $x$-balanced fork $F \vdash xy$ if and only if $\mu_x(y) \geq 0$.  Because in each slot, our adversary builds a fork that achieves the maximum value of relative margin for all possible decompositions, then at slot $|x\hat{y}|$ she will have built a fork $F\vdash x\hat{y}$ such that $\mu_x(F)\geq 0$. As we argued in the proof of Fact\ref{fact:margin-balance}, the definition of relative margin tells us that $F$ has two tines $t_\rho$ and $t_2$ with nonnegative reach that diverge prior to the start of $\hat{y}$. Consequently, she is able to append $\gap(t_i)$ adversarial vertices from our reserve to each $t_i$ so that they become maximum length, thus winning the $(\Distribution,T;s,k)$-settlement game.

% In the next section, we present a stronger strategy 
% for building a fork $F \Fork w$ which 
% achieves $\mu_x(F) = \mu_x(y)$ \emph{simultaneously for all 
% decompositions} $w = xy$. 

%%% Local Variables:
%%% mode: latex
%%% TeX-master: "main"
%%% End:

\section{Canonical forks and an optimal online adversary}
\label{sec:canonical-forks}

Let $w$ be a characteristic string, written $w = xy$, 
and recall the online fork-building strategy from Section~\ref{sec:strategy-x}. 
In Proposition~\ref{prop:muxy0-lowerbound-adv}, 
we showed that the fork produced by this strategy (for the string $w$) 
always contains a tine-pair $(t_\rho, t_x)$ that witnesses $\mu_x(y)$. 
In this section, we present an online fork-building strategy 
which produces a fork that \emph{simultaneously} contains, 
for every prefix $x \PrefixEq w$, 
a tine-pair that witnesses $\mu_x(y)$. 
These forks are called \emph{canonical forks}, defined below.
\begin{definition}[Canonical forks]
	Let $w_1 \ldots w_T \in \{0,1\}^T$. 
	For $n = 0, 1, \ldots, T$, a \emph{canonical fork $F_n$ for $w = w_1\ldots w_n$} 
	is inductively defined as follows. 
	If $n = 0$ then $F_0$ is the trivial fork for the empty string; 
	it consists of a single (honest) vertex and no edge. 
	If $n \geq 1$, the following holds: 
	$F_n$ is a closed fork so that $F_{n-1} \ForkPrefix F_n$. 
	$F_n$ contains an honest tine $\tau_\rho$ so that 
	$\reach(\tau_\rho) = \rho(F_n) = \rho(w)$. 
	For every decomposition $w = xy, x \Prefix w$, 
	% $\mu_x(F_n) = \mu_x(y)$ and, in addition, 
	$F_n$ contains two honest tines $\tau_x, \tau_{\rho x}$ 
	so that the tine-pair $(\tau_{\rho x}, \tau_x)$ witnesses $\mu_x(F_n) = \mu_x(y)$. 
	% It also contains a tine $\tau_w$ so that $\reach(\tau_w) = \mu_w(\varepsilon)$. 
	% The sequence $(F_n), n = 0, 1, \ldots, T$ is called a \emph{canonical sequence of forks for $w$}. 
	The (possibly non-distinct) designated tines $\tau_\rho, \tau_{\rho x}, \tau_x, x \Prefix w$ 
	are called the \emph{witness tines}.
\end{definition}
Note that if one's objective is to create a fork 
which contains many early-diverging tine-pairs witnessing large relative margins, 
a canonical fork is the best one can hope for.

\subsection{An online strategy for building canonical forks}
	Let $w$ be a characteristic string, 
	written as $w = xy$, and 
	let $F$ be a fork for $w$. 
	If the tines $t_1, t_2 \in F$ are disjoint over $y$, 
	we say \emph{$t_1$ and $t_2$ are $y$-disjoint}, or equivalently, 
	\emph{$t_1$ is $y$-disjoint with $t_2$}. 
	Note that this means $\ell(t_1 \Intersect t_2) \leq |x|$. 
	Let $\leq_\pi$ be the lexicographical ordering of the tines where 
	each tine is represented as the list of vertex labels appearing in the tine's root-to-leaf path.
	If two tines have the same vertex labels, 
	$\leq_\pi$ must break tie in an arbitrary but consistent way. 

	For a fixed fork, let $A, B$ be two sets of tines. 
	We define the \emph{early-divergence witness for $(A,B)$} as follows.  
	Let $C_{AB}$ be an ordered set of tine-pairs $(t'_a, t'_b), a' \in A, b' \in B$ 
	that minimize $\ell(t_a \Intersect t_b), t_a \in A, t_b \in B$. 
	The order of the elements in $C_{AB}$ is the following: 
	$(t_1, t_2) \leq (t'_1, t'_2)$ if and only if $t_1 \leq_\pi t'_1$ and $t_2 \leq_\pi t'_2$.
	The first element of $C_{AB}$ is called the early-divergence witness for $(A,B)$.
	% we say that a tine $t \in A$ \emph{diverges earliest with respect to $B$ with witness $t' \in B$} if 
	% $(t, t')$ attains the smallest value of $\ell(t_a, t_b)$ for all tines $t_a\in A, t_b\in B$ and, 
	% in addition, that $t, t'$ have the smallest $\leq_\pi$ ranks 
	% among all pairs that attain this minimum.
	% $t = \arg \min_{t_a \in A}\left\{ \min_{t_b \in B} \ell(t_a \Intersect t_b) \right\}$ 
	% and $t'$ 
	% has the smallest $\leq_\pi$-rank 
	% among all tines $t_b \in B$ 
	% so that $\ell(t \Intersect t_b)$ 
	% attains the minimum above. 

	The fork-building strategy $\Adversary^*$ presented in Figure~\ref{fig:adv-opt} 
	builds canonical forks in an online fashion, i.e., 
	it scans the characteristic string $w$ once, from left to right, 
	maintains a ``current fork,'' 
	and updates it after seeing each new symbol by only adding new vertices. 
	Since the final fork $F \Fork w$ is canonical, 
	it satisfies $\mu_x(F) = \mu_x(y)$ simultaeneously for all decompositions $w = xy$; 
	hence we call $\Adversary^*$ the \emph{optimal online adversary}.

	% Define $\Tines(F)$ as the set of all tines in $F$ and 
	% \[
	% 	\MaxReachTines_F(A) = \{t \in A \SuchThat \text{$\reach_F(t)$ is maximum over all $t \in A$}\}
	% 	\,.
	% \]
	% % When the fork $F$ is understood from the context, we omit the subscript $F$. 
	% We write $\MaxReachTines(F)$ to denote the set of tines with the largest reach in fork $F$. 

	% For $A, B \subseteq \Tines(F)$, define
	% \[
	% 	\EarliestDiverging_B(A) = \min_{\leq_\pi} \{a^* \SuchThat \text{$a^* \in A$ minimizes $\ell(a \Intersect b), a \in A, b \in B$}\}
	% 	\,.
	% \]
	% If $B$ has a single element $b$, we write $\EarliestDiverging_b(A)$.
	% % For any set $B \subseteq \Tines(F)$, define 
	% % \[
	% % 	\Disjoint_F(B,y) = \{ t \in F \SuchThat \text{$t$ is $y$-disjoint with every tine $b \in B$ }\}
	% % 	\,.
	% % \]
	% % When $B$ has a single element $b$, we overload the above notation and write $\Disjoint_F(b, y)$. 
	% Finally, for any tine $\tau \in F$, define 
	% \[
	% 	\Disjoint_F(\tau,y) = \{ t \in F \SuchThat \text{$t$ is $y$-disjoint with $\tau$ }\}
	% 	\,.
	% \]
	% % When $B$ has a single element $b$, we overload the above notation and write $\Disjoint_F(b, y)$. 

\begin{figure}[!h]
	\begin{center}
	  \fbox{
	    \begin{minipage}{.9 \textwidth}
	      \begin{center}
	        \textbf{The strategy $\Adversary^*$}
	      \end{center}
				Let $w = w_1 \ldots w_n \in \{0,1\}^n$ and $w_{n + 1} \in \{0, 1\}$. 
				If $n = 0$, set $F_0 \Fork \varepsilon$ as 
				the trivial fork comprising a single vertex. 
				Otherwise, for $n \geq 0$, 
				let $F_n$ be the closed fork 
				built recursively by $\Adversary^*$ for the string $w$. 
				If $w_{n + 1} = 1$, set $F_{n + 1} = F_n$.
				Otherwise, 
				the closed fork $F_{n + 1} \Fork w0$ 
				is the result of a 
				single conservative extension 
				of a tine $s \in F_n$ into a new honest tine 
				$\sigma \in F_{n+1}, \ell(\sigma) = n + 1$; 
				The tine $s$ can be identified as follows. 
				If $F_n$ contains no tine with reach zero, 
				$s$ is the unique longest tine in $F_n$. 
				Otherwise, 
				$s$ is the reach-zero tine 
				that diverges earliest with respect to 
				the set of maximal-reach tines in $F_n$. 
				If there are multiple candidates for $s$, 
				select the one with the smallest $\leq_\pi$-rank.
	      \begin{center}
	        \textbf{Designating the witness tines}
	      \end{center}
		% \paragraph{Designating the witness tines.}
			Writing $w^\prime = w w_{n + 1}, F = F_n$, and $F^\prime = F_{n + 1}$, 
			identify the tines 
			$
			\tau_\rho, \tau_w, 
			% , \tau_{w^\prime} 
			\tau_x, \tau_{\rho x} \in F^\prime, x \Prefix w
			$ as follows. 
			% If there are multiple candidates for any designated tine then 
			% select the one with the smallest $\leq_\pi$-rank. 
			Let $R$ (resp. $R'$) be the set of $F$-tines (resp. $F'$-tines) 
			with the maximal $F$-reach (resp. $F'$-reach). 
			Set $\tau_\rho$ as the element of $R'$ with smallest $\leq_\pi$-rank. 
			% Set $\tau_{w'}$ as the element of $R'$ that diverges earliest from $\tau_\rho$; 
			% if there are multiple candidates, select the one with the smallest $\leq_\pi$-rank.
			% Let $A$ be the set of $F$-tines that attain the reach $\max_{t \in F} \reach_{F'}(t)$. 
			% Set $\tau_{w}$ as the element in $A$ that diverges earliest with respect to $R'$ with witness $\tau_{\rho w} \in R'$; 
			% if there are multiple candidates, select the one with the smallest $\leq_\pi$-rank.
			Set $(\tau_{w}, \tau_{\rho w})$ as the early-divergence witness for $(R, R')$.
			For every decomposition $w = xy, |y| \geq 1, |x| \geq 0$, do as follows. 
			Let $B_x$ be the set of $F'$-tines that are $yw_{n+1}$-disjoint 
			with \emph{some} maximal-reach tine in $R'$. 
			Let $C_x \subseteq B_x$ contain the tines with the maximal $F'$-reach, 
			the maximum taken over $B_x$. 
			% that attain the reach $\max_{t \in B_x} \reach_{F'}(t)$. 
			% Set $\tau_{x}$ as the element in $C_x$ that diverges earliest with respect to $R'$; 
			% if there are multiple candidates, select the one with the smallest $\leq_\pi$-rank. 
			%%
			% Identify a tine-pair $(\tau_{x}, \tau_{\rho x})$ 
			% so that 
			% \[
			% 	(\tau_{x}, \tau_{\rho x}) = \arg\, \min_{(t,r') \SuchThat t \in C_x, r' \in R'} \ell(t \Intersect r') 
			% 	\,.
			% \]
			% If there are multiple pairs, break tie as follows: 
			% sort the pairs first by the $\leq_\pi$ rank of the first component and 
			% then the $\leq_\pi$ rank of the second component; 
			% take the first pair in this ordering. 
			Set $(\tau_{x}, \tau_{\rho x})$ as the early-divergence witness for $(C_x, R')$.

	    \end{minipage}
	  }
	\end{center}
	\caption{Optimal online adversary $\Adversary^*$}
	\label{fig:adv-opt}
 \end{figure}
	% \paragraph{The strategy $\Adversary^*$.}

	\begin{theorem}[$\Adversary^*$ builds canonical forks]\label{thm:canonical-fork}
		Let $w \in \{0,1\}^n$ and $b \in \{0,1\}$. 
		Let $F \Fork w$ and $F^\prime \Fork wb$ be two closed forks 
		built by the strategy $\Adversary^*$ 
		so that $F \ForkPrefix F^\prime$ and suppose, in addition, 
		that $F$ is canonical. 
		Then $F^\prime$ is canonical as well.
	\end{theorem}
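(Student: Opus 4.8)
The plan is to prove the statement by induction on $n=|w|$, of which Theorem~\ref{thm:canonical-fork} is exactly the inductive step (the base case being the trivial one‑vertex fork, which is vacuously canonical). Throughout I would use the elementary observation that every quantity appearing in the definition of a canonical fork is already realized by \emph{honest} tines: peeling a maximal adversarial suffix off the end of any tine leaves its reach unchanged (the reserve gains one for each removed adversarial index while the gap gains one for the shortened length) and can only increase disjointness over a suffix, so without loss of generality all witness tines may be taken honest. Consequently, for the upper bounds I would simply appeal to the definitions: $\rho(wb)$ and $\mu_x(y)$ are by construction maxima over all closed forks for the relevant strings (for the latter, cf.\ Proposition~\ref{prop:muxy0-upperbound}), so for the fork $F'$ produced by $\Adversary^*$ one gets $\rho(F')\le\rho(wb)$ and $\mu_x(F')\le\mu_x(y)$ for free, and everything reduces to exhibiting honest tines that meet the matching lower bounds.

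First I would dispatch the case $b=1$. Here $F'=F$, so every tine keeps its gap and gains one unit of reserve, whence $\reach_{F'}(t)=\reach_F(t)+1$ for all $t$. Since $\rho(w1)=\rho(w)+1$ and $\mu_x(y1)=\mu_x(y)+1$ for every decomposition $w1=xy$ with $x\PrefixEq w$ (Lemmas~\ref{lem:margin} and~\ref{lem:relative-margin}), and since no edge of $F'$ terminates at a vertex labeled $n+1$ — so $\sim_x$ over the one‑bit suffix agrees with $\sim_x$ over the empty suffix — the witness tines supplied for $F$ by the inductive hypothesis, with their reaches simply reread in $F'$, witness the required quantities in $F'$; the designated $\tau_\rho$ and $\tau_w$ are of this form, and for $x=w1$ the equality $\mu_x(\varepsilon)=\rho(w1)$ is witnessed by $\tau_\rho$ alone, as in the base case of Lemma~\ref{lem:relative-margin}.

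The substance is the case $b=0$, where $F'$ arises from $F$ by a single conservative extension of a tine $s\in F$ into a new honest leaf $\sigma$, $\ell(\sigma)=n+1$. I would first record the reach bookkeeping exactly as in the proof of Proposition~\ref{prop:muxy0-lowerbound-adv}: $\reach_{F'}(\sigma)=0$ by Claim~\ref{claim:ex}, and $\reach_{F'}(t)=\reach_F(t)-1$ for every $t\in F$ by Claim~\ref{claim:nex} (equality because the extension is conservative), hence $\rho(F')=\max(\rho(w)-1,0)=\rho(w0)$ with the maximum realized by an honest tine — namely $\sigma$ when $\rho(w)=0$, and any honest maximal‑$F$‑reach tine of the canonical fork $F$ when $\rho(w)>0$ — which yields the honest $\tau_\rho$. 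Then, for each decomposition $w0=x(y'0)$ with $w=xy'$ (the case $x=w0$, $y'0=\varepsilon$ being covered by $\tau_\rho$), I would carry over the honest pair $(\tau_{\rho x}^F,\tau_x^F)$ promised for $F$ by the inductive hypothesis — with $\reach_F(\tau_{\rho x}^F)=\rho(w)$, $\reach_F(\tau_x^F)=\mu_x(y')$, and the two $y'$‑disjoint — and run the same three‑way split as in Proposition~\ref{prop:muxy0-lowerbound-adv}, according to whether $\rho(w)=0$, or $\rho(w)>\mu_x(y')=0$, or $\rho(w)>0$ with $\mu_x(y')\ne0$, to show that the $\Adversary^*$‑designated pair (the early‑divergence witness for $(C_x,R')$) is $y'0$‑disjoint, honest, and satisfies $\reach_{F'}(\tau_{\rho x})=\rho(w0)$ and $\reach_{F'}(\tau_x)=\mu_x(y'0)$. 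When $s$ is already $y'$‑disjoint from a maximal‑$F$‑reach tine the old pair survives with both reaches decremented; the delicate branch is negative relative margin, $\mu_x(y')<0$, where $\sigma$ cannot serve as the lower‑reach tine and one must argue that the surviving tine $\tau_x^F$ remains $y'0$‑disjoint from some element of $R'$.

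The main obstacle is exactly this last point made uniform over all $x$: a single conservative extension, determined by the single tine $s$, must simultaneously be the correct move for the margin recursion of \emph{every} prefix. This is why $\Adversary^*$ selects $s$ as the reach‑zero tine diverging earliest from the entire set of maximal‑reach tines (rather than from one fixed $\tau_\rho$, as in the single‑$x$ strategy of Section~\ref{sec:strategy-x}): the position of $\sigma$ relative to $R'$ then controls all of the sets $C_x$ at once. I expect the technical heart to be a monotonicity statement describing how $C_x$ and $R'$ evolve as $|x|$ grows, from which one reads off that for every $x$ either some maximal‑reach tine of $F'$ is already $y'0$‑disjoint from a tine of reach $\mu_x(y'0)$, or $\sigma$ itself is; honesty of all designated tines then follows because $\sigma$ is honest and the carried‑over tines were honest by the inductive hypothesis, so $F'$ is canonical.
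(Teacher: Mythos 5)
Your setup is sound and matches the paper's structure: the theorem is the inductive step, the $b=1$ case is dispatched exactly as in the paper (reaches shift by one, witnesses carry over via the recursions of Lemmas~\ref{lem:margin} and~\ref{lem:relative-margin}), the upper bounds $\rho(F')\le\rho(wb)$ and $\mu_x(F')\le\mu_x(yb)$ are indeed free from the definitions, and your identification of $\tau_\rho$ for $b=0$ mirrors Proposition~\ref{prop:optadv-tau-rho}. But the heart of the theorem is precisely the part you leave as an expectation: showing that for \emph{every} decomposition $w=xy'$ the single conservative extension chosen by $\Adversary^*$ leaves in $F'$ a pair, designated by the early-divergence rule, that is $y'0$-disjoint and attains exactly $\rho(w0)$ and $\mu_x(y'0)$. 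You correctly note that the three-way split of Proposition~\ref{prop:muxy0-lowerbound-adv} does not transfer, because there $s$ is chosen relative to a maximal-reach tine tied to one fixed $x$, whereas $\Adversary^*$ chooses $s$ relative to the whole set of maximal-reach tines; having noted this, however, you do not supply the replacement argument. The ``monotonicity statement describing how $C_x$ and $R'$ evolve as $|x|$ grows'' is neither stated nor proved, and it is not what makes the paper's proof work: the paper's Proposition~\ref{prop:optadv-tau-mu-x} argues prefix by prefix, using the canonicity of $F$ (which supplies, for each $x$, an $F$-witness pair $(t_{\rho x},t_x)$ with exact reaches $\rho(w)$ and $\mu_x(y')$), Fact~\ref{fact:reach-fork-ext}, and the tie-breaking rules in the choice of $s$ and of the early-divergence witnesses. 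The delicate subcases you flag --- e.g.\ $\rho(w)=0$ (where one must show $\tau_{\rho x}=\sigma$, that $s$ coincides with $t_\rho$, and then split on whether $t_x$ shares an edge with $t_\rho$ over $y'$), and $\rho(w)>0$ with $\mu_x(y')<0$ and $s$ of positive reach (where one must find some member of $R'$ that is $y'0$-disjoint from a tine of reach at least $\mu_x(y')-1$, falling back on $t_{\rho x}$ when every maximal-reach $F'$-tine meets $t_x$ over $y'$) --- are exactly where the proof lives, and they are absent from your proposal.

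A smaller point: your warm-up claim that peeling a maximal adversarial suffix ``leaves the reach unchanged'' is not correct as stated; removing a trailing adversarial vertex increases the gap by one but can increase the reserve by more than one (if other adversarial indices lie between the new and old endpoints), so reach can strictly increase. This is harmless for lower bounds on reach but breaks the claim that a witness pair (which must attain $\rho(F)$ and $\mu_x(F)$ \emph{exactly}) can be made honest by truncation without further argument. In your write-up the honesty of the designated tines should instead come, as you later say, from the induction hypothesis together with the honesty of $\sigma$ --- but that, too, depends on actually carrying out the case analysis that determines which tines get designated.
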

	We remark that the fork-building strategy $\Adversary^*$ 
	would certainly satisfy Proposition~\ref{prop:muxy0-lowerbound-adv} and, therefore, 
	satisfy the recurrence relation~\eqref{eq:mu-relative-recursive} as well. 

\subsection{Winning the \texorpdfstring{$(\Distribution,T;s,k)$-}{}settlement game, optimally}\label{sec:adv-winning-settlement-game} 
	Consider the player in the $(\Distribution,T;s,k)$-settlement game 
	who, at the first step, samples 
	a characteristic string $w \sim \Distribution, w = w_1 w_2 \ldots w_T$. 
	Since the challenger is deterministic, 
	the game is completely determined by the characteristic string 
	and the choices of the player. 
	In particular, for a given prefix $x \Prefix w, |x| = s - 1$, 
	consider the decompositions $w = xyz$. 
	The player's chance of winning the game will be maximized if, 
	for every $y, |y| \geq k + 1$ (so that $n = |xy|\geq s + k$),
	the fork $F_n \Fork xy$ 
	contains a tine-pair $(\tau_{\rho x}, \tau_x)$ that witnesses $\mu_x(y)$. 
	In fact, 
	if $\mu_x(y) \geq 0$ for some $y$ then, 
	as shown in Fact~\ref{fact:margin-balance}, 
	the player wins the game by
	augmenting $F_n$ to an $x$-balanced fork $A_n \Fork xy$. 
	% In particular, he can do so 
	% by adding only $\gap(\tau_n)$ adversarial vertices to the tine $\tau_n \in F_n$. 
	% See the discussion in Section~\ref{sec:args-survey} and 
	% Observation~\ref{obs:settlement-balanced-fork} 
	% for the relationship between a balanced fork and the settlement violation.

	Note, in addition, that if $F_n$ is canonical, 
	the player can optimally play $(\Distribution, T; s, k)$-settlement games 
	\emph{simultaneously} for every $s \in [n - k]$. 
	That is, given a distribution $\Distribution$, 
	a canonical fork $F_n$ gives the player 
	the largest probability 
	of causing a settlement violation at as many slots $s \in [n - k]$ as possible, 
	at once.
	% Therefore, we shift our attention to characterizing an online strategy to build canonical forks.

\subsection{Proof of Theorem~\ref{thm:canonical-fork}}
	For convenience, let us record the following fact 
	which compacts Claims~\ref{claim:ex} and~\ref{claim:nex}.

	\begin{fact}\label{fact:reach-fork-ext}
		Let $F \Fork w$ and $F^\prime \Fork w0$ be closed forks so that 
		$F \ForkPrefix F^\prime$ and 
		$F^\prime$ differs from $F$ by a single conservative extension 
		$\sigma \in F^\prime, \ell(\sigma) = |w| + 1$.
		Then $\reach_{F^\prime}(t) = \reach_F(t) - 1$ for every $t \in F$ and,  
		in addition, $\reach_{F^\prime}(\sigma) = 0$.
	\end{fact}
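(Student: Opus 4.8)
The plan is to obtain Fact~\ref{fact:reach-fork-ext} by directly unwinding the definition of reach in each of the two forks, handling the two assertions separately; in effect this simply repackages Claims~\ref{claim:ex} and~\ref{claim:nex}, so the work is bookkeeping rather than a new idea. The single hypothesis doing real work is that the passage from $F$ to $F^\prime$ is a \emph{conservative} extension, which pins down the height change exactly.

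First I would prove $\reach_{F^\prime}(\sigma) = 0$. By definition $\reach_{F^\prime}(\sigma) = \reserve_{F^\prime}(\sigma) - \gap_{F^\prime}(\sigma)$. The tine $\sigma$ terminates at the new honest vertex labeled $|w| + 1$, which is the last slot of $w0$; since no index of $w0$ exceeds $\ell(\sigma) = |w| + 1$, we get $\reserve_{F^\prime}(\sigma) = 0$. For the gap, I would invoke the remark following Definition~\ref{def:extension}: a conservative extension satisfies $\length(\sigma) = \height(F) + 1$ and $\height(F^\prime) = \height(F) + 1$, so $\length(\sigma) = \height(F^\prime)$ and hence $\gap_{F^\prime}(\sigma) = 0$. (Equivalently: $F^\prime$ is closed, so it has a unique longest tine, which by axiom~\ref{fork:honest-depth} must be the honest tine $\sigma$.) Combining, $\reach_{F^\prime}(\sigma) = 0 - 0 = 0$.

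Next I would prove that every $t \in F$ has $\reach_{F^\prime}(t) = \reach_F(t) - 1$. Since $F \ForkPrefix F^\prime$, the tine $t$ appears with identical labels in $F^\prime$, so $\length_{F^\prime}(t) = \length_F(t)$. The characteristic string changes from $w$ to $w0$, appending the honest symbol $0$ and introducing no new adversarial index beyond the terminal vertex of $t$; hence $\reserve_{F^\prime}(t) = \reserve_F(t)$. For the gap, again using that the extension is conservative, $\height(F^\prime) = \height(F) + 1$, so $\gap_{F^\prime}(t) = \height(F^\prime) - \length(t) = \gap_F(t) + 1$. Therefore $\reach_{F^\prime}(t) = \reserve_{F^\prime}(t) - \gap_{F^\prime}(t) = \reserve_F(t) - \gap_F(t) - 1 = \reach_F(t) - 1$.

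The only point needing any care—hardly an obstacle—is the equality $\height(F^\prime) = \height(F) + 1$, which is exactly what separates a conservative extension from a general one and is the sole place the conservativeness hypothesis enters; it is precisely the statement recorded immediately after Definition~\ref{def:extension}. With that identity in hand, both displayed equalities of the fact are immediate, so the proof amounts to citing Claims~\ref{claim:ex} and~\ref{claim:nex} (or repeating the two short computations above).
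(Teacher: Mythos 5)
Your proposal is correct and matches the paper's treatment: the paper introduces this Fact precisely as a compaction of Claims~\ref{claim:ex} and~\ref{claim:nex}, whose proofs carry out exactly the reserve/gap bookkeeping you describe, with conservativeness supplying the equality $\height(F^\prime)=\height(F)+1$ that turns the $\leq$ of Claim~\ref{claim:nex} into an equality. No further comment is needed.
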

	In the rest of the proof, we will frequently use 
	the above fact along with Lemma~\ref{lem:margin} and Lemma~\ref{lem:relative-margin}, 
	often without an explicit reference.

	By assumption, $F$ is a canonical fork. 
	Thus $\reach_F(t_\rho) = \rho(w)$ 
	% $\reach_F(t_w) = \mu_w(\varepsilon) = \rho(w)$ 
	and 
	for every prefix $x \Prefix w$, 
	$\reach_F(t_x) = \mu_x(y)$. 
	Let $w' = wb$ and let $\tau_\rho, \tau_w, \tau_{\rho w},
	% \tau_{w^\prime}, 
	\tau_x, \tau_{\rho x} \in F^\prime, x \Prefix w$ 
	be the purported witness tines in $F'$. 
	Note that $\tau_x$ must be 
	$yb$-disjoint with $\tau_{\rho x}$ by construction. 
	Similarly, $\tau_w$ must be 
	$w_{n+1}$-disjoint with $\tau_{\rho w}$ since 
	both cannot contain the unique vertex from slot $n + 1$. 
	It is evident from the construction that 
	$\rho(F') = \reach_{F'}(\tau_{\rho}) = \reach_{F'}(\tau_{\rho w}) = \reach_{F'}(\tau_{\rho x})$ 
	for $x \Prefix w$.
	Therefore, we wish to show that 
	$\reach_{F^\prime}(\tau_\rho) = \rho(w b)$, 
	$\reach_{F^\prime}(\tau_w) = \mu_{w}(b)$ 
	% $\reach_{F^\prime}(\tau_{w^\prime}) = \mu_{w^\prime}(\varepsilon)$, 
	and 
	$\reach_{F^\prime}(\tau_x) = \mu_x(y b)$ for $x \Prefix w$.	
	% It would be helpful to note that in any fork, 
	% the reach of a tine 
	% is no more than the reach of the last honest vertex on that tine. 

	\begin{description}[font=\normalfont\itshape\space]
		\item[If $b = 1$.]
			In this case, $F^\prime = F$ and $w^\prime = w 1$. 
			Examining the rule for assigning $\tau_\rho, \tau_x, \tau_{\rho x}$, and $\tau_w$, 
			we see that 
			$\tau_\rho = 
			% \tau_{w^\prime} 
			t_\rho$, 
			$\tau_w = t_\rho$, 
			$\tau_x = t_x$, and $\tau_{\rho x} = t_{\rho x}$ for all $x \Prefix w$. 
			Since $F^\prime = F$ and $b = 1$, 
			the $F^\prime$-reach of every $F$-tine is one plus its $F$-reach. 
			Thus for any $x, x \Prefix w$, writing $w^\prime = xy1$, we have 
			$\mu_x(y1) 
			= 1 + \mu_x(y) 
			= 1 + \reach_F(t_x) 
			= \reach_{F^\prime}(t_x)
			= \reach_{F^\prime}(\tau_x)$.
			Similarly, 
			$\rho(w1) 
			= 1 + \rho(w) 
			% = \rho(w^\prime)
			= \reach_{F^\prime}(t_\rho)
			= \reach_{F^\prime}(\tau_\rho)
			$. 
			By construction, 
			$\tau_w$ has the largest reach in $F$; 
			but this means $\reach_{F'}(\tau_w) = \reach_{F'}(t_\rho) = \rho(F') = \rho(w1)$ but, 
			on the other hand, $\mu_w(1) = 1 + \mu_w(\varepsilon) = 1 + \rho(w) = \rho(w1)$; 
			hence $\reach_{F^\prime}(\tau_w) = \mu_w(1)$. 
			% Finally, by examining the rule for assigning $\tau_{w'}$, 
			% $\reach_{F^\prime}(\tau_{w^\prime}) = \rho(F^\prime) = \rho(w^\prime) = \mu_{w^\prime}(\varepsilon)$. 

		\item[If $b = 0$.]
			The contingencies of this case 
			are covered by Propositions~\ref{prop:optadv-tau-rho},~\ref{prop:optadv-tau-mu-w}, 
			and~\ref{prop:optadv-tau-mu-x} below.

	\end{description}

	% To complete the proof, we only need to state and prove 
	% Propositions~\ref{prop:optadv-tau-rho},~\ref{prop:optadv-tau-mu-x}, and~\ref{prop:optadv-tau-mu-w}.

	% \begin{lemma}[Recursive description of relative margin]\label{lem:relative-margin}
	% \end{lemma}

	\begin{proposition}\label{prop:optadv-tau-rho}
		Assume the premise of Theorem~\ref{thm:canonical-fork} with $b = 0$.
		Then $F^\prime$ contains 
		% witness tines $\tau_\rho, \tau_{w0}$ 
		a witness tine $\tau_\rho$ 
		so that 
		$\reach_{F^\prime}(\tau_\rho) = \rho(w0)$.
		%  and 
		% $\reach_{F^\prime}(\tau_{w0}) = \mu_{w 0}(\varepsilon)$. 
	\end{proposition}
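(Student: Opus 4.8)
The plan is to derive everything from the $\rho$-recursion (Lemma~\ref{lem:margin}) together with the reach bookkeeping for a conservative extension (Fact~\ref{fact:reach-fork-ext}). Since $b = 0$ and $F \ForkPrefix F^\prime$ with $F^\prime$ built by $\Adversary^*$, the fork $F^\prime$ is obtained from $F$ by attaching a single new honest vertex $\sigma$ with $\ell(\sigma) = |w| + 1$ as a conservative extension of some tine $s \in F$. The first step is to observe that the tines of $F^\prime$ are exactly the tines of $F$ together with the single new root-to-leaf path ending at $\sigma$; in particular every tine $t \in F$ \emph{persists} as a tine of $F^\prime$, since attaching a child to the endpoint of $s$ deletes no root path (not even $s$ itself). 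By Fact~\ref{fact:reach-fork-ext} we then have $\reach_{F^\prime}(t) = \reach_F(t) - 1$ for every $t \in F$ and $\reach_{F^\prime}(\sigma) = 0$.

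Next I would verify $\rho(F^\prime) = \rho(w0)$. The inequality $\rho(F^\prime) \le \rho(w0)$ is immediate because $F^\prime$ is a closed fork for $w0$ and $\rho(w0)$ is by definition the maximum of $\rho(\cdot)$ over all such forks; so it suffices to establish $\rho(F^\prime) \ge \rho(w0)$, which I would do by splitting on the two branches of~\eqref{eq:rho-recursive}. If $\rho(w) = 0$, then $\rho(w0) = 0$, and since $\reach_{F^\prime}(\sigma) = 0$ we get $\rho(F^\prime) \ge 0 = \rho(w0)$. If $\rho(w) > 0$, then $\rho(w0) = \rho(w) - 1$; because $F$ is canonical there is a tine $t_\rho \in F$ with $\reach_F(t_\rho) = \rho(F) = \rho(w)$, and by the previous paragraph $t_\rho$ is still a tine of $F^\prime$ with $\reach_{F^\prime}(t_\rho) = \rho(w) - 1 = \rho(w0)$, so again $\rho(F^\prime) \ge \rho(w0)$. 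Hence $\rho(F^\prime) = \rho(w0)$ in both cases.

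Finally, $\Adversary^*$ designates $\tau_\rho$ to be the $\leq_\pi$-least element of $R^\prime$, the set of $F^\prime$-tines of maximal $F^\prime$-reach; this set is nonempty, so $\tau_\rho$ is well-defined and $\reach_{F^\prime}(\tau_\rho) = \rho(F^\prime) = \rho(w0)$, which is exactly the claim. I do not anticipate a genuine obstacle here — the argument is essentially a one-line case analysis against the $\rho$-recursion. The only points requiring care are (i) checking that attaching $\sigma$ introduces exactly one new tine and removes none, so that the reach values realized in $F^\prime$ are precisely $\{\reach_F(t) - 1 : t \in F\} \cup \{0\}$, and (ii) invoking the canonical hypothesis in the strong form $\rho(F) = \rho(w)$ (rather than merely $\rho(F) \le \rho(w)$) to produce the persisting witness tine in the case $\rho(w) > 0$.
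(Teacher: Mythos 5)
Your proof is correct and takes essentially the same route as the paper's: both reduce to lower-bounding $\reach_{F^\prime}(\tau_\rho)$ (the upper bound being automatic since $\rho(w0)$ maximizes over all closed forks), and both split on the two branches of the $\rho$-recursion, using the persisting maximal-reach tine $t_\rho$ of the canonical fork $F$ when $\rho(w)>0$ and the fresh conservative extension $\sigma$ with $\reach_{F^\prime}(\sigma)=0$ when $\rho(w)=0$. Your extra bookkeeping that the tines of $F^\prime$ are exactly those of $F$ plus the one new path is a harmless elaboration of Fact~\ref{fact:reach-fork-ext}.
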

	\begin{proof}~
		Recall that the tine $\sigma \in F^\prime, \ell(\sigma) = |w| + 1$ 
		is a conservative extension to 
		a tine $s \in F, \reach_F(s) = 0$ 
		so that $\reach_{F^\prime}(\sigma) = 0$. 
		Also recall that $\mu_z(\varepsilon) = \rho(z)$ for any characteristic string $z$. 
		Finally, note that it suffices to show that 
		$\reach_{F^\prime}(\tau_\rho) \geq \rho(w0)$.
		%  and 
		% $\reach_{F^\prime}(\tau_{w0}) \geq \mu_{w 0}(\varepsilon)$. 

		% Consider the following contingencies based on $\rho(w)$.

		% \begin{description}[font=\normalfont\itshape\space]
			% \item[If $\rho(w) > 0$.]
				Suppose $\rho(w) > 0$. 
				Using Fact~\ref{fact:reach-fork-ext}, Lemma~\ref{lem:relative-margin}, 
				and examining the rule for assigning $\tau_\rho$, 
				we see that 
				$\reach_{F^\prime}(\tau_\rho) 
				\geq \reach_{F^\prime}(t_\rho) 
				= \reach_F(t_\rho) - 1 
				= \rho(w) - 1 
				 = \rho(w0)
				$. 
			% \item[If $\rho(w) = 0$.]
				On the other hand, if $\rho(w) = 0$ 
				then $\rho(w0)$ is zero as well. 
				It follows that 
				$\reach_{F^\prime}(\tau_\rho) 
				\geq \reach_{F^\prime}(\sigma) 
				= 0 = \rho(w0)
				$. 
		% \end{description}

		% Examining the rule for assigning $\tau_{w0}$, we have 
		% $\reach_{F^\prime}(\tau_{w0}) = \rho(F^\prime) 
		% = \rho(w0) = \mu_{w0}(\varepsilon)$. 	
	\end{proof}

	\begin{proposition}\label{prop:optadv-tau-mu-w}
		Assume the premise of Theorem~\ref{thm:canonical-fork} with $b = 0$.
		Then $F^\prime$ contains a tine-pair $(\tau_{\rho w}, \tau_w)$ 
		that witnesses $\mu_w(0)$.
		% so that 
		% $\reach_{F^\prime}(\tau_w) = \mu_w(0)$. 
	\end{proposition}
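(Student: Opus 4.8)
The plan is to unwind the definitions of the designated tines $\tau_w,\tau_{\rho w}$ and check directly that the pair $(\tau_{\rho w},\tau_w)$ satisfies the three defining conditions of a witness and that $\mu_w(F')=\mu_w(0)$. Recall that $\tau_w$ is the first coordinate of the early-divergence witness for $(R,R')$, so $\tau_w\in R$, the set of $F$-tines of maximal $F$-reach; since $F$ is canonical, $\reach_F(\tau_w)=\rho(F)=\rho(w)$. Likewise $\tau_{\rho w}\in R'$, the set of $F'$-tines of maximal $F'$-reach, so $\reach_{F'}(\tau_{\rho w})=\rho(F')$, which equals $\rho(w0)$ by Proposition~\ref{prop:optadv-tau-rho}. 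The first thing I would do is pin down $\mu_w(0)$: applying Lemma~\ref{lem:relative-margin} with fixed prefix $w$ and suffix $\varepsilon$, the guard ``$\rho(w)>\mu_w(\varepsilon)=0$'' is vacuous because $\mu_w(\varepsilon)=\rho(w)$, so we are always in the other branch and $\mu_w(0)=\mu_w(\varepsilon)-1=\rho(w)-1$.

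Next I would verify the three conditions in turn. For the $w_{n+1}$-disjointness (disjointness over the one-symbol suffix $0$): since $\tau_w\in F$ and $F$ has no vertex labeled $n+1$, the tine $\tau_w$ uses no edge terminating at slot $n+1$, hence $\tau_w\nsim_w\tau_{\rho w}$ automatically, whatever $\tau_{\rho w}$ is. For the reach of $\tau_w$: because $b=0$, the fork $F'$ is obtained from $F$ by a single conservative extension, so Fact~\ref{fact:reach-fork-ext} gives $\reach_{F'}(\tau_w)=\reach_F(\tau_w)-1=\rho(w)-1=\mu_w(0)$; and $\reach_{F'}(\tau_{\rho w})=\rho(F')$ by the choice of $\tau_{\rho w}$. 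To finish, note $\rho(F')=\rho(w0)=\max(\rho(w)-1,0)\geq\rho(w)-1$, so $\min\{\reach_{F'}(\tau_{\rho w}),\reach_{F'}(\tau_w)\}=\rho(w)-1$; this yields $\mu_w(F')\geq\rho(w)-1=\mu_w(0)$, and since $F'$ is a closed fork for $w0$ (a conservative extension of a closed fork appends an honest leaf), $\mu_w(F')\leq\mu_w(0)$ by the definition of $\mu_w(\cdot)$ as a maximum over closed forks. Hence $\mu_w(F')=\mu_w(0)$ and $(\tau_{\rho w},\tau_w)$ witnesses it, as desired.

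I do not expect a real obstacle here: the statement is essentially bookkeeping about how reach behaves under a single conservative extension. The two points deserving explicit care are the degenerate case $\rho(w)=0$ --- where $\mu_w(0)=-1$, the set $R'$ is precisely the singleton consisting of the fresh honest leaf $\sigma$ (since every older tine's reach drops below $0$), and one must still note that the ``$>$'' clause of the recurrence is vacuous --- and the observation that a conservative extension preserves closedness, which is exactly what licenses the sandwiching inequality $\mu_w(F')\leq\mu_w(0)$.
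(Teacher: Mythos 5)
Your proof is correct and takes essentially the same route as the paper's: disjointness over the final slot because $\tau_w \in F$ has no vertex labeled $n+1$, $\reach_{F'}(\tau_{\rho w}) = \rho(F') = \rho(w0)$ via Proposition~\ref{prop:optadv-tau-rho}, and the reach decrement of Fact~\ref{fact:reach-fork-ext} applied to $\tau_w \in R$. Your version is a mild streamlining — observing that the recurrence of Lemma~\ref{lem:relative-margin} always gives $\mu_w(0)=\rho(w)-1$ and applying Fact~\ref{fact:reach-fork-ext} directly to $\tau_w$ lets you skip the paper's case split on $\rho(w)>0$ versus $\rho(w)=0$ and on whether $\tau_{\rho w}=\sigma$, with the sandwich $\mu_w(F')\leq\mu_w(0)$ (closedness of $F'$) made explicit rather than implicit.
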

	\begin{proof}
		Recall that the tine $\sigma \in F^\prime, \ell(\sigma) = |w| + 1$ 
		is a conservative extension to 
		a tine $s \in F, \reach_F(s) = 0$ 
		so that $\reach_{F^\prime}(\sigma) = 0$. 
		In addition, since $F'$ contains a single vertex at slot $|w| + 1$, 
		$\tau_w$ and $\tau_{\rho w}$ are disjoint over the suffix $w_{n+1}$ 
		and, moreover, $\reach_{F'}(\tau_{\rho w}) = \rho(F') = \rho(w0)$ 
		by Proposition~\ref{prop:optadv-tau-rho}.
		Now consider the following contingencies based on $\rho(w)$.

		\begin{description}[font=\normalfont\itshape\space]
			\item[If $\rho(w) > 0$.]
				Thus $\mu_w(0) = \mu_w(\varepsilon) - 1 = \rho(w) - 1 = \rho(w0)$.
				There are two mutually exclusive scenarios 
				based on $\tau_{\rho w}$ and $\sigma$.
				If $\tau_{\rho w} = \sigma$ then, by construction, 
				$\tau_w \neq \sigma$ 
				(since $\ell(\tau_{\rho w}, \tau_w) \leq |w|$) and, 
				in addition, 
				% $\tau_w$ is the $F$-tine with the largest $F^\prime$-reach; 
				% by Fact~\ref{fact:reach-fork-ext}, 
				% $\tau_w$ must have the largest $F$-reach as well, i.e.,	
				$\reach_{F}(\tau_w) = \rho(w)$. 
				This implies 
				$\reach_{F^\prime}(\tau_w) = \reach_{F}(\tau_w) - 1 = \rho(w) - 1 = \mu_w(0)$. 
				On the other hand, if $\tau_{\rho w} \neq \sigma$ then $\tau_{\rho w} \in F$. 
				Since $\tau_w$ is the $F$-tine with the largest $F^\prime$-reach, 
				it follows that 
				$\reach_{F^\prime}(\tau_w) = \reach_{F^\prime}(\tau_{\rho w}) = \rho(w0) = \mu_w(0)$. 

			\item[If $\rho(w) = 0$.]
				Since $\rho(F) = \rho(w) = 0$, 
				Fact~\ref{fact:reach-fork-ext} tells us that 
				every $F$-tine must have a negative reach in $F^\prime$. 
				Since $\rho(F^\prime)$ is non-negative, 
				it must be the case that $\tau_{\rho w} = \sigma$. 
				We can reuse the argument from the subcase ``$\tau_{\rho w} = \sigma$'' 
				of the preceding case and conclude that $\reach_{F^\prime}(\tau_w) = \mu_w(0)$.
		\end{description}
	\end{proof}

	\begin{proposition}\label{prop:optadv-tau-mu-x}
		Assume the premise of Theorem~\ref{thm:canonical-fork} with $b = 0$.
		Let $x \Prefix w$ and write $w = xy$.
		Then $F^\prime$ contains a tine-pair $(\tau_{\rho x}, \tau_x)$ 
		that witnesses $\mu_x(y0)$.
		% $\reach_{F^\prime}(\tau_x) = \mu_x(y0)$. 
	\end{proposition}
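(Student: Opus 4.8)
The plan is to verify directly that the tine-pair $(\tau_{\rho x},\tau_x)$ designated by $\Adversary^*$ for the prefix $x$ (with $w=xy$) has the three properties demanded of a witness to $\mu_x(F')=\mu_x(y0)$: namely $\reach_{F'}(\tau_{\rho x})=\rho(F')$, that $\tau_x$ and $\tau_{\rho x}$ are disjoint over $y0$, and $\reach_{F'}(\tau_x)=\mu_x(y0)$. The first is immediate since $\tau_{\rho x}\in R'$ and $\rho(F')=\rho(w0)$ by Proposition~\ref{prop:optadv-tau-rho}. Disjointness over $y0$ is automatic from the selection rule: $\tau_x\in C_x\subseteq B_x$ is $y0$-disjoint with some tine of $R'$, and $\tau_{\rho x}$ is chosen in $R'$ so as to minimize $\ell(\tau_x\cap\tau_{\rho x})$, hence $\ell(\tau_x\cap\tau_{\rho x})\le|x|$. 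For the reach of $\tau_x$ I would prove the two bounds separately. The upper bound $\reach_{F'}(\tau_x)\le\mu_x(y0)$ is easy: $\tau_{\rho x}$ has the maximal $F'$-reach, so $\reach_{F'}(\tau_x)=\min\{\reach_{F'}(\tau_x),\reach_{F'}(\tau_{\rho x})\}\le\mu_x(F')\le\mu_x(y0)$. The matching lower bound reduces to exhibiting, for each branch of the recurrence~\eqref{eq:mu-relative-recursive}, one tine $t^\star\in B_x$ with $\reach_{F'}(t^\star)\ge\mu_x(y0)$, since $C_x$ collects the maximal-reach members of $B_x$ and $\tau_x\in C_x$.

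The inputs I would use repeatedly are: canonicity of $F$, which supplies tines $t_{\rho x},t_x\in F$ with $\reach_F(t_{\rho x})=\rho(w)$, $\reach_F(t_x)=\mu_x(y)$, and $t_x,t_{\rho x}$ disjoint over $y$; Fact~\ref{fact:reach-fork-ext}, which says every old tine loses exactly one unit of reach while $\reach_{F'}(\sigma)=0$; and the elementary fact that the extra edges of $\sigma$ over $s$ terminate at vertices not in $F$, so $\ell(\sigma\cap t)=\ell(s\cap t)$ for all $t\in F$. In the branch $\rho(xy)>0,\ \mu_x(y)\ne 0$ (so $\mu_x(y0)=\mu_x(y)-1$) I would take $t^\star=t_x$: here $\reach_{F'}(t_{\rho x})=\rho(w)-1=\rho(w0)=\rho(F')$, so $t_{\rho x}\in R'$, and $t_x$ is $y0$-disjoint with $t_{\rho x}$, whence $t_x\in B_x$ with $\reach_{F'}(t_x)=\mu_x(y)-1$. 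In the branch $\rho(xy)>\mu_x(y)=0$ (so $\mu_x(y0)=0$) I would take $t^\star=\sigma$: since $t_x$ is a reach-zero tine of $F$ and $s$ is selected among reach-zero tines to diverge from the maximal-reach tines of $F$ as early as possible, some maximal-reach tine $\hat t\in F$ satisfies $\ell(s\cap\hat t)\le\ell(t_x\cap t_{\rho x})\le|x|$; then $\hat t\in R'$ and $\ell(\sigma\cap\hat t)=\ell(s\cap\hat t)\le|x|$, so $\sigma\in B_x$ with $\reach_{F'}(\sigma)=0$.

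The remaining branch $\rho(xy)=0$ (so $\mu_x(y)\le 0$ and $\mu_x(y0)=\mu_x(y)-1$) is where I expect the real work. Here $\rho(F)=0$ forces every old tine to have negative $F'$-reach while $\reach_{F'}(\sigma)=0=\rho(F')$, so $R'=\{\sigma\}$ and $B_x$ is exactly $\{t:\ell(t\cap s)\le|x|\}$. When $\mu_x(y)=0$ the tine $\hat t$ obtained as in the previous paragraph lies in $B_x$ (again because $\ell(\hat t\cap\sigma)=\ell(\hat t\cap s)\le|x|$) and has $\reach_{F'}(\hat t)=-1=\mu_x(y0)$. When $\mu_x(y)<0$ the key step is to observe that $s$ cannot be $y$-disjoint from $t_{\rho x}$ --- otherwise $s$ and $t_{\rho x}$ would be two reach-zero tines of $F$ disjoint over $y$, forcing $\mu_x(F)\ge0$ and contradicting $\mu_x(F)=\mu_x(y)<0$ --- so, letting $u$ be the last vertex of $t_x\cap t_{\rho x}$ ($\ell(u)\le|x|$) and $v$ the last vertex of $s\cap t_{\rho x}$ ($\ell(v)>|x|$), $u$ is a strict ancestor of $v$ on the path $t_{\rho x}$; hence $s$ passes through $u$ and stays on $t_{\rho x}$ just past $u$ whereas $t_x$ departs $t_{\rho x}$ at $u$, and since $F'$ is a tree this yields $\ell(t_x\cap s)=\ell(u)\le|x|$, so $t^\star=t_x\in B_x$ with $\reach_{F'}(t_x)=\mu_x(y)-1$. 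Putting the bounds together gives $\reach_{F'}(\tau_x)=\mu_x(y0)$, and since then $\mu_x(F')\ge\min\{\reach_{F'}(\tau_x),\reach_{F'}(\tau_{\rho x})\}=\mu_x(y0)\ge\mu_x(F')$, the pair $(\tau_{\rho x},\tau_x)$ indeed witnesses $\mu_x(F')=\mu_x(y0)$. The main obstacle is this last branch: one must read off from the ``diverges-earliest'' choice of $s$ exactly how $\sigma$ sits relative to $t_x$ and $t_{\rho x}$, and then use the tree structure of $F'$ to transfer disjointness from the pair $(t_x,t_{\rho x})$ to the pair $(t_x,\sigma)$. Degenerate configurations --- $F$ with no reach-zero tine (so $s$ is the unique longest tine), or $\tau_x=\tau_{\rho x}$ --- are handled by the standard convention that a tine is disjoint over $y0$ with itself only if it lies entirely in $x$.
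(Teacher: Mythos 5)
Your proof is correct, and it follows the same overall strategy as the paper: bound $\reach_{F'}(\tau_x)$ above by $\mu_x(y0)$ via the definition of relative margin, then match it from below by exhibiting, in each branch of the recurrence~\eqref{eq:mu-relative-recursive}, a tine of $B_x$ with reach at least $\mu_x(y0)$, using canonicity of $F$, Fact~\ref{fact:reach-fork-ext}, and the selection rule for $s$. Where you differ is in the internal case analysis, and your version is in places tidier. You prove the disjointness of the designated pair and the equality $\reach_{F'}(\tau_x)=\mu_x(F')$ explicitly (the paper asserts the latter ``by construction''). In the branch $\rho(xy)>0$, $\mu_x(y)\neq 0$ you observe once and for all that $t_{\rho x}\in R'$ and $t_x\in B_x$, which collapses the paper's split on whether $\reach_F(s)=0$ or $s$ is the unique longest tine. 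In the branch $\rho(xy)>\mu_x(y)=0$ you exhibit $\sigma\in B_x$ via the ``diverges earliest from the maximal-reach set'' rule, avoiding the paper's subcases on whether $\tau_{\rho x}=t_{\rho x}$. Most notably, in the hardest branch $\rho(xy)=0$, $\mu_x(y)<0$, the paper identifies $s=t_\rho$ through the tie-breaking convention and then argues through the pair $(t_{\rho x},\sigma)$, whereas you argue directly that $s$ must share a $y$-labelled vertex with $t_{\rho x}$ (else $\mu_x(F)\geq 0$) and then transfer disjointness from $(t_x,t_{\rho x})$ to $(t_x,\sigma)$ by the tree structure, so that $t_x$ itself lies in $B_x$ with reach $\mu_x(y)-1$; this sidesteps any reliance on the coincidence of tie-breaking rules. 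Your closing remark on degenerate configurations is also harmless: a zero-reach tine exists in every branch where the choice of $s$ actually enters your argument ($t_x$ has reach $0$ when $\mu_x(y)=0$, and the longest tine has reach $0$ when $\rho(xy)=0$), and a pair with $\tau_x=\tau_{\rho x}$ can only arise when that tine has no edge labelled in $y0$, which the margin definition already tolerates.
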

	\begin{proof}~
		By construction, $\reach_{F^\prime}(\tau_x) = \mu_x(F')$ and, 
		by the definition of relative margin, 
		$\mu_x(F') \leq \mu_x(y0)$. 
		In light of~\eqref{eq:mu-relative-recursive}, 
		it suffices to show that 
		$\reach_{F^\prime}(\tau_x) \geq 0$ if $\rho(xy) > \mu_x(y) = 0$, 
		and $\reach_{F'}(\tau_x) \geq \mu_x(y) - 1$ otherwise.
		
		Let $R$ be the set of $F$-tines with the maximal $F$-reach and 
		let $R^\prime$ be the set of $F'$-tines with the maximal $F'$-reach; 
		thus $\tau_{\rho x} \in R'$.
		We know that $t_x$ is $y$-disjoint with $t_\rho$ in $F$. 
		Consider the following mutually exclusive cases.

		\begin{description}[font=\normalfont\itshape\space]
			\item[If $\rho(w) > 0$ and $\mu_x(y) = 0$.]
				In this case, $\mu_x(y0) = 0$ using Lemma~\ref{lem:relative-margin}. 
				Since $\reach_{F}(s) = 0 < \reach_{F}(t_{\rho x}) = \rho(w)$, 
				it follows that $s \neq t_{\rho x}$.
				In addition, observe that $t_{\rho x}$ must be in $R'$. 
				By our choice of $s$, 
				$\ell(s \Intersect t_{\rho x}) \leq \ell(t_x \Intersect t_{\rho x})$ 
				since $\reach_F(t_x) = \mu_x(y) = 0 = \reach_F(s)$.
				Since $t_x$ is $y$-disjoint with $t_{\rho x}$, so is $s$. 
				Recall that $\reach_{F^\prime}(\tau_x)$ is the largest among all tines 
				that are $y0$-disjoint with $\tau_{\rho x}$. 

				\begin{description}[font=\normalfont\itshape\space]
					\item[If $\tau_{\rho x} = t_{\rho x}$.]
						Thus $t_x$ is $y0$-disjoint with $\tau_{\rho x}$. 
						Since $\ell(\sigma) = |w| + 1$, 
						$\sigma$ must be $y0$-disjoint with $t_{\rho x} = \tau_{\rho x}$, 
						it follows that $\reach_{F^\prime}(\tau_x) \geq \reach_{F^\prime}(\sigma) = 0 = \mu_x(y0)$. 
						% Now, either $\ell(\tau_x) = n + 1$ or $\ell(\tau_x) \leq n$. 
						% In the former case, $\sigma = \tau_x$ 
						% and hence $\reach_{F^\prime}(\tau_x) = 0$. 
						% In the latter case, 
						% we have $\reach_{F^\prime}(\tau_x) \leq \reach_{F}(\tau_x) \leq \mu_x(y) = 0$. 
						% Here, the first inequality follows from Fact~\ref{fact:reach-fork-ext} 
						% and the second inquality follows since $\tau_x$ is $y$-disjoint with $\tau_\rho = t_\rho$. 
						% Hence $\reach_{F^\prime}(\tau_x) = 0 = \mu_x(y0)$.

					\item[If $\tau_{\rho x} \neq t_{\rho x}$.]
						This happens when $\rho(w) = 1, \rho(w0) = 0$, and $t_{\rho x}, \sigma \in R^\prime$. 
						Note that $|R^\prime| \geq 2$ since 
						both $\sigma, t_{\rho x} \in R^\prime$ but $\sigma \neq t_{\rho x}$. 
						If there are two $y0$-disjoint tines $r_1^\prime, r_2^\prime \in R^\prime$ 
						then $\reach_{F^\prime}(\tau_x) \geq 0 = \mu_x(y0)$. 
						% However, since $\rho(w0) = 0$, we actually have 
						% $\reach_{F^\prime}(\tau_x) = 0 = \mu_x(y0)$.
						Otherwise, all tines $r^\prime \in R^\prime$ share a vertex indexed by $y$. 
						Since $t_x$ is $y$-disjoint with $t_{\rho x}$, 
						$t_x$ must be $y$-disjoint (and thus $y0$-disjoint) with 
						every $r^\prime \in R^\prime$ as well. 
						Examining the rule for assigning $\tau_x$, 
						we conclude that $\tau_x = t_x$ and, therefore,  
						$\reach_{F^\prime}(\tau_x) = \reach_{F^\prime}(t_x) = \mu_x(y) = 0 = \mu_x(y0)$. 
			\end{description}

			\item[If $\rho(w) = 0$.]
				Let $x \Prefix w$ and note that 
				$\mu_x(y0) = \mu_x(y) - 1$. 
				Since $\rho(w) = 0$, $\reach_F(s) = 0$ 
				all $F$-tines will have a negative reach in $F^\prime$; 
				by Fact~\ref{fact:reach-fork-ext}, 
				$\sigma$ is the only tine in $F^\prime$ 
				with the maximal reach $\rho(F^\prime) = \rho(w0) = 0$, 
				i.e., $\tau_{\rho x} = \tau_\rho = \sigma$.
				In addition, we must also have $\reach_F(s) = 0$, i.e., $s \in R$; 
				we conclude that 
				$s$ has the smallest $\leq_\pi$ rank among all members of $R$ 
				and, therefore, $s = t_\rho$.
				It follows that 
				$\tau_x$ is $y0$-disjoint with $s = t_\rho$ and, in particular, $\tau_x \in F$. 
				Considering $t_x$, if it is $y$-disjoint with $t_\rho$ then 
				we must have $\tau_x = t_x$; 
				in this case, 
				$\reach_{F'}(\tau_x) = \reach_{F'}(t_x) = \reach_{F}(t_x) - 1 = \mu_x(y) - 1 = \mu_x(y0)$.
				Otherwise, $\ell(t_x \Intersect t_\rho) \geq |x| + 1$ 
				and there must be a tine $t_{\rho x} \in F$ that is $y$-disjoint with $t_x$ 
				(and hence, with $\tau_{\rho x}$). 
				Therefore, 
				$\reach_{F'}(\tau_x) \geq \reach_{F'}(t_{\rho x}) 
				\geq \reach_{F'}(t_x) = \reach_{F}(t_x) - 1 
				= \mu_x(y) - 1$. 
				Here, the first inequality follows from the construction of $\tau_x$ 
				and the second one follows since $t_{\rho x})$ has the maximal reach in $F$.

			\item[If $\rho(w) > 0$ and $\mu_x(y) \neq 0$.]
				% We know that $\reach_{F^\prime}(\tau_x) \leq \mu_x(y0) = \mu_x(y) - 1$. 
				% We wish to show that $\reach_{F^\prime}(\tau_x) \geq \mu_x(y0)$. 
				% % \textbf{What if $t_x = t_\rho$?}
				There can be two cases depending on whether $s$ has zero reach in $F$.

				\begin{description}[font=\normalfont\itshape\space]
					\item[If $\reach_F(s) = 0$.]
						Then $s \not \in \{t_{\rho x}, t_x\}$. 
						Observe that 
						$\reach_{F^\prime}(t_{\rho x}) = \reach_F(t_{\rho x}) - 1 = \rho(w) - 1 = \rho(w0)$. 
						It follwos that $t_{\rho x} \in R^\prime$. 
						Since $t_x$ is $y0$-disjoint with $t_{\rho x} \in R'$ and, in addition, that 
						$\tau_x$ has the largest reach among all tines 
						that are $y0$-disjoint with some member of $R^\prime$, 
						we conclude that 
						$\reach_{F^\prime}(\tau_x) \geq \reach_{F^\prime}(t_x) = \reach_{F}(t_x) - 1 = \mu_x(y) - 1 = \mu_x(y0)$. 

					\item[If $\reach_F(s) \geq 1$.]	
					% \item[If $\reach_F(s) \neq 0$ and $s = t_\mu_x$.]	
					% 	Thus $F$ contains no tine with reach zero and $s$ is the longest tine in $F$. 			
					% 	Suppose $s = t_x$. 
					% 	Then 
					% 	$\sigma$ will be $y0$-disjoint with $t_\rho$ in $F^\prime$ 
					% 	and, as before, 
					% 	$\reach_{F^\prime}(t_\rho) = \rho(w0)$. 
					% 	Since $\sigma$ and $t_\rho$ are $y0$-disjoint, 
					% 	a similar argument as in the preceding case would complete the claim.

					% \item[If $\reach_F(s) \neq 0, s = t_\rho \neq t_x$.]	
					% \item[If $\reach_F(s) \neq 0, s = t_\rho \neq t_x$, and $|R| \geq 2$.]	
					% 	There must be a tine $t^* \in R$ with the largest $F$-reach 
					% 	which is $y$-disjoint with $t_x$. 
					% 	Retracing our argument above, the claim will follow since 
					% 	these tines are $y0$-disjoint in $F^\prime$ as well. 

					% \item[If $\reach_F(s) \neq 0, s = t_\rho \neq t_x$, and $R = \{t_\rho\}$.]	
					% 	% If $\rho(w) = 1$ then $t_\rho \in R^\prime$ and, as a result, 
					% 	% $t_x$ will be $y0$-disjoint with $t_\rho$ in $F^\prime$ and the claim would follow. 
					% 	% Otherwise, suppose $\rho(w) \geq 2$. 
						In this case, $s$ is the longest tine in $F$.
						Considering fork $F^\prime$,
						if some tine $r^\prime \in R^\prime$ is $y0$-disjoint with $t_x$ 
						then $\reach_{F^\prime}(\tau_x) \geq \reach_{F'}(t_x) = \reach_{F}(t_x) - 1 = \mu_x(y) - 1 = \mu_x(y0)$. 
						Otherwise, $\ell(r^\prime \Intersect t_x) > |x|$ for every tine $r^\prime \in R^\prime$, 
						i.e., no maximal-reach $F^\prime$-tine is $y0$-disjoint with $t_x$. 
						Since $\ell(t_x, t_{\rho x}) \leq |x|$ by assumption 
						and $\tau_{\rho x} \in R^\prime$, it follows that 
						$\ell(\tau_{\rho x} \Intersect t_{\rho x}) \leq |x|$, i.e., 
						$t_{\rho x}$ is $y0$-disjoint with $\tau_{\rho x}$.
						Therefore, 
						$\reach_{F^\prime}(\tau_x) \geq \reach_{F^\prime}(t_{\rho x}) = \reach_{F}(t_{\rho x}) - 1 
						= \rho(w) - 1 \geq \mu_x(y) - 1 = \mu_x(y0)$. 
						Here, the second inequality is true since $\mu_x(y) \leq \rho(xy) = \rho(w)$.

				\end{description}

		\end{description}
		
	\end{proof}

	This completes the proof of Theorem~\ref{thm:canonical-fork}. 
	\hfill $\qed$

	In regards to the canonical fork $F \Fork w$ produced by the strategy $\Adversary^*$ 
	(see Figure~\ref{fig:adv-opt}), 
	it is possible to  
	maintain witness tines $\tau_\rho, \tau'_m \in F$, 
	for integers $m = -|w|, \ldots, |w|$, so that 
	for every prefix $x \Prefix w$, 
	the tine-pair $(\tau_\rho, \tau'_{\mu_x(y)})$ witnesses $\mu_x(y)$. 
	In particular, a single maxmimal-reach tine $\tau_\rho$ 
	appears in every witness tine-pair. We omit futher details.

\section*{Acknowledgments}
We are grateful to Shreyas Gandlur and Bruce Hajek (UIUC) 
for their suggestion about 
using the dominance argument in the proof of Bound~\ref{bound:geometric}. 

\bibliography{forks,abbrev0,crypto_crossref}

\begin{thebibliography}{26}
\providecommand{\natexlab}[1]{#1}
\providecommand{\url}[1]{\texttt{#1}}
\expandafter\ifx\csname urlstyle\endcsname\relax
  \providecommand{\doi}[1]{doi: #1}\else
  \providecommand{\doi}{doi: \begingroup \urlstyle{rm}\Url}\fi

\bibitem[Back(1997)]{hashcash}
Adam Back.
\newblock Hashcash.
\newblock http://www.cypherspace.org/hashcash, 1997.

\bibitem[Badertscher et~al.(2018)Badertscher, Gazi, Kiayias, Russell, and
  Zikas]{DBLP:journals/iacr/BadertscherGKRZ18}
Christian Badertscher, Peter Gazi, Aggelos Kiayias, Alexander Russell, and
  Vassilis Zikas.
\newblock Ouroboros genesis: Composable proof-of-stake blockchains with dynamic
  availability.
\newblock \emph{{IACR} Cryptology ePrint Archive}, 2018:\penalty0 378, 2018.

\bibitem[Bentov et~al.(2014)Bentov, Gabizon, and
  Mizrahi]{DBLP:journals/corr/BentovGM14}
Iddo Bentov, Ariel Gabizon, and Alex Mizrahi.
\newblock Cryptocurrencies without proof of work.
\newblock \emph{CoRR}, abs/1406.5694, 2014.

\bibitem[Bentov et~al.(2016)Bentov, Pass, and
  Shi]{DBLP:journals/iacr/BentovPS16a}
Iddo Bentov, Rafael Pass, and Elaine Shi.
\newblock Snow white: Provably secure proofs of stake.
\newblock \emph{{IACR} Cryptology ePrint Archive}, 2016:\penalty0 919, 2016.

\bibitem[Brown{-}Cohen et~al.(2018)Brown{-}Cohen, Narayanan, Psomas, and
  Weinberg]{DBLP:journals/corr/abs-1809-06528}
Jonah Brown{-}Cohen, Arvind Narayanan, Christos{-}Alexandros Psomas, and
  S.~Matthew Weinberg.
\newblock Formal barriers to longest-chain proof-of-stake protocols.
\newblock \emph{CoRR}, abs/1809.06528, 2018.

\bibitem[David et~al.(2018)David, Gazi, Kiayias, and
  Russell]{DBLP:conf/eurocrypt/DavidGKR18}
Bernardo David, Peter Gazi, Aggelos Kiayias, and Alexander Russell.
\newblock Ouroboros praos: An adaptively-secure, semi-synchronous
  proof-of-stake blockchain.
\newblock In  \citet{DBLP:conf/eurocrypt/2018-2}, pages 66--98.

\bibitem[Dwork and Naor(1993)]{C:DwoNao92}
Cynthia Dwork and Moni Naor.
\newblock Pricing via processing or combatting junk mail.
\newblock In Ernest~F. Brickell, editor, \emph{Advances in Cryptology --
  {CRYPTO}'92}, volume 740 of \emph{Lecture Notes in Computer Science}, pages
  139--147, Santa Barbara, CA, USA, August~16--20, 1993. Springer, Heidelberg,
  Germany.

\bibitem[Dziembowski et~al.(2015)Dziembowski, Faust, Kolmogorov, and
  Pietrzak]{C:DFKP15}
Stefan Dziembowski, Sebastian Faust, Vladimir Kolmogorov, and Krzysztof
  Pietrzak.
\newblock Proofs of space.
\newblock In Rosario Gennaro and Matthew J.~B. Robshaw, editors, \emph{Advances
  in Cryptology -- {CRYPTO}~2015, Part~II}, volume 9216 of \emph{Lecture Notes
  in Computer Science}, pages 585--605, Santa Barbara, CA, USA, August~16--20,
  2015. Springer, Heidelberg, Germany.

\bibitem[Garay et~al.(2015)Garay, Kiayias, and
  Leonardos]{DBLP:conf/eurocrypt/GarayKL15}
Juan~A. Garay, Aggelos Kiayias, and Nikos Leonardos.
\newblock The bitcoin backbone protocol: Analysis and applications.
\newblock In Elisabeth Oswald and Marc Fischlin, editors, \emph{Advances in
  Cryptology - {EUROCRYPT} 2015 - 34th Annual International Conference on the
  Theory and Applications of Cryptographic Techniques, Sofia, Bulgaria, April
  26-30, 2015, Proceedings, Part {II}}, volume 9057 of \emph{Lecture Notes in
  Computer Science}, pages 281--310. Springer, 2015.

\bibitem[Garay et~al.(2017{\natexlab{a}})Garay, Kiayias, and
  Leonardos]{C:GarKiaLeo17}
Juan~A. Garay, Aggelos Kiayias, and Nikos Leonardos.
\newblock The bitcoin backbone protocol with chains of variable difficulty.
\newblock In Jonathan Katz and Hovav Shacham, editors, \emph{Advances in
  Cryptology -- {CRYPTO}~2017, Part~I}, volume 10401 of \emph{Lecture Notes in
  Computer Science}, pages 291--323, Santa Barbara, CA, USA, August~20--24,
  2017{\natexlab{a}}. Springer, Heidelberg, Germany.

\bibitem[Garay et~al.(2017{\natexlab{b}})Garay, Kiayias, and
  Leonardos]{DBLP:conf/crypto/GarayKL17}
Juan~A. Garay, Aggelos Kiayias, and Nikos Leonardos.
\newblock The bitcoin backbone protocol with chains of variable difficulty.
\newblock In Jonathan Katz and Hovav Shacham, editors, \emph{Advances in
  Cryptology - {CRYPTO} 2017 - 37th Annual International Cryptology Conference,
  Santa Barbara, CA, USA, August 20-24, 2017, Proceedings, Part {I}}, volume
  10401 of \emph{Lecture Notes in Computer Science}, pages 291--323. Springer,
  2017{\natexlab{b}}.

\bibitem[Grinstead and Snell(1997)]{Grinstead:1997ng}
Charles~M. Grinstead and J.~Laurie Snell.
\newblock \emph{Introduction to Probability}.
\newblock American Mathematical Association, 1997.

\bibitem[Kiayias et~al.(2017)Kiayias, Russell, David, and Oliynykov]{KRDO17}
Aggelos Kiayias, Alexander Russell, Bernardo David, and Roman Oliynykov.
\newblock Ouroboros: {A} provably secure proof-of-stake blockchain protocol.
\newblock In Jonathan Katz and Hovav Shacham, editors, \emph{Advances in
  Cryptology - {CRYPTO} 2017 - 37th Annual International Cryptology
  Conference}, volume 10401 of \emph{Lecture Notes in Computer Science}, pages
  357--388. Springer, 2017.

\bibitem[Levin et~al.(2009)Levin, Peres, and Wilmer]{LevinPeres}
David~A Levin, Yuval Peres, and Elizabeth~L Wilmer.
\newblock \emph{Markov chains and mixing times}, volume~58.
\newblock American Mathematical Society, 2009.

\bibitem[Micali(2016)]{DBLP:journals/corr/Micali16}
Silvio Micali.
\newblock {ALGORAND:} the efficient and democratic ledger.
\newblock \emph{CoRR}, abs/1607.01341, 2016.

\bibitem[Moran and Orlov(2016)]{cryptoeprint:2016:035}
Tal Moran and Ilan Orlov.
\newblock Proofs of space-time and rational proofs of storage.
\newblock Cryptology ePrint Archive, Report 2016/035, 2016.
\newblock \url{http://eprint.iacr.org/2016/035}.

\bibitem[Motwani and Raghavan(1995)]{Motwani:1995:RA:211390}
Rajeev Motwani and Prabhakar Raghavan.
\newblock \emph{Randomized Algorithms}.
\newblock Cambridge University Press, New York, NY, USA, 1995.

\bibitem[Nakamoto(2008)]{Nakamoto2008}
Satoshi Nakamoto.
\newblock Bitcoin: A peer-to-peer electronic cash system.
\newblock http://bitcoin.org/bitcoin.pdf, 2008.

\bibitem[Nielsen and Rijmen(2018)]{DBLP:conf/eurocrypt/2018-2}
Jesper~Buus Nielsen and Vincent Rijmen, editors.
\newblock \emph{Advances in Cryptology - {EUROCRYPT} 2018 - 37th Annual
  International Conference on the Theory and Applications of Cryptographic
  Techniques, Tel Aviv, Israel, April 29 - May 3, 2018 Proceedings, Part {II}},
  volume 10821 of \emph{Lecture Notes in Computer Science}, 2018. Springer.

\bibitem[Park et~al.(2015)Park, Pietrzak, Kwon, Alwen, Fuchsbauer, and
  Gazi]{DBLP:journals/iacr/ParkPAFG15}
Sunoo Park, Krzysztof Pietrzak, Albert Kwon, Jo{\"{e}}l Alwen, Georg
  Fuchsbauer, and Peter Gazi.
\newblock Spacemint: {A} cryptocurrency based on proofs of space.
\newblock \emph{{IACR} Cryptology ePrint Archive}, 2015:\penalty0 528, 2015.

\bibitem[Pass and Shi(2017{\natexlab{a}})]{DBLP:conf/asiacrypt/PassS17}
Rafael Pass and Elaine Shi.
\newblock The sleepy model of consensus.
\newblock In Tsuyoshi Takagi and Thomas Peyrin, editors, \emph{Advances in
  Cryptology - {ASIACRYPT} 2017 - 23rd International Conference on the Theory
  and Applications of Cryptology and Information Security, Hong Kong, China,
  December 3-7, 2017, Proceedings, Part {II}}, volume 10625 of \emph{Lecture
  Notes in Computer Science}, pages 380--409. Springer, 2017{\natexlab{a}}.

\bibitem[Pass and Shi(2017{\natexlab{b}})]{DBLP:conf/wdag/PassS17}
Rafael Pass and Elaine Shi.
\newblock Hybrid consensus: Efficient consensus in the permissionless model.
\newblock In Andr{\'{e}}a~W. Richa, editor, \emph{31st International Symposium
  on Distributed Computing, {DISC} 2017, October 16-20, 2017, Vienna, Austria},
  volume~91 of \emph{LIPIcs}, pages 39:1--39:16. Schloss Dagstuhl -
  Leibniz-Zentrum fuer Informatik, 2017{\natexlab{b}}.

\bibitem[Pass and Shi(2018)]{DBLP:conf/eurocrypt/PassS18}
Rafael Pass and Elaine Shi.
\newblock Thunderella: Blockchains with optimistic instant confirmation.
\newblock In  \citet{DBLP:conf/eurocrypt/2018-2}, pages 3--33.

\bibitem[Pass et~al.(2017)Pass, Seeman, and
  Shelat]{DBLP:conf/eurocrypt/PassSS17}
Rafael Pass, Lior Seeman, and Abhi Shelat.
\newblock Analysis of the blockchain protocol in asynchronous networks.
\newblock In Jean{-}S{\'{e}}bastien Coron and Jesper~Buus Nielsen, editors,
  \emph{Advances in Cryptology - {EUROCRYPT} 2017 - 36th Annual International
  Conference on the Theory and Applications of Cryptographic Techniques, Paris,
  France, April 30 - May 4, 2017, Proceedings, Part {II}}, volume 10211 of
  \emph{Lecture Notes in Computer Science}, pages 643--673, 2017.

\bibitem[Quader and Russell(2018)]{PrForkableCode}
Saad Quader and Alexander Russell.
\newblock {\texttt{C++}} source code to compute settlement error estimates.
\newblock \url{https://github.com/saad0105050/forkable-strings-code}, 2018.
\newblock Accessed: 2019-10-14.

\bibitem[Wilf(2005)]{WilfGF}
Herbert~S Wilf.
\newblock \emph{generatingfunctionology}.
\newblock AK Peters/CRC Press, 3 edition, 2005.

\end{thebibliography}

\appendix

\section{Exact settlement probabilities}
\label{sec:exact-prob}
Let $m, k \in \NN$  and $\epsilon \in (0,1]$. 
Let $w$ be a characteristic string of length $T = m + k$ such that 
the bits of $w$ are i.i.d.\ Bernoulli with expectation $\alpha = (1 - \epsilon)/2$. 
Write $w$ as $w = xy$ where $|x| = m, |y| = k$.
% where 
% each bit is independent with $\Pr[w_i = 1] = \alpha$.
The recursive definition of relative margin (cf. Lemma~\ref{lem:relative-margin}) 
implies an algorithm for computing the probability
$\Pr[\mu_x(y) \geq 0]$ in time $\Poly(m, k)$. In typical
circumstances, however, it is more interesting to establish an
explicit upper bound on $\Pr[\mu_x(y) \geq 0]$ where
$|x| \rightarrow \infty$; this corresponds to the case where the
distribution of the initial reach $\rho(x)$ is the dominant distribution
$\StationaryRho$ in Lemma~\ref{lemma:rho-stationary}. 
Due to dominance, $\StationaryRho(m)$ serves as an
upper bound on $\rho(x)$ for any finite $m = |x|$. 
For this purpose, one can implicitly
maintain a sequence of matrices $\left( M_t \right)$ for $t = 0, 1, 2, \cdots, k$
such that $M_0(r, r) = \StationaryRho(r)$ for all $0 \leq r \leq 2k$ and
the invariant
\[
  M_t(r, s) = \Pr_{y \sim \mathcal{B}(t, \alpha)}[\rho(xy) = r \text{ and }
  \mu_x(y) = s ]
\]
is satisfied for every integer $t \in [1, k]$,
$r \in [0, 2k]$, and $s \in [-2k, 2k]$. 
Here, $M(i,j)$ denotes the entry at the $i$th row and $j$th column of the matrix $M$.
%The matrix $M_0$ corresponds to the reach and margin after any prefix $x$. 
Observe that $M_t(r,s)$ can be computed solely from the neighboring cells of $M_{t-1}$, that is, 
from the values $M_{t-1}(r\pm 1, s \pm 1)$. 
Of course, only the transitions approved by the recursions in 
Lemma~\ref{lem:margin} and Lemma~\ref{lem:relative-margin} should be considered.

Finally, one can compute $\Pr[\mu_x(y) \geq 0]$ by summing $M_k(r,s)$ for
$r, s \geq 0$. Table~\ref{table:exact-probs} contains these
probabilities where $\alpha$ ranges from $0.05$ to $0.40$ and $k$
ranges from $50$ to $1000$. 
In addition, Figure~\ref{fig:exact-probs} shows the base-$10$ logarithm of
these probabilities. The
points corresponding to a fixed $\alpha$ appear to form a straight
line. This means the probability decays exponentially in $k$, or equivalently, that the exponent 
depends linearly on $k$, 
as stipulated by Bound~\ref{bound:analytic}. 

A \texttt{C++} implementation of the above algorithm is publicly available 
at~\href{https://github.com/saad0105050/forkable-strings-code}{https://github.com/saad0105050/forkable-strings-code}~\cite{PrForkableCode}.

\begin{table}[h]
	\centering
	\caption{
		Exact probabilities $\Pr[\mu_x(y) \geq 0]$ where 
		the bits of the characteristic string $xy$ are i.i.d.\ Bernoulli with expectation $\alpha$. 
		Each row of the table corresponds to a different $k = |y|$.
	} 
	\label{table:exact-probs}

	\begin{tabular}{|l||l|l|l|l|l|l|l|l|}
	\hline
	\multicolumn{1}{|c||}{\multirow{2}{*}{$k$}} & \multicolumn{8}{c|}{$\alpha$}                                             \\ \cline{2-9} 
	\multicolumn{1}{|c||}{}                   & 0.05     & 0.10      & 0.15     & 0.20      & 0.25     & 0.30      & 0.35     & 0.40      \\ 
	\hhline{|=#=|=|=|=|=|=|=|=|}
	50   & 5.37E-15  & 1.16E-09  & 1.02E-06  & 8.68E-05 & 1.96E-03 & 1.86E-02 & 9.36E-02 & 2.92E-01 \\ \hline
	100  & 1.23E-28  & 5.10E-18  & 3.52E-12  & 2.28E-08 & 1.03E-05 & 8.00E-04 & 1.72E-02 & 1.37E-01 \\ \hline
	150  & 2.83E-42  & 2.24E-26  & 1.22E-17  & 6.05E-12 & 5.54E-08 & 3.57E-05 & 3.30E-03 & 6.74E-02 \\ \hline
	200  & 6.49E-56  & 9.82E-35  & 4.21E-23  & 1.61E-15 & 2.98E-10 & 1.60E-06 & 6.40E-04 & 3.36E-02 \\ \hline
	250  & 1.49E-69  & 4.31E-43  & 1.46E-28  & 4.27E-19 & 1.61E-12 & 7.21E-08 & 1.25E-04 & 1.69E-02 \\ \hline
	300  & 3.42E-83  & 1.89E-51  & 5.05E-34  & 1.14E-22 & 8.67E-15 & 3.25E-09 & 2.44E-05 & 8.52E-03 \\ \hline
	350  & 7.84E-97  & 8.29E-60  & 1.75E-39  & 3.02E-26 & 4.67E-17 & 1.46E-10 & 4.78E-06 & 4.31E-03 \\ \hline
	400  & 1.80E-110 & 3.64E-68  & 6.06E-45  & 8.02E-30 & 2.52E-19 & 6.59E-12 & 9.37E-07 & 2.18E-03 \\ \hline
	450  & 4.13E-124 & 1.60E-76  & 2.10E-50  & 2.13E-33 & 1.36E-21 & 2.97E-13 & 1.84E-07 & 1.11E-03 \\ \hline
	500  & 9.47E-138 & 7.00E-85  & 7.26E-56  & 5.67E-37 & 7.32E-24 & 1.34E-14 & 3.60E-08 & 5.62E-04 \\ \hline
	550  & 2.17E-151 & 3.07E-93  & 2.51E-61  & 1.51E-40 & 3.95E-26 & 6.02E-16 & 7.05E-09 & 2.86E-04 \\ \hline
	600  & 4.98E-165 & 1.35E-101 & 8.70E-67  & 4.00E-44 & 2.13E-28 & 2.71E-17 & 1.38E-09 & 1.45E-04 \\ \hline
	650  & 1.14E-178 & 5.91E-110 & 3.01E-72  & 1.06E-47 & 1.15E-30 & 1.22E-18 & 2.71E-10 & 7.37E-05 \\ \hline
	700  & 2.62E-192 & 2.59E-118 & 1.04E-77  & 2.83E-51 & 6.19E-33 & 5.51E-20 & 5.31E-11 & 3.75E-05 \\ \hline
	750  & 6.02E-206 & 1.14E-126 & 3.61E-83  & 7.52E-55 & 3.33E-35 & 2.48E-21 & 1.04E-11 & 1.91E-05 \\ \hline
	800  & 1.38E-219 & 4.99E-135 & 1.25E-88  & 2.00E-58 & 1.80E-37 & 1.12E-22 & 2.04E-12 & 9.69E-06 \\ \hline
	850  & 3.17E-233 & 2.19E-143 & 4.33E-94  & 5.31E-62 & 9.69E-40 & 5.04E-24 & 4.00E-13 & 4.93E-06 \\ \hline
	900  & 7.27E-247 & 9.61E-152 & 1.50E-99  & 1.41E-65 & 5.23E-42 & 2.27E-25 & 7.84E-14 & 2.50E-06 \\ \hline
	950  & 1.67E-260 & 4.22E-160 & 5.19E-105 & 3.75E-69 & 2.82E-44 & 1.02E-26 & 1.54E-14 & 1.27E-06 \\ \hline
	1000 & 3.83E-274 & 1.85E-168 & 1.80E-110 & 9.98E-73 & 1.52E-46 & 4.61E-28 & 3.01E-15 & 6.48E-07 \\ \hline
	\end{tabular}

\end{table}

\begin{figure}[h!]

\centering
\begin{tikzpicture}
	\begin{axis}[
		xlabel=Length of $y$,
    ylabel=${\log_{10} \Pr[ \mu_x(y) \geq 0 ]}$,
    legend pos = outer north east,
    grid = both
    %,height = 7cm
    ]

%============= \alpha = 0.40    
	\addplot[color=blue,mark=triangle*] coordinates {
(50, -0.534881009996625)
(100, -0.861843437303938)
(150, -1.17110438398993)
(200, -1.47350880258986)
(250, -1.77251864117173)
(300, -2.06962896433787)
(350, -2.36559449154758)
(400, -2.66083495643829)
(450, -2.95559969289654)
(500, -3.25004400806221)
(550, -3.54426813580273)
(600, -3.83833855715903)
(650, -4.1323003235163)
(700, -4.42618449355664)
(750, -4.72001278023346)
(800, -5.01380053810684)
(850, -5.30755872931896)
(900, -5.60129524293208)
(950, -5.89501579493968)
(1000, -6.18872455077118)

	};
  
%============= \alpha = 0.35    
	\addplot[color=magenta,mark=o] coordinates {
(50, -1.02877436841824)
(100, -1.76405795532126)
(150, -2.48156081504761)
(200, -3.19351655688251)
(250, -3.90324719739329)
(300, -4.6119732073238)
(350, -5.32021396523135)
(400, -6.0282101484992)
(450, -6.73607948155516)
(500, -7.44388168223398)
(550, -8.15164782799613)
(600, -8.85939439475358)
(650, -9.56713023906372)
(700, -10.2748601722391)
(750, -10.9825868295165)
(800, -11.6903116636224)
(850, -12.3980354795589)
(900, -13.1057587252749)
(950, -13.8134816508858)
(1000, -14.521204396449)

	};

%============= \alpha = 0.30    
  	\addplot[color=darkgray,mark=square] coordinates {
(50, -1.73119438208806)
(100, -3.09673279902966)
(150, -4.44780375955355)
(200, -5.79553003490465)
(250, -7.14227457376805)
(300, -8.48869966525932)
(350, -9.83501495946096)
(400, -11.1812912676285)
(450, -12.5275534426098)
(500, -13.873810422115)
(550, -15.2200654730463)
(600, -16.5663198030557)
(650, -17.9125738621863)
(700, -19.258827819142)
(750, -20.6050817374433)
(800, -21.9513356410858)
(850, -23.2975895391646)
(900, -24.6438434351249)
(950, -25.9900973302771)
(1000, -27.3363512251228)

	};

%============= \alpha = 0.25    
	\addplot[color=purple,mark=diamond] coordinates {
(50, -2.7068067143968)
(100, -4.98669572210932)
(150, -7.25681899248793)
(200, -9.52547099796917)
(250, -11.793848036259)
(300, -14.0621691284583)
(350, -16.3304783495448)
(400, -18.5987849921223)
(450, -20.8670910667017)
(500, -23.1353970150404)
(550, -25.4037029351594)
(600, -27.6720088489442)
(650, -29.9403147613032)
(700, -32.2086206733433)
(750, -34.4769265853085)
(800, -36.7452324972576)
(850, -39.0135384092045)
(900, -41.281844321149)
(950, -43.5501502330935)
(1000, -45.8184561450381)

	};

%============= \alpha = 0.    
	\addplot[color=black,mark=x] coordinates {
(50, -4.06125884952046)
(100, -7.64248504871986)
(150, -11.2184152058828)
(200, -14.7939108371324)
(250, -18.3693634087138)
(300, -21.9448114175389)
(350, -25.520258928049)
(400, -29.095706383289)
(450, -32.6711538323473)
(500, -36.2466012807114)
(550, -39.822048728996)
(600, -43.3974961772738)
(650, -46.9729436255496)
(700, -50.5483910738248)
(750, -54.1238385221004)
(800, -57.6992859703754)
(850, -61.2747334186509)
(900, -64.8501808669272)
(950, -68.4256283152027)
(1000, -72.0010757634778)

	};
  
%============= \alpha = 0.15    
	\addplot[color=black,mark=diamond*] coordinates {
(50, -5.99101535930985)
(100, -11.4538646368115)
(150, -16.9145865673508)
(200, -22.3752357117723)
(250, -27.8358819361379)
(300, -33.2965280374903)
(350, -38.7571741335604)
(400, -44.2178202294009)
(450, -49.6784663252318)
(500, -55.1391124210625)
(550, -60.5997585168932)
(600, -66.0604046127235)
(650, -71.5210507085538)
(700, -76.9816968043839)
(750, -82.4423429002152)
(800, -87.9029889960452)
(850, -93.3636350918761)
(900, -98.8242811877078)
(950, -104.284927283538)
(1000, -109.745573379367)

};

%============= \alpha = 0.10    
	\addplot[color=red,mark=*] coordinates {
(50, -8.93400016060089)
(100, -17.292337808078)
(150, -25.6501551609344)
(200, -34.007967886374)
(250, -42.3657805652942)
(300, -50.7235932437328)
(350, -59.0814059221663)
(400, -67.4392186006004)
(450, -75.7970312790341)
(500, -84.1548439574669)
(550, -92.5126566359009)
(600, -100.870469314334)
(650, -109.228281992768)
(700, -117.586094671201)
(750, -125.943907349634)
(800, -134.301720028069)
(850, -142.659532706502)
(900, -151.017345384936)
(950, -159.375158063369)
(1000, -167.732970741802)

};
%============= \alpha = 0.05    
	\addplot[color=blue,mark=*] coordinates {
(50, -14.2699842004544)
(100, -27.9093384792564)
(150, -41.5486518624197)
(200, -55.1879652167995)
(250, -68.8272785711567)
(300, -82.4665919255127)
(350, -96.10590527987)
(400, -109.745218634226)
(450, -123.384531988584)
(500, -137.023845342941)
(550, -150.663158697297)
(600, -164.302472051655)
(650, -177.94178540601)
(700, -191.581098760368)
(750, -205.220412114725)
(800, -218.859725469083)
(850, -232.499038823439)
(900, -246.138352177796)
(950, -259.777665532154)
(1000, -273.41697888651)

	};
  \legend{
  $\alpha = 0.40$, $\alpha = 0.35$, 
   $\alpha = 0.30$,$\alpha = 0.25$, $\alpha = 0.20$, 
   $\alpha = 0.15$,$\alpha = 0.10$,  $\alpha = 0.05$
  }
	\end{axis}%
\end{tikzpicture}%
\caption{The probabilities from Table~\ref{table:exact-probs}  
drawn in the base-$10$ logarithmic scale. 
}
\label{fig:exact-probs}
\end{figure}
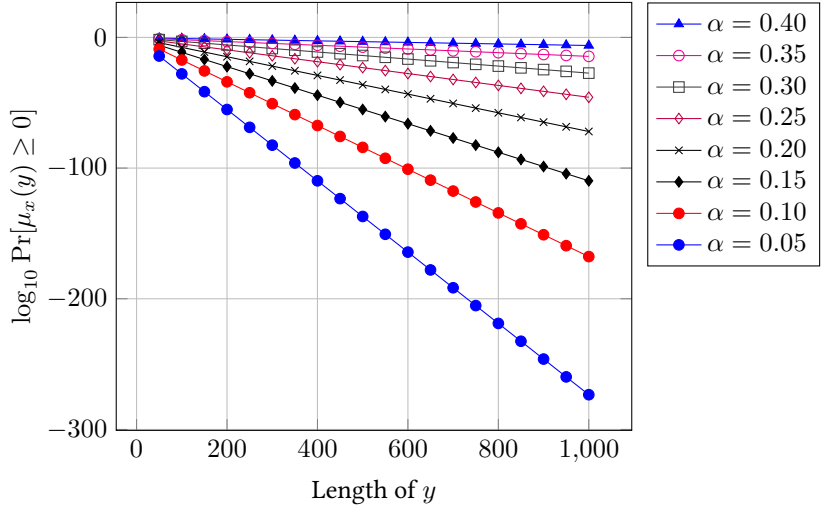

%%% Local Variables:
%%% mode: latex
%%% TeX-master: "main"
%%% End:

%\section{Stationary distribution for $\rho(x)$}
%\label{sec:dominance-app}
%\input{dominance-app}

\section{A forkability bound for strings satisfying the \texorpdfstring{$\epsilon$}{epsilon}-martingale condition}
\label{sec:martingale-proof}
%=======================================================
% \subsection{Proof of Bound~\ref{bound:geometric}}\label{sec:martingale-proof}
Below we present a bound (Bound~\ref{bound:geometric-original}) on the probability that 
a characteristic string satisfying the $\epsilon$-martingale condition has a non-negative relative margin. 
We remark that the bound below is weaker than Bound~\ref{bound:geometric}. 
Before we proceed, recall 
the following standard large deviation bound for supermartingales.
\begin{theorem}[Azuma's inequality (Azuma; Hoeffding). See {\cite[4.16]{Motwani:1995:RA:211390}} for a discussion]\label{thm:azuma}
  Let $X_0, \ldots, X_n$ be a sequence of real-valued random variables
  so that, for all $t$,
  $\Exp[X_{t+1} \mid X_0, \ldots, X_{t}] \leq X_t$ and
  $|X_{t+1} - X_t| \leq c$ for some constant $c$. Then  
  $
    \Pr[X_n - X_0 \geq \Lambda] \leq
    \exp\left(-{\Lambda^2}/{2nc^2}\right)
    %\,.
  $ 
  for every $\Lambda \geq 0$.
\end{theorem}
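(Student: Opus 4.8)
The plan is to use the exponential moment (Chernoff) method, reducing the tail bound to a one-step bound on the conditional moment generating function of each martingale-difference increment. First I would set $Y_t = X_t - X_{t-1}$ for $t = 1, \ldots, n$, so that $X_n - X_0 = \sum_{t=1}^{n} Y_t$, and translate the hypotheses into this language: $\Exp[Y_t \mid X_0, \ldots, X_{t-1}] \leq 0$ and $|Y_t| \leq c$ with probability one. For an arbitrary parameter $\lambda > 0$, Markov's inequality applied to the nonnegative random variable $e^{\lambda (X_n - X_0)}$ gives
\[
  \Pr[X_n - X_0 \geq \Lambda] \leq e^{-\lambda \Lambda}\, \Exp\!\left[ e^{\lambda (X_n - X_0)} \right].
\]
The main work is to control $\Exp[e^{\lambda \sum_t Y_t}]$ by conditioning on $X_0, \ldots, X_{n-1}$ and peeling off the last factor:
\[
  \Exp\!\left[ e^{\lambda \sum_{t=1}^{n} Y_t} \right]
    = \Exp\!\left[ e^{\lambda \sum_{t=1}^{n-1} Y_t}\, \Exp\!\left[ e^{\lambda Y_n} \,\middle|\, X_0, \ldots, X_{n-1} \right] \right].
\]

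The key lemma I would isolate is a one-sided Hoeffding bound: if $Z$ is a random variable with $\Exp[Z] \leq 0$ and $|Z| \leq c$, then $\Exp[e^{\lambda Z}] \leq e^{\lambda^2 c^2/2}$. To prove it, use convexity of $z \mapsto e^{\lambda z}$ on $[-c, c]$ to dominate it by the chord through the endpoints, $e^{\lambda z} \leq \frac{c - z}{2c} e^{-\lambda c} + \frac{c + z}{2c} e^{\lambda c}$; taking expectations, the resulting upper bound is an increasing affine function of $\Exp[Z]$ (since $e^{\lambda c} - e^{-\lambda c} > 0$), hence maximized under the constraint $\Exp[Z] \leq 0$ at $\Exp[Z] = 0$, where it equals $\cosh(\lambda c)$. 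Finishing with the elementary inequality $\cosh(x) \leq e^{x^2/2}$ (a termwise comparison of Taylor series) yields the claim. Applying this lemma with $Z = Y_n$ under the conditional law given $X_0, \ldots, X_{n-1}$ bounds the inner expectation above by $e^{\lambda^2 c^2/2}$; iterating the peeling step $n$ times gives $\Exp[e^{\lambda (X_n - X_0)}] \leq e^{n \lambda^2 c^2/2}$.

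Combining the two displays, $\Pr[X_n - X_0 \geq \Lambda] \leq \exp\!\left(-\lambda \Lambda + n \lambda^2 c^2 / 2\right)$ for every $\lambda > 0$; optimizing over $\lambda$ by taking $\lambda = \Lambda / (n c^2)$ produces exactly the stated bound $\exp\!\left(-\Lambda^2 / (2 n c^2)\right)$. The only mildly delicate point is the Hoeffding lemma itself, and within it the passage from the two-sided identity ($\Exp[Z] = 0$) familiar in textbook statements to the one-sided hypothesis ($\Exp[Z] \leq 0$) actually needed here for a supermartingale; this is handled by the monotonicity observation above. Everything else — the conditioning/peeling induction and the final optimization in $\lambda$ — is routine.
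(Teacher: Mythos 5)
Your proof is correct. The paper does not prove this statement at all—it is quoted as a standard result with a citation to Motwani--Raghavan—so there is no in-paper argument to compare against; what you give is the standard Chernoff--Hoeffding proof (Markov on the exponential moment, a conditional one-step Hoeffding lemma, peeling, and optimization of $\lambda$), and you correctly handle the one point where the textbook two-sided lemma needs adjusting, namely passing from $\Exp[Z]=0$ to $\Exp[Z]\leq 0$ via monotonicity of the chord bound, which is exactly what the supermartingale hypothesis requires. The only cosmetic remark is that the case $\Lambda=0$ (where the optimal $\lambda$ degenerates to $0$) is trivially true and worth a one-line disclaimer.
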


\begin{bound}\label{bound:geometric-original}
  Let $x \in \{0,1\}^m$ and $y \in \{0,1\}^k$ be random variables,
  satisfying the $\epsilon$-martingale condition (with respect to the ordering $x_1, \ldots, x_m, y_1, \ldots, y_k$). Then
  \[
    \Pr[\mu_x(y) \geq 0] \leq
    % \exp({-2\epsilon^4 (1 - O(\epsilon))n})
    3 \exp\left( -\epsilon^4 (1 - O(\epsilon) ) k/64 \right)  
    \, .
  \]
\end{bound}

%Now we are ready to prove Bound~\ref{bound:geometric-original}.

%\begin{proof}[of Bound~\ref{bound:geometric}]
\begin{proof}
  Let $w_1, w_2, \ldots$ be random variables obeying the
  $\epsilon$-martingale condition.  Specifically,
  $\Pr[w_t = 1 \mid E] \leq (1 - \epsilon)/2$ conditioned on any event
  $E$ expressed in the variables $w_1, \ldots, w_{t-1}$.  For
  convenience, define the associated $\{\pm1\}$-valued random
  variables $W_t = (-1)^{1+w_t}$ and observe that
  $\Exp[W_t] \leq -\epsilon$.

%\vspace{-2ex}
\paragraph{If $x$ is empty.}
Observe that in this case, the relative margin $\mu_x(y)$ reduces to 
the non-relative margin $\mu(y)$ from Lemma~\ref{lem:margin}. 
Since the sequence $y_1, y_2, \ldots$ in the statement of the claim 
is identical to the sequence $w_1, w_2, \ldots$ defined above, 
we focus on the reach and margin of the latter sequence. 
Specifically, define $\rho_t = \rho(w_1 \ldots w_t)$ and
$\mu_t = \mu(w_1 \ldots w_t)$ to be the two random variables from
Lemma~\ref{lem:margin} acting on the string $w=w_1 \ldots w_t$. The
analysis will rely on the ancillary random variables
$\overline{\mu}_t = \min(0,\mu_t)$.  Observe that $\Pr[\text{$w$
  forkable}] = \Pr[\mu(w) \geq 0] = \Pr[\overline{\mu}_k = 0]$, so we
may focus on the event that $\overline{\mu}_k = 0$. As an additional
preparatory step, define the constant
$\alpha = (1+\epsilon)/(2\epsilon) \geq 1$ and define the random
variables $\Phi_t \in \mathbb{R}$ by the inner product
  \[
    \Phi_t = (\rho_t, \overline{\mu}_t) \cdot
    \left(\begin{array}{c} 1\\ \alpha\end{array}\right) = \rho_t +
    \alpha \overline{\mu}_t\,.
  \]
  The $\Phi_t$ will act as a ``potential function'' in the analysis:
  we will establish that $\Phi_k < 0$ with high probability and,
  considering that
  $\alpha\overline{\mu}_k \leq \rho_k + \alpha \overline{\mu}_k =
  \Phi_k$, this implies $\overline{\mu}_k < 0$, as desired.
  
  Let $\Delta_t = \Phi_t - \Phi_{t-1}$; we claim that---conditioned
  on any fixed value $(\rho, \mu)$ for $(\rho_t, \mu_t)$---the
  random variable $\Delta_{t+1} \in [-(1 + \alpha),1+ \alpha]$ has
  expectation no more than $-\epsilon$. The analysis has four cases,
  depending on the various regimes of $\rho$ and $\mu$ from Lemma~\ref{lem:margin}. 
  When $\rho > 0$ and $\mu < 0$,
  $\rho_{t+1} = \rho + W_{t+1}$ and
  $\overline{\mu}_{t+1} = \overline{\mu} + W_{t+1}$, where
  $\overline{\mu} = \max(0,\mu)$; then
  $\Delta_{t+1} = (1 + \alpha)W_{t+1}$ and
  $\Exp[\Delta_{t+1} ] \leq -(1 + \alpha)\epsilon \leq -\epsilon$. When
  $\rho > 0$ and $\mu \geq 0$, $\rho_{t+1} = \rho + W_{t+1}$
  but $\overline{\mu}_{t+1} = \overline{\mu}$ so that
  $\Delta_{t+1} = W_{t+1}$ and $\Exp[\Delta_{t+1} ] \leq -\epsilon$. Similarly, when
  $\rho = 0$ and $\mu < 0$,
  $\overline{\mu}_{t+1} = \overline{\mu} + W_{t+1}$ while
  $\rho_{t+1} = \rho + \max(0, W_{t+1})$; we may compute
  \[
    \Exp[\Delta_{t+1} ] \leq \frac{1 - \epsilon}{2}(1 + \alpha) - \frac{1 +
      \epsilon}{2}\alpha = \frac{1 - \epsilon}{2} - \epsilon\alpha =
    \frac{1 - \epsilon}{2} - \epsilon\left(\frac{1}{\epsilon} \cdot
      \frac{1 + \epsilon}{2}\right) = -\epsilon\,.
  \]
  Finally, when $\rho = \mu = 0$ exactly one of the two random
  variables $\rho_{t+1}$ and $\overline{\mu}_{t+1}$ differs from
  zero: if $W_{t+1} = 1$ then
  $(\rho_{t+1}, \overline{\mu}_{t+1}) = (1,0)$; likewise, if
  $W_{t+1} = -1$ then
  $(\rho_{t+1}, \overline{\mu}_{t+1}) = (0,-1)$. It follows that
  \[
    \Exp[\Delta_{t+1} ] \leq \frac{1 - \epsilon}{2} - \frac{1 +
      \epsilon}{2}\alpha \leq -\epsilon\,.
  \]

  \noindent
  Thus $
  \Exp[\Phi_k] = \Exp \sum_{t=1}^k \Delta_t  
  \leq -\epsilon k
  $. 
  We wish to apply Azuma's inequality to conclude that
  $\Pr[\Phi_k \geq 0]$ is exponentially small. For this purpose, we
  transform the random variables $\Phi_t$ to a related supermartingale by
  shifting them: specifically, define
  $\tilde{\Phi}_t = \Phi_t + \epsilon t$ and
  $\tilde{\Delta}_t = \Delta_t + \epsilon$ so that
  $\tilde{\Phi}_t = \sum_i^t \tilde{\Delta}_t$. Then
  \[
    \Exp[\tilde{\Phi}_{t+1} \mid \tilde{\Phi}_1, \ldots,
    \tilde{\Phi}_{t}] = \Exp[\tilde{\Phi}_{t+1} \mid W_1, \ldots,
    W_{t}]\leq \tilde{\Phi}_t\,,
    \qquad
    \tilde{\Delta}_t \in [-(1 + \alpha) + \epsilon, 1+ \alpha +
    \epsilon]\,,
  \]
  and $\tilde{\Phi}_k = \Phi_k + \epsilon k$. It follows
  from Azuma's inequality that
  \begin{align}\label{eq:azuma-bound}
    \Pr[\text{$w$ forkable}] 
    &= \Pr[\overline{\mu}_k = 0] \leq \Pr[\Phi_k \geq 0] = \Pr[\tilde{\Phi}_k \geq \epsilon k] 
    \nonumber \\ 
    &\leq \exp\left(-\frac{\epsilon^2 k^2}{2k (1 + \alpha + \epsilon)^2}\right)
       = \exp\left(-\left(\frac{2 \epsilon^2}{1 + 3 \epsilon + 2\epsilon^2}\right)^2 \cdot \frac{k}{2}\right) \nonumber \\
    &\leq \exp\left(-\frac{2\epsilon^4}{1 + 35\epsilon} \cdot k\right)
    \,.
    %                  \qedhere
  \end{align}

\newcommand{\muxr}{\mu_x^{(r)}}
\newcommand{\Snr}{S_k^{(r)}}
\newcommand{\Sr}{S^{(r)}}
\newcommand{\Srstar}{S^{(r^*)}}
\newcommand{\event}[1]{\mathsf{#1}}
\newcommand{\notevent}[1]{\overline{\event{#1}}}

%\vspace{-2ex}
\paragraph{If $x$ is not empty.} 
In this case, we go back to study the sequences $x$ and $y$ as in the statement of the claim.
Recall the reach distribution (i.e., the distribution of the random variable $\rho(x)$) 
$\DistRho_m : \Z \rightarrow [0,1]$ from~\eqref{eq:dist-rho}. 
Since $x = (x_1, \ldots, x_m)$ satisfies the $\epsilon$-martingale condition, 
Lemma~\ref{lemma:rho-stationary} states that $\DistRho_m \dominatedby \StationaryRho$.
We reserve the symbol $\muxr$ for the relative margin 
random walk $\mu_x$ which starts at a non-negative initial position $r$. 
Thus $\rho(x) = \mu_x(\epsilon) = r$, and
\begin{align}\label{eq:azuma-generic}
\Pr[\mu_x(y) \geq 0] 
&= \sum_{r \geq 0}{\DistRho_m(r) \Pr[\muxr(y) \geq 0]} 
\leq \sum_{r \geq 0}{\StationaryRho(r) \Pr[\muxr(y) \geq 0]} 
\, 
\end{align}
since the sequence $( \, \Pr[\muxr(y) \geq 0] \, )_{r=0}^\infty$ is non-decreasing and $\DistRho_m \dominatedby \StationaryRho$. Fix a ``large enough'' positive integer $r^*$ whose value will be assigned later in the analysis. 
Let us define the following events:
 \begin{itemize}
  %\item Event $\event{A}_r$:~when $r > r^*$. 
  \item Event $\event{B}_r$:~it occurs when $r \in [0, r^*]$ and the $\muxr$ walk is strictly positive on every prefix of $y$ with length at most $k/2$; and 
  \item Event $\event{C}_{r,s}$:~it occurs when $r \in [0, r^*]$ and 
  $\hat{y}$ is the smallest prefix of $y$ of length $s \in [r, k/2]$ 
  such that $\muxr(\hat{y}) = 0$. 
  We say that $\hat{y}$ is a witnesses to the event $\event{C}_{r, s}$.
\end{itemize}
%Note that these two events cannot happen simultaneously. 
The right-hand side of~\eqref{eq:azuma-generic} can be written as
\begin{align*}
     &\quad \sum_{r>r^*}{\StationaryRho(r) \Pr[\muxr(y) \geq 0]} 
		+ \sum_{r \leq r^*}{\StationaryRho(r) \Pr[\event{B}_r] \cdot \Pr\left[\muxr(y) \geq 0 \mid \event{B}_r\right]} \\
    &\quad+ \sum_{r \leq r^*}{\StationaryRho(r) \sum_{s = r}^{k/2}{\Pr[\event{C}_{r,s}] \cdot \Pr[\muxr(y) \geq 0 \mid \event{C}_{r,s}]} }
    \, .
\end{align*} 
We observe that the probabilities $\Pr[\muxr(y) \geq 0]$ and $\Pr[\muxr(y) \geq 0 \mid \event{B}_r]$ are at most one. 
In addition, recall that for two non-negative sequences $(a_i), (b_i)$ of equal lengths, 
we have $\sum{a_i b_i} \leq \max b_i$ if $\sum{a_i} \leq 1$. 
Thus~\eqref{eq:azuma-generic} can be simplified as
\begin{align}\label{eq:three-terms}
\Pr[\mu_x(y) \geq 0] 
 &\leq 
    \sum_{r > r^*}{\StationaryRho(r)} 
  + \sum_{r \leq r^*}{\StationaryRho(r) \Pr[\event{B}_r]} \nonumber \\
  &\quad+ \sum_{r \leq r^*}{\StationaryRho(r)\, \max_{r \leq s \leq k/2}{\Pr[\muxr(y) \geq 0 \mid \event{C}_{r,s}]} }
  \nonumber \\
 &\leq    
      \sum_{r > r^*}{\StationaryRho(r)}            
  + 
      \max_{r \leq r^*}{\Pr[\event{B}_r]}          
  + 
      \max_{\substack{r \leq r^* \\ r \leq s \leq k/2}}{\Pr[\muxr(y) \geq 0 \mid \event{C}_{r,s}]}   
\, .
\end{align}

\emph{The first term in~\eqref{eq:three-terms} } is the right-tail of the distribution $\StationaryRho$. 
Using Lemma~\ref{lemma:rho-stationary}, 
this quantity is at most $\beta^{r^*}$ where $\beta := (1-\epsilon)/(1+\epsilon)$. 
Furthermore, it can be easily checked that the above quantity is at most $\exp(-5 \epsilon/3)$.

\emph{The second term in~\eqref{eq:three-terms} } concerns the event
$\event{B}_r$ and calls for more care.  Define
\[
  \Snr := \sum_{t=0}^k {W_t}
\]
where $W_0 = r$ and the random variables $W_t$ are defined at the
outset of this proof for $t \geq 1$.  We know that the $\muxr$ walk
starts with $\rho(x) = \mu(x) = r \geq 0$.  Since $\event{B}_r$ holds,
both the margin $\mu_x(\hat{y})$ and the reach $\rho(x\hat{y})$ remain
non-negative for all prefixes $\hat{y}$ of length
$t = 1, 2, \cdots, k/2$.  These two facts imply that the random
variable $\muxr(\hat{y})$ is identical to the sum $\Sr_t$ for all
prefixes $\hat{y}$ of length $t = 1, 2, \cdots, k/2$.

To be precise,
\[
  \Pr[\event{B}_r] = \Pr[\Sr_t \geq 0 \quad \text{for all } t \leq k/2]\,.
\]
The latter probability is at most $\Pr[\Sr_{k/2} \geq 0]$ because the
event $\Sr_{k/2} \geq 0$ does not constrain the intermediate sums
$\Sr_t$ for $t < k/2$.  Since $\Pr[\Sr_{k/2} \geq 0]$ increases
monotonically in $r$, we conclude that the second term
in~\eqref{eq:three-terms} is at most $\Pr[\Srstar_{k/2} \geq 0]$.  Now
we are free to shift our focus from the relative margin walk to the
sum of a martingale sequence.

For notational clarity, let us write $S := \Srstar_{k/2}$. 
Since the sequence $(w_t)$ obeys the $\epsilon$-martingale condition, 
$\Exp S$ is at most $M := r^* - k\epsilon/2$. 
Let us set $r^* = W_0 = k\epsilon/4$. Then $\Exp S$ is at most $-k\epsilon/4$ and Azuma's inequality gives us
\[
\Pr[S \geq 0] 
= \Pr[(S - \Exp S) \geq k\epsilon/4] 
\leq \exp\left( - \frac{(k\epsilon/4)^2}{2(k/2)\cdot 2^2}\right) 
= \exp\left( -\frac{k \epsilon^2}{64} \right)
\, .
\]
This is an upper bound on the second term in~\eqref{eq:three-terms}.

\emph{The third term in~\eqref{eq:three-terms}} concerns the event $\event{C}_{r,s}$ and it can be bounded using 
our existing analysis of the $|x|=0$ case. 
Specifically, suppose $y = \hat{y} z$ where
$\hat{y}$ is a witness to the event $\event{C}_{r,s}$. 
Since the $\muxr$ walk remains non-negative over the entire string $\hat{y}$, 
it follows that $\rho(x\hat{y}) = \mu(x\hat{y}) = 0$ 
and as a consequence, the $\mu_{x\hat{y}}$ walk on $z$ is identical to 
the $\mu$ walk on $z$. 
Our analysis in the $|x| = 0$ case suggests that 
$\Pr[\mu(z) \geq 0]$ is at most $A(k-s, \epsilon)$ 
where $|z| = k - s$ and $A(k, \epsilon)$ is the bound in~\eqref{eq:azuma-bound}. 
Since $A(\cdot,\epsilon)$ decreases monotonically in the first argument, 
$A(k-s, \epsilon)$ is at most $A(k/2, \epsilon)$. 
However, since the last quantity is independent of $r$, 
the third term in~\eqref{eq:three-terms} is at most 
$A(k/2, \epsilon) = \exp\left( -k \epsilon^4/(1+35\epsilon) \right)$.

Returning to~\eqref{eq:three-terms} and using $r^* = k\epsilon/4$, we get
\begin{align*}
\Pr[\mu_x(y) \geq 0] 
 &\leq    \exp\left(-\frac{5 \epsilon}{3} \cdot \frac{k\epsilon}{4} \right)  
        + \exp\left(-\frac{2\epsilon^4}{1 + 35\epsilon} \cdot \frac{n}{2}\right)
        + \exp\left( -\frac{k \epsilon^2}{64} \right)
\, .
\end{align*}
It is easy to check that the above quantity is at most
$
  3 \exp\left( - k \epsilon^4/(64 + 35 \epsilon) \right) 
= 3 \exp\left( - \epsilon^4 (1 - O(\epsilon) ) k/64 \right)
%\,.
$.

% $\qed$
\end{proof}

% \section{Figures}
% \label{sec:figures}
% \input{appendix_figures}

\end{document}